\documentclass[aps,pra,twocolumn,nofootinbib,superscriptaddress,groupedaddress]{revtex4}

\usepackage{graphicx}  
\usepackage{dcolumn}   
\usepackage{bm}        
\usepackage{amssymb}   

\usepackage{hhline}
\usepackage{scalefnt}
\usepackage{amsmath}
\usepackage{color}
\usepackage{xr}
\usepackage{cases}

\hyphenation{ALPGEN}
\hyphenation{EVTGEN}
\hyphenation{PYTHIA}
\hyphenation{Wein-furter}
\hyphenation{iso-me-try}
\hyphenation{or-tho-go-nal}
\hyphenation{co-lumn}

\usepackage{slashbox}
 \usepackage{enumitem}
 \usepackage{relsize}
%
%
%


\usepackage{xy}
\xyoption{matrix}
\xyoption{frame}
\xyoption{arrow}
\xyoption{arc}

\usepackage{ifpdf}
\ifpdf
\else
\PackageWarningNoLine{Qcircuit}{Qcircuit is loading in Postscript mode.  The Xy-pic options ps and dvips will be loaded.  If you wish to use other Postscript drivers for Xy-pic, you must modify the code in Qcircuit.tex}
\xyoption{ps}
\xyoption{dvips}
\fi

\entrymodifiers={!C\entrybox}

\newcommand{\qw}[1][-1]{\ar @{-} [0,#1]}
\newcommand{\qwx}[1][-1]{\ar @{-} [#1,0]}


\newcommand{\gate}[1]{*+<.6em>{#1} \POS ="i","i"+UR;"i"+UL **\dir{-};"i"+DL **\dir{-};"i"+DR **\dir{-};"i"+UR **\dir{-},"i" \qw}






\newcommand{\control}{*!<0em,.025em>-=-<.2em>{\bullet}}
\newcommand{\controlo}{*+<.01em>{\xy -<.095em>*\xycircle<.19em>{} \endxy}}
\newcommand{\ctrl}[1]{\control \qwx[#1] \qw}
\newcommand{\ctrlo}[1]{\controlo \qwx[#1] \qw}
\newcommand{\targ}{*+<.02em,.02em>{\xy ="i","i"-<.39em,0em>;"i"+<.39em,0em> **\dir{-}, "i"-<0em,.39em>;"i"+<0em,.39em> **\dir{-},"i"*\xycircle<.4em>{} \endxy} \qw}

\newcommand{\multigate}[2]{*+<1em,.9em>{\hphantom{#2}} \POS [0,0]="i",[0,0].[#1,0]="e",!C *{#2},"e"+UR;"e"+UL **\dir{-};"e"+DL **\dir{-};"e"+DR **\dir{-};"e"+UR **\dir{-},"i" \qw}
\newcommand{\ghost}[1]{*+<1em,.9em>{\hphantom{#1}} \qw}

\newcommand{\gategroup}[6]{\POS"#1,#2"."#3,#2"."#1,#4"."#3,#4"!C*+<#5>\frm{#6}}

\newcommand{\lstick}[1]{*!R!<.5em,0em>=<0em>{#1}}


\newcommand{\Qcircuit}{\xymatrix @*=<0em>}



\def\I {\mathrm{i}}

\newcommand{\ket}[1]{\left| #1 \right>} 
\newcommand{\bra}[1]{\left\langle #1 \right|} 
\newcommand{\braket}[2]{\left\langle#1|#2\right\rangle}
\newcommand{\ketbra}[2]{|#1\rangle\!\langle#2|}
\newcommand{\proj}[1]{|#1\rangle\!\langle#1|}
\newcommand{\id}{\leavevmode\hbox{\small1\normalsize\kern-.33em1}}
\newcommand{\tr}{\mathrm{tr}}
\newcommand{\nt}{{\sc not}}
\newcommand{\cnot}{{\sc C-not}}
\newcommand{\cnots}{\cnot s}
\newcommand{\ot}{\otimes}
\newcommand{\cH}{\mathcal{H}}
\newcommand{\cO}{\mathcal{O}}

\newtheorem{thm}{Theorem}
\newtheorem{lem}{Lemma}
\newtheorem{cor}{Corollary}
\newtheorem{rmk}{Remark}
\newenvironment{proof}[1][Proof]{\noindent\textbf{#1.} }{\ \rule{0.5em}{0.5em}}

\begin{document}

\title{Quantum Circuits for Isometries}
\author{Raban~Iten} \affiliation{ETH Z\"urich, 8093 Z\"urich, Switzerland (itenr@student.ethz.ch)}  
\author{Roger~Colbeck} \affiliation{Department of Mathematics, University of York, YO10 5DD, UK (roger.colbeck@york.ac.uk)}       
\author{Ivan~Kukuljan} \affiliation{University of Ljubljana, 1000 Ljubljana, Slovenia}         
\author{Jonathan~Home} \affiliation{Institute for Quantum Electronics, ETH Z\"urich, Otto-Stern-Weg 1, 8093 Z\"urich, Switzerland}         
\author{Matthias~Christandl} \affiliation{Department of Mathematical Sciences, University of Copenhagen,
  Universitetsparken 5, 2100 Copenhagen \O, Denmark}

\date{$9^{\text{th}}$ April 2020}

\begin{abstract}
  We consider the decomposition of arbitrary isometries into a sequence of single-qubit and Controlled-{\nt} (\cnot) gates.  In many experimental architectures, the \cnot{} gate is relatively `expensive' and hence we aim to keep the number of these as low as possible. We derive a theoretical lower bound on the number of \cnot{} gates required to decompose an arbitrary isometry from $m$ to $n$ qubits, and give three explicit gate decompositions that achieve this bound up to a factor of about two in the leading order.  We also perform some bespoke optimizations for certain cases where $m$ and $n$ are small.  In addition, we show how to apply our result for isometries to give decomposition schemes for arbitrary quantum operations and POVMs via Stinespring's theorem. These results will have an impact on experimental efforts to build a quantum computer, enabling them to go further with the same resources.
\end{abstract}

\maketitle

\begin{table*}[!t] 
\renewcommand{\arraystretch}{1.5}
\caption{Lowest known upper bounds and highest known lower bounds
  on the number of \cnot{} gates  required to decompose $m$ to
  $n$ isometries for  large $n$. For simplicity, all the counts are depicted to leading order.  As is to be expected, the number of
  required \cnot{} gates increases with $m$ (i.e., when fewer of the
  input qubits start in a fixed state).}
\label{tab:Results}
\centering
\begin{ruledtabular}
\begin{tabular}{llllll}

$m$& Lower Bound [LB] &Upper Bound [UB] &UB/LB &References for Upper bound  \\ \hline
$m=0$ (SP)&$\frac{1}{2}2^{n}$~\cite{3} &$\frac{23}{24}2^{n}$&$\simeq 1.9$&\cite{3}  ($n$ even), Rmk.~\ref{spodd} ($n$ odd)\\
$1\leqslant m \leqslant n-2 $ &$\frac{1}{2}2^{n+m}- 4^{m-1}$   &$2^{n+m}-\frac{1}{24}2^n$&$< 2.3$\footnote{If $ 1\leqslant m\leqslant n-5$ we have UB/LB$\,\lesssim 2$ (for large enough $n$).}&Eq.~(\ref{eq:Iso_CNOT}), (Theorem~\ref{thm_Knill})\footnote{In the case $5\leqslant m\leqslant n-2$ and even $n$, Theorem~\ref{thm_Knill} achieves a slightly lower \cnot{} count of  $\frac{23}{24}(2^{n+m}+2^n)$ to leading order.}\\
$m=n-1$ 	& $ \frac{3}{16} 4^n $& $ \frac{23}{64} 4^{n}$&$\simeq 1.9$&Eq.~(\ref{eq:CSD_count_2})\\
$m=n$ (Unitary)  & $\frac{1}{4}4^{n}$~\cite{unitary_lowerb1, unitary_lowerb2} &$ \frac{23}{48}4^{n}$&$\simeq 1.9$	&	\cite{2}	 \\ [0.04cm]										   
\end{tabular}
\end{ruledtabular}
\end{table*}

\section{Introduction}
Quantum computers would allow us to speed up several
important computations including search~\cite{grover1, grover2}, quantum
simulation~\cite{feynman} and factoring~\cite{shor}.  The ability
to do the latter would render RSA~\cite{RSA}, a widespread
cryptographic protocol, unfit for purpose.  However, constructing a
device capable of performing such computations is one of the biggest
challenges facing the field, and many candidate platforms remain in
their infancy, operating only with a few qubits at best.

In spite of this, the theory of quantum computation is quite advanced.
At an abstract level, a quantum computation corresponds to a unitary
operation, and a universal quantum computer should be able to perform
arbitrary unitary operations (each to very high precision). Rather
than having a different component for each unitary operation, it is
convenient to break down such operations in terms of a small family of
simple-to-perform gates.  This is the aim of the circuit model of
quantum computation, which mirrors an analogous model for classical
computation, in which an arbitrary computation can be decomposed in
terms of (for example) {\sc not}, {\sc and}, {\sc or} and \cnot{}
gates.  In the quantum case, several examples of universal gate
libraries are known (see for example~\cite{Buch}).  In this work we
focus on one involving arbitrary single-qubit operations and \cnot{}
gates. This gate set is universal for quantum computation in the sense
that an arbitrary $n$-qubit unitary can be decomposed in terms of
these gates alone~\cite{5} and is particularly well-suited to certain
architectures in which these operations are relatively straightforward
to implement. Of these operations, \cnot{} is often the most difficult
to perform since in all experimental architectures it involves
connecting the qubits using an additional degree of
freedom~\cite{WinelandBlatt2008,SuperconductingReview}. This provides
additional channels for the introduction of decoherence. The mediated
interaction also typically requires longer gate times, increasing
susceptibility to direct qubit decoherence. As an example, the
current lowest infidelities achieved experimentally are $< 10^{-6}$
for single-qubit gates~\cite{HartyPRL} and $\sim\!10^{-3}$ for two
qubit gates~\cite{Ballance}. Taking this as our motivation, we use the
number of \cnot{} gates required in a decomposition as a measure of
the complexity of a gate sequence and we consider circuits that
minimize the number of such gates.

This task has been previously considered both for arbitrary unitary
operations and for state preparation (see for example~\cite{2, 3} and
references therein).  In~\cite{2}, a decomposition scheme was found
for an arbitrary unitary operation on $n$ qubits that requires
$\frac{23}{48}4^n$ \cnots{} to leading order, approximately twice as
many as the best known lower bound~\cite{unitary_lowerb1,
  unitary_lowerb2}. Similarly, in order to prepare a state of $n$
qubits (starting from the state $\ket{0}^{\otimes n}$), the best known
construction requires $\frac{23}{24}2^n$ \cnots{} to leading order if
$n$ is even~\cite{3}, and $2^n$ to leading order if $n$ is
odd~\cite{10}, which is again approximately twice the best known lower
bound~\cite{3}.

State preparation and arbitrary unitaries are special cases of a wider
class of operations, \emph{isometries}.  An isometry is an
inner-product preserving transformation that maps between two Hilbert
spaces that in general have different dimensions.  Physically,
isometries can be thought of as the introduction of ancilla qubits in
a fixed state (conventionally $\ket{0}$) followed by a general unitary
on the system and ancilla qubits.  However, because its action only
has to be specified when the ancilla systems start in state $\ket{0}$,
there is a lot of freedom when constructing the general unitary.  This
freedom can be exploited to lower the number of \cnots{} needed with
respect to that of a general unitary.  In the special case where the
input and output spaces have the same dimensions, the isometry is a
unitary operation, while state preparation corresponds to an isometry
from a (trivial) one-dimensional space to that of the required output.
In this manuscript we consider the problem of synthesis of general
isometries from $m$ qubits to $n\geqslant m$ qubits.

\begin{table*}[!t] 
\renewcommand{\arraystretch}{1.5}
\caption{Overview of the number of \cnot{} gates required to decompose $m$ to
  $n$ isometries 
  using different decomposition schemes (NB: for small $n$ we have
  done some additional optimizations|see Table~\ref{Tab:small_cases}). Abbreviations used: $^a$Column-by-column
  decomposition of an isometry; $^b$Decomposition of an isometry using
  the Cosine-Sine Decomposition.}
\label{tab:Results2}
\centering
\begin{ruledtabular}
\begin{tabular}{lllll}

Method & \cnot{} count for an $m$ to $n$ isometry & References  \\ \hline
Knill (optimized) & $\frac{23}{24}(2^{m+n}+2^n)+\mathcal O\left(n^2 \right)2^m$ \text{ if $n$ is even} &Theorem~\ref{thm_Knill}\\
		   &$ \frac{115}{96}(2^{m+n}+2^n)+\mathcal O\left(n^2 \right)2^m$ \text{ if $n$ is odd}&Theorem~\ref{thm_Knill}\\
CCD$^a$ &$2^{m+n}-\frac{1}{24}2^n+  \mathcal O\left(n^2 \right)2^m$&Eq.~(\ref{eq:Iso_CNOT})\\
CSD$^b$ & $\frac{23}{144}\left(4^m+2\cdot4^n\right)+  \mathcal O\left(m\right)$&Eq.~(\ref{eq:CSD_count_2}) \\ [0.04cm]										   
\end{tabular}
\end{ruledtabular}
\end{table*}

This task was first considered by Knill~\cite{Knill}, whose
decomposition scheme is based on a decomposition scheme for state
preparation (and uses such a scheme as a black box). His decomposition
scheme together with the state preparation scheme of~\cite{10}
(or~\cite{3}) leads directly (without any optimizations) to an
decomposition of $m$ to $n$ isometries requiring about $2 \cdot
2^{m+n}$ \cnots{} to leading order.  However, this can be modified
(together with the decomposition scheme for state preparation
described in~\cite{3}) to achieve $2^{m+n}+2^n$ to leading order,
which is our first decomposition scheme.

We also introduce two others.  Our second scheme is a column-by-column
decomposition of an isometry that requires about $2^{m+n}$ \cnot{}
gates to leading order. This decomposition also performs well for
cases where $m$ and $n$ are small.  For our final scheme, we adapt the
decomposition of arbitrary unitaries~\cite{2} to isometries, leading
to a \cnot{} count of about $0.16\cdot \left(4^m+2\cdot4^n\right)$ to
leading order. 

To compare the quality of our schemes we give a theoretical lower
bound on the number of \cnot{} gates required to decompose arbitrary
isometries. These results are summarized in Tables~\ref{tab:Results}
and~\ref{tab:Results2}. As shown in Table~\ref{tab:Results}, for large
enough $n$, in the worst case our decomposition scheme uses roughly
$2.3$ times the number of \cnots{} required by the lower bound (the
worst-case being an $n-2$ to $n$ isometry). This is comparable to the
factor of $1.9$ already known in the special cases of state
preparation and of arbitrary unitary operations.

In addition, we optimize the \cnot{} counts for $m$ to $n\leqslant 4$
isometries in Appendix~\ref{small_cases_appendix} (see
Table~\ref{Tab:small_cases} for a summary). These are most likely to
be of practical relevance for experiments performed in the near
future.

\begin{table}[!t] 
\renewcommand{\arraystretch}{1.1}
\caption{Smallest known achievable \cnot{} counts for $m$ to $2\leqslant n\leqslant 4$ isometries. The counts for $n=m$ are as in~\cite{2}. The counts for state preparation ($m=0$) on two and three qubits are taken from~\cite{SP_three_qubits1}, and the count for state preparation on four qubits follows from the decomposition scheme described in Appendix~\ref{opt_SP_app}. The remaining cases are discussed in Appendix~\ref{small_cases_appendix}.  Note that the \cnot{} counts grow very fast. For example, any unitary on 10 qubits can be performed using about 500000 \cnot{} gates.}
\label{Tab:small_cases}
\centering
\begin{ruledtabular}
\begin{tabular}{ c|ccccccccccc}
\backslashbox{{\it n}}{{\it m}} &$0$&$1$&$2$&$3$&$4$  \\ \hline
$2$&$1$&$2$ &$3 $&$- $&$- $\\ 
$3$&$3$&$9$ &$14$&$20$&$- $\\ 
$4$&$8$&$22$&$54$&$73$&$100$\\ 
\end{tabular}
\end{ruledtabular}
\end{table}

The \cnot{} counts in Table~\ref{tab:Results},
Table~\ref{tab:Results2} and Table~\ref{Tab:small_cases} can be
directly used to upper bound the total number of gates needed for the
decomposition. Since each \cnot{} gate can introduce at most two
single-qubit gates into a quantum circuit without redundancy (cf.\
Section~\ref{sec:lower_bound} for similar arguments\footnote{Note
  that we count arbitrary single-qubit gates here (rather than gates
  that rotate about a fixed axis).}), the number of single-qubit gates
required for an isometry can be bounded by doubling the counts given
in the two tables and adding $n$, the number of qubits in
question.\\

Although we have ranked the decompositions in terms of gate counts
above, there may be other features of a given decomposition scheme
that make it preferable to another which may depend on the physical
setup. It is also interesting to note that our decomposition schemes
use others in a black box fashion (cf.\ Section~\ref{sec:results} for
more details), e.g., the decomposition scheme of Knill uses a scheme
for state preparation as a black box. An improvement in the
decomposition of the black box would therefore directly improve the
corresponding decomposition for an isometry, potentially altering the
ordering in terms of gate counts.

\section{Background information and notation}
We work in the circuit model of quantum computation in which the
fundamental information carriers are qubits.  A computational basis
state of the $2^n$-dimensional Hilbert space
$\mathcal{H}_{n}=\mathcal{H}_{1}^{\otimes n}$ of an $n$ qubit register
can be written as $\ket{b_{n-1}} \otimes \ket{b_{n-2}}\otimes \dots
\otimes \ket{b_{0}} $ or, in short notation, as $ \ket{b_{n-1}
  {b_{n-2}} \dots{b_{0}}}$, where $b_{i} \in \{0,1 \}$. To abbreviate
further we write $ \ket{b_{n-1} {b_{n-2}} \dots
  {b_{0}}}=\ket{\sum_{i=0}^{n-1}b_{i}2^{i}}_n$, i.e., we interpret the
bit string $b_{n-1} {b_{n-2}}\dots{b_{0}}$ as a binary number. If
$n=1$ we omit the subindex.  Thus,
$\ket{1}_3=\ket{001}=\ket{0}\ot\ket{0}\ot\ket{1}$, for example.

In the circuit model of quantum computation,
information carried in qubit wires is modified by quantum gates, which
correspond mathematically to unitary operations.  In particular, we
will use the following single-qubit gates:
\begin{eqnarray} \label{eq4}
	R_{x}(\theta)&=&\left(\begin{array}{cc} \cos [ \theta/2 ] &- \I
            \sin [\theta/2] \\  - \I \sin [\theta/2] & \cos
            [\theta/2]  \end{array}\right);\\
\label{eq5}
	R_{y}(\theta)&=&\left(\begin{array}{cc} \cos [\theta/2] & -\sin [\theta/2] \\   \sin [\theta/2] & \cos [\theta/2]  \end{array}\right);\\
\label{eq6}
	R_{z}(\theta)&=&\left(\begin{array}{cc} e^{-\I \theta/2 } & 0
            \\   0 & e^{\I \theta/2} \end{array}\right)\, ,
\end{eqnarray}
which correspond to rotations by angle $\theta$ about the $x$-, $y$-
and $z$-axes of the Bloch sphere.  One important special case is the
\nt{} gate, $\sigma_x=\I R_x(\pi)$ in terms of which the \cnot{} gate
can be written as $\ketbra{0}{0}\otimes I+\ketbra{1}{1} \otimes
\sigma_x$.
\begin{lem}[ZYZ decomposition]\label{ZYZ} 
  For every unitary operation $U$ acting on a single qubit, there
  exist real numbers $\alpha,\beta,\gamma$ and $ \delta$ such that
\begin{equation} \label{eq7}
	U=e^{\I\alpha}R_{z}(\beta)R_{y}(\gamma)R_{z}(\delta).
\end{equation}
\end{lem}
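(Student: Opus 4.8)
The plan is to peel off a global phase and then prove the claim on $SU(2)$ by an explicit comparison of matrix entries. Since $U$ is unitary we have $\lvert\det U\rvert=1$, so there is a real number $\alpha$ with $\det U=e^{2\I\alpha}$; setting $V:=e^{-\I\alpha}U$ gives $\det V=1$, i.e.\ $V\in SU(2)$. It therefore suffices to show that every $V\in SU(2)$ can be written as $R_z(\beta)R_y(\gamma)R_z(\delta)$ for suitable real $\beta,\gamma,\delta$, since then $U=e^{\I\alpha}R_z(\beta)R_y(\gamma)R_z(\delta)$.

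The next step is a direct computation. Writing $c:=\cos(\gamma/2)$ and $s:=\sin(\gamma/2)$, multiplying out the matrix forms of $R_y$ and $R_z$ given above yields
\begin{equation*}
R_z(\beta)R_y(\gamma)R_z(\delta)=\begin{pmatrix} e^{-\I(\beta+\delta)/2}\,c & -\,e^{-\I(\beta-\delta)/2}\,s\\ e^{\I(\beta-\delta)/2}\,s & e^{\I(\beta+\delta)/2}\,c\end{pmatrix}.
\end{equation*}
On the other hand, every $V\in SU(2)$ has the form $V=\begin{pmatrix} a & b\\ -\bar b & \bar a\end{pmatrix}$ with $\lvert a\rvert^2+\lvert b\rvert^2=1$, so it is enough to solve $a=e^{-\I(\beta+\delta)/2}c$ and $b=-e^{-\I(\beta-\delta)/2}s$ (the second row of $V$ is then matched automatically). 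I would first fix $\gamma\in[0,\pi]$ by requiring $c=\lvert a\rvert$ and $s=\lvert b\rvert$; this is possible and unique because $\lvert a\rvert$ and $\lvert b\rvert$ are nonnegative with $\lvert a\rvert^2+\lvert b\rvert^2=1$. If $a\neq0$ and $b\neq0$, comparing arguments then gives the two real linear equations $(\beta+\delta)/2\equiv-\arg a$ and $(\beta-\delta)/2\equiv\pi-\arg b\pmod{2\pi}$, which determine $\beta$ and $\delta$.

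It remains to treat the two boundary cases. If $b=0$ then $\lvert a\rvert=1$, so $s=0$, $\gamma=0$, and only $\beta+\delta$ is constrained (by $e^{-\I(\beta+\delta)/2}=a$); one may simply take $\delta=0$. Symmetrically, if $a=0$ then $s=1$, $\gamma=\pi$, only $\beta-\delta$ is constrained, and again $\delta=0$ works. This covers all of $SU(2)$ and finishes the proof. The only point requiring genuine care is precisely this degenerate regime, where the individual values of $\beta$ and $\delta$ are underdetermined; the rest is routine linear algebra and trigonometry, so I do not expect a real obstacle.
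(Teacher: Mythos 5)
Your proof is correct. The paper does not prove this lemma itself but simply cites Nielsen and Chuang, and your argument (factor out the global phase via the determinant, reduce to $SU(2)$, multiply out $R_z(\beta)R_y(\gamma)R_z(\delta)$, and match moduli and arguments of the entries, with the degenerate cases $a=0$ or $b=0$ handled separately) is essentially the standard proof given there; the matrix computation and the parameter matching both check out.
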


A proof of this decomposition can be found in~\cite{Buch}. Note that
(by symmetry) Lemma~\ref{ZYZ} holds for any two orthogonal rotation
axes. Lemma~\ref{ZYZ} shows that a single-qubit gate can be specified
by three real parameters neglecting the (physically insignificant)
global phase $e^{\I\alpha}$.  This is analogous to the description of
a rotation in 3-dimensions being parameterized in terms of three
\emph{Euler angles}, here $\beta$, $\gamma$ and $\delta$.
 
It is convenient to represent quantum circuits diagrammatically.  Each
qubit is represented by a wire and gates are shown using a variety of
symbols. Conventionally time flows from left to right. We will use the
concept of circuit topologies, as in~\cite{unitary_lowerb1,
  unitary_lowerb2}, throughout this paper. A general circuit topology
corresponds to a set of quantum circuits that have a particular
structure, but in which some gates may be free or have free
parameters. For example, Lemma~\ref{ZYZ} can be expressed as an
equivalence of two circuit topologies.
 
\[
\Qcircuit @C=1.0em @R=.46em {
  &\gate{U}&\qw&=&&\gate{R_z}&\gate{R_y}&\gate{R_z}&\qw }
\]

The general meaning of a circuit topology equivalence is the
following: for all possible values of the (free) parameters of the
circuit topology on the left hand side there exist values for the
parameters of the circuit topology on the right hand side such that
the two sides perform the same operation (up to a global phase). For
example, each of the $R_z$ gates in the above circuit represents a
$z$-rotation gate with unspecified angle. If we use symbols for
certain gates that have not been introduced before, they are
considered to be arbitrary quantum gates (these will often be denoted
by $U$).  If the same symbol is used as a placeholder for more than
one quantum gate, we mean that all gates are of this form, but the
gates themselves don't have to be identical (as in the previous
example where although $R_z$ appears twice on the right hand side,
each instance can have a different rotation angle).

\section{Lower bound} \label{sec:lower_bound}
First we derive a theoretical lower bound on the number of \cnot{}
gates required to decompose an isometry. For this purpose we use a
similar argument as that used to derive theoretical lower bounds for
general quantum gates~\cite{unitary_lowerb1, unitary_lowerb2} or for
state preparation~\cite{3}. Let $m$ and $n$ be natural numbers with $n
\geqslant 2$ and $m \leqslant n$. An $m$ to $n$ isometry can be
represented by a $2^n\times2^m$ complex matrix satisfying
$V^{\dagger}V=I_{2^m\times 2^m}$. Therefore such an isometry is
described by $2^{n+m+1}-2^{2m}-1$ real parameters, where the $-1$
accounts for the physically negligible global phase.

We can think of this isometry in terms of a unitary
operation on $n$ qubits, $n-m$ of which always start in a fixed state,
which we take to be $\ket{0}$\footnote{Note that additional ancilla qubits will not affect the lower bound. This can be seen by using the same arguments that we use in the derivation of the lower bound for quantum channels (see Section~\ref{sec:ACO}).}. Without any \cnots{}, all we can do is
apply single-qubit unitaries individually to each of these $n$ qubits.
Each such unitary introduces at most 3 parameters
(cf.~Lemma~\ref{ZYZ}). However, for the qubits that start in state
$\ket{0}$, only two parameters are introduced, since a qubit state is
fully specified by two real parameters. In order to introduce further
parameters, \cnot{} gates are required.

One might expect each \cnot{} gate to allow the introduction of six
real parameters by placing arbitrary single-qubit rotations after the
control and target. However, since $R_z$ gates commute with control
qubits, and $R_x$ gates with target qubits, we can introduce at most
four parameters for each additional \cnot{} gate~\cite{unitary_lowerb1,
  unitary_lowerb2}.  In essence we are using the following circuit identity

\[
\Qcircuit @C=0.8em @R=.46em {
 & \ctrl{2} &\gate{R_z}&\gate{R_y}&\gate{R_z}&\qw&&& &\gate{R_z}& \ctrl{2} &\gate{R_y}&\gate{R_z}&\qw   \\
& &&&&&&	 =&& \\
 &\targ &\gate{R_x}&\gate{R_y}&\gate{R_x}&\qw& &&  &\gate{R_x}&\targ &\gate{R_y}&\gate{R_x}&\qw	 \\
}
\]
which implies
\begin{equation}\label{comm_to_left}
\Qcircuit @C=0.8em @R=.46em {
&\gate{U} & \ctrl{2} &\gate{U}&\qw&&& &\gate{U}& \ctrl{2} &\gate{R_y}&\gate{R_z}&\qw   \\
& &&&&&	 =&& \\
&\gate{U} &\targ &\gate{U}&\qw& &&  &\gate{U}&\targ &\gate{R_y}&\gate{R_x}&\qw	 \\
}
\end{equation}

We conclude, that we can introduce at most $3m+2(n-m)+4r$ real
parameters using $r$ \cnot{} gates.

In order to be a valid circuit topology, i.e., one that can generate
every $m$ to $n$ isometry by an appropriate choice of its parameters,
the number of parameters introduced into the circuit by the
single-qubit rotations must exceed the number of parameters required
to specify an arbitrary $m$ to $n$ isometry.  Thus, the number of
\cnots{} required for such a circuit topology,
$N_{\mathrm{iso}}(m,n)$, must satisfy
$3m+2(n-m)+4N_{\mathrm{iso}}(m,n)\geqslant 2^{n+m+1}-2^{2m}-1$. From
this we obtain the following lower bound

\begin{equation}\label{lb}
N_{\mathrm{iso}}(m,n)\geqslant \frac{1}{4}\left(2^{n+m+1}-2^{2m}-2n-m-1\right).
\end{equation}

We remark that we can rephrase our result (by similar arguments as
used in~\cite{unitary_lowerb1, unitary_lowerb2}) as follows: almost
every $m$ to $n$ isometry cannot be decomposed into a quantum circuit
(comprising single-qubit unitaries and \cnots{}) with fewer than
$\lceil \frac{1}{4}\left(2^{n+m+1}-2^{2m}-2n-m-1\right) \rceil$
\cnot{} gates.  It is worth saying that the set of measure zero that
is excluded from this statement contains several interesting
isometries, for example that required for Shor's algorithm~\cite{shor}.
This lower bound provides a limitation on a \emph{universal} quantum
computer, rather than one tailored to a specific task.

\section{Decomposition schemes for isometries} \label{sec_dec_schemes}
Any isometry, $V$, from $m$ qubits to $n$ qubits can be described by a
$2^n\times2^m$ matrix.  This can instead be represented by a
$2^n\times2^n$ unitary matrix, $U$, by writing $V=U I_{2^n\times2^m}$,
where $I_{2^n\times2^m}$ denotes the first $2^m$ columns of the
$2^n\times2^n$ identity matrix.  Note that $U$ is not unique (unless
$m=n$).  Our aim is to find a decomposition of a quantum gate of the
form $U$ in terms of \cnots{} and single-qubit gates. We describe
three constructive decomposition schemes for arbitrary
isometries. This section focuses on the ideas behind these
decomposition schemes; the full technical details can be found in
Appendix~\ref{sec:TD}.  It is also worth noting that the proof of each
of these schemes can be seen as an alternative way to
prove the universality of the gate library containing single-qubit and
\cnot{} gates~\cite{5}.

\subsection{Notation for controlled gates}  \label{sec:controlled gates} 

We use $l$-qubit-$C^{\textnormal{u}}_{k}(U)$ to denote a gate that
performs a different $l$-qubit unitary for each possible state of $k$
\emph{control} qubits, where $U$ is a placeholder for a size $2^k$ set
of $2^l$-dimensional unitary operations.  We call an operation of this
type a uniformly controlled gate (UCG). These are also referred to as
``multiplexed gates'' by some authors, e.g.~\cite{2}.  If $l=1$ we
abbreviate the notation to $C^{\textnormal{u}}_{k}(U)$.  If we write
$R_x$, $R_y$ or $R_z$ instead of $U$, we mean that all the $2^k$
single-qubit gates that determine the UCG are of the form of the
corresponding rotation gate.

In order to write such gates out more precisely, we split the Hilbert
space of $n$ qubits into a $2^k$-dimensional space corresponding to
the control-qubits, a $2^l$-dimensional space corresponding to the
target-qubits and a $2^f$-dimensional space, where $f:=(n-l-k)$,
corresponds to the free qubits, i.e., the qubits we neither control
nor act on: $\mathcal{H}_{n}=\mathcal{H}_{k} \otimes \mathcal{H}_{l}
\otimes \mathcal{H}_{f}$.
 If $F$ is an $l$-qubit-$C^{\textnormal{u} }_{k}(U)$ gate, then it
 acts according to
\begin{equation} \label{eq12}
	F \left(\ket{i_{1}}_{k} \otimes \ket{i_{2}}_{l} \otimes   \ket{i_{3}}_{f}\right) \!  =\!\ket{i_{1}}_{k}\otimes (U_{i_{1}}\ket{i_{2}}_{l})\otimes  \ket{i_{3}}_{f},
\end{equation}
where $i_1 \in \{0,\dots,2^k-1\}$, $ i_2 \in \{0,\dots,2^l-1\}$, $i_3
\in \{0,\dots,2^{f}-1\}$ and $U_{i_1}$ denotes the quantum gate acting
on the target qubits if the control qubits are in the state
$\ket{i_1}_{k}$.  If each member of the set ${U_{i_1}}$ apart from one
(call this one $U_j$) are equal to the identity operation, we drop the
word ``uniformly'' and call such an operation a $k$-controlled
$l$-qubit gate, denoted by $l$-qubit-$C_{k}(U_{j})$, or more generally
a multi-controlled gate (MCG). If $l=1$ and we want to emphasize the
total number $n$ of qubits of the system being considered, we add an
$n$ as a second subindex, i.e. $C_{k}(U)$ becomes $C_{k,n}(U)$.

By way of example, the following circuit diagram shows a
$2$-qubit-$C^{\textnormal{u}}_{2}(U)$, $C_{3}(U)$ (or $C_{3,4}(U)$)
and $C_{2}(U)$ (or $C_{2,4}(U)$) gate in this order (from left to
right).
\[
\Qcircuit @C=2em @R=.3em {
&  \gate{} \qwx[1] \qw			    &\ctrlo{1} &\ctrl{1}	&\qw \\
& \gate{} \qwx[1] \qw 	 		    &\ctrl{1} &\ctrlo{1}	&\qw \\
& \multigate{1}{\mathsmaller{U}}  &  \ctrl{1} &    \gate{\mathsmaller{U}} &\qw  \\
& \ghost{\mathsmaller{U}}		    &    \gate{\mathsmaller{U}}   &\qw	&\qw  
}
\]
Note that the $C_{k}(U)$ notation does not specify which are the
control- and which are the target-qubits and whether we control on
$\ket{1}$ (filled circle)  or on $\ket{0}$ (unfilled circle); these must be made clear in the particular
context.

Each uniformly $k$-controlled gate can be decomposed into a
sequence of $2^{k}$ $k$-controlled gates, as should be clear from the
following example for the case $k=2$, $l=n-2$ and $n \geqslant3$.

\[
\Qcircuit @C=1.0em @R=.7em {
&&\qw& \gate{} \qwx[1] \qw  	&\qw& &&&\qw    &    \ctrlo{1} \qw &  \ctrlo{1}  \qw   &  \ctrl{1}  \qw    &  \ctrl{1}  \qw        &	 \qw \\
&& \qw& \gate{} \qwx[1] \qw  	&\qw &  = &&&\qw   &  \ctrlo{1}  \qw&\ctrl{1}  \qw      &     \ctrlo{1}  \qw  &    \ctrl{1}  \qw    &   \qw\\
&\lstick{l}  &  {\backslash}\qw &\gate{U}  & \qw    		&& &  \lstick{l} &  {\backslash}    \qw   & \gate{U_{0}}  & \gate{U_{1}}
&\gate{U_{2}}     & \gate{U_{3}}   &    \qw \\
}
\]
The symbol ``${\backslash}$"  stands for a data bus of several (in this case $l$)  qubits. Note that the UCG above has block structure $U_{0} \oplus U_{1}  \oplus  U_{2}  \oplus  U_{3}$.

\begin{rmk} 
In Table~\ref{tab:counts_overview} of
Appendix~\ref{sec:counts_overview} we give an overview of  \cnot{}
counts for some special controlled gates that are used for
decompositions arising in this paper.
\end{rmk}

\subsection{Decomposition of isometries using the decomposition scheme of Knill}\label{sec:Knill}

In this section we combine the decomposition scheme for isometries of
Knill~\cite{Knill} and the state preparation scheme described
in~\cite{3}. The main result is as follows.

\begin{thm}\label{thm_Knill}
  Let $m$ and $n$ be natural numbers with $n \geqslant 5$ and $m
  \leqslant n$ and $V$ be an $m$ to $n$ isometry. There exists a
  decomposition of $V$ in terms of single-qubit gates and \cnots{} such
  that the number of \cnot{} gates required satisfies\footnote{A more precise count for this decomposition can be obtained by replacing
    $\cO(n^2)$ by $16n^2-60n+42$ (similarly in the two equations below).}
\begin{align}
N_{\mathrm{iso}}(m,n)\leqslant(2^{m}&+1)(N_U(\lfloor n/2 \rfloor)+N_U(\lceil n/2 \rceil)) \nonumber  \\
&+2^{m+1}N_{SP}(\lfloor n/2 \rfloor)+\cO\left(n^2\right)2^m, 
\label{eq:Iso_Knill_CNOT}
\end{align}
where $N_U(n)$ denotes the number of \cnot{} gates required for an arbitrary unitary on $n$ qubits. Using the best known \cnot{} counts for unitaries and state preparation (cf.\ Table~\ref{tab:Results} and Remark~\ref{spodd}) this leads to 
 \setlength{\arraycolsep}{0.0em}
\begin{eqnarray*}
N_{\mathrm{iso}}(m,n)&{} \leqslant {}& \tfrac{23}{24}(2^{m+n}+2^n)+\tfrac{23}{12}2^{m+\frac{n}{2}}-2^{m+\frac{n}{4}+2}\\&&\quad+\cO\left(n^2 \right)2^m \text{ if $n$ is even},  \\
N_{\mathrm{iso}}(m,n)&{} \leqslant {}& \tfrac{115}{96}(2^{m+n}+2^n)+\tfrac{23}{12}2^{m+\frac{n-1}{2}}-2^{m+\frac{n-1}{4}+2}\\&&\quad+\cO\left(n^2 \right)2^m \text{ if $n$ is odd}.  \nonumber
\end{eqnarray*}
\setlength{\arraycolsep}{5pt}
\end{thm}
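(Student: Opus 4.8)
The plan is to realise $V$ by Knill's column‑by‑column decomposition~\cite{Knill}, with the internal state‑preparation subroutine supplied by the scheme of~\cite{3}, and to propagate the \cnot{} count through the composition. Write $V=UI_{2^n\times2^m}$ as in Section~\ref{sec_dec_schemes}, so that the orthonormal columns $\ket{\psi_0},\dots,\ket{\psi_{2^m-1}}$ are the images $U\ket{j}_n$ for $j<2^m$, and split the $n$ output qubits into a register $A$ of $\lceil n/2 \rceil$ qubits and a register $B$ of $\lfloor n/2 \rfloor$ qubits. The engine of the scheme is that a single column can be built from computational‑basis form via the Schmidt decomposition $\ket{\psi_j}=\sum_k s_k^{(j)}\ket{e_k^{(j)}}_A\otimes\ket{f_k^{(j)}}_B$ across the $A|B$ cut, whose rank is at most $2^{\lfloor n/2 \rfloor}$: a state preparation on $B$ produces $\sum_k s_k^{(j)}\ket{k}_B$; $\lfloor n/2 \rfloor$ \cnot{} gates copy the Schmidt index into $A$; an arbitrary unitary on $A$ sends $\ket{k}_A\mapsto\ket{e_k^{(j)}}_A$; an arbitrary unitary on $B$ sends $\ket{k}_B\mapsto\ket{f_k^{(j)}}_B$; and a short list of multi‑/uniformly‑controlled single‑qubit gates confines the operation to the correct computational‑basis branch of the input register and decouples the columns already processed (this last part accounts for a second state preparation on $B$ per column). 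Running this over the $2^m$ columns, plus one extra pass for the first column and the closing clean‑up, produces a circuit whose black‑box content is exactly that appearing in~(\ref{eq:Iso_Knill_CNOT}): $2^m+1$ arbitrary unitaries on each of $A$ and $B$, together with $2^{m+1}$ state preparations on $B$.

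The next step is to bound the remaining overhead. Every gate produced by the bookkeeping is a uniformly‑ or multi‑controlled single‑qubit gate with at most $n$ controls; each costs $\cO(n)$ \cnot{} gates by the standard Gray‑code construction for uniformly controlled gates and the ancilla‑free constructions for multi‑controlled gates (cf.\ the \cnot{} count table of the appendix and~\cite{2}), $\cO(n)$ of them occur per column, and the index‑copying step contributes a further $\cO(n)$ \cnots{} per column. Hence the overhead beyond the black‑box calls is $\cO(n^2)2^m$; a precise count gives the value $16n^2-60n+42$ per factor of $2^m$ quoted in the footnote, which establishes~(\ref{eq:Iso_Knill_CNOT}).

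It then remains to insert the best known counts $N_U(k)=\tfrac{23}{48}4^k-\tfrac32 2^k+\tfrac43$ from~\cite{2} and $N_{SP}(k)=\tfrac{23}{24}2^k+\cO(k)$ for even $k$ from~\cite{3} (the odd case, larger only by a bounded factor~\cite{10}, enters just the subleading $N_{SP}$ term). For $n$ even one has $\lceil n/2 \rceil=\lfloor n/2 \rfloor=n/2$, so the leading $\tfrac{23}{48}4^k$ terms give $(2^m+1)(N_U(n/2)+N_U(n/2))=\tfrac{23}{24}(2^{m+n}+2^n)$ to leading order, and the remaining contributions — the $-\tfrac32 2^k$ corrections to $N_U$ and the term $2^{m+1}N_{SP}(n/2)$ — are of order $2^{m+n/2}$ and in fact combine with a negative sign, since $\tfrac{23}{24}<\tfrac32$; absorbing these (which is legitimate, being $\leqslant0$ for large $n$) and the $\cO(n^2)2^m$ overhead yields the first bound. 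For $n$ odd, $\{\lceil n/2 \rceil,\lfloor n/2 \rfloor\}=\{\tfrac{n+1}2,\tfrac{n-1}2\}$ and $N_U(\tfrac{n+1}2)+N_U(\tfrac{n-1}2)=\tfrac{23}{48}(2^{n+1}+2^{n-1})=\tfrac{23}{48}\cdot5\cdot2^{n-1}=\tfrac{115}{96}2^n$ to leading order, so the same computation gives $N_{\mathrm{iso}}(m,n)\leqslant\tfrac{115}{96}(2^{m+n}+2^n)+\cO(n^2)2^m$. The hypothesis $n\geqslant5$ guarantees each half‑register has at least two qubits, so the quoted asymptotic leading coefficients of $N_U$ and $N_{SP}$ genuinely apply; smaller $n$ is handled by the bespoke optimizations.

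The main obstacle lies in the overhead analysis together with the cancellation just used. One has to check that \emph{no} controlled operation in Knill's reduction ever acts with more than $\cO(n)$ effective controls or on more than half a register's worth of target qubits — a single uniformly‑$\Theta(n/2)$‑controlled gate on $\Theta(n/2)$ targets would by itself inflate the leading term from $\cO(2^{m+n})$ — and one has to verify that the $\Theta(2^{m+n/2})$‑scale contributions from the $-\tfrac32 2^k$ parts of $N_U$ and from $2^{m+1}N_{SP}(\lfloor n/2 \rfloor)$ indeed combine non‑positively, so that the loose bound $\tfrac{23}{24}(2^{m+n}+2^n)+\cO(n^2)2^m$ is valid as stated. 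A further wrinkle is the regime $m>\lfloor n/2 \rfloor$, where the input register does not fit inside $B$: there the surplus input qubits are absorbed into the larger half $A$ and one checks the same counts survive, the endpoint $m=n$ reducing to an arbitrary $n$‑qubit unitary.
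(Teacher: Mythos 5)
Your overall route is the same as the paper's: represent $V$ via Knill's identity
$U=V_{2^m-1}C_{n-1}(P(\theta_{2^m-1}))V_{2^m-1}^{\dagger}\dots V_{0}C_{n-1}(P(\theta_{0}))V_{0}^{\dagger}$, where $V_i$ prepares the $i$th column, and supply each $V_i$ with the Schmidt-based half/half state-preparation scheme of~\cite{3} (SP on $\lfloor n/2\rfloor$ qubits, a \cnot{} ladder, one arbitrary unitary on each half). Your leading-order arithmetic for the even and odd cases is correct, and your check that the $\Theta(2^{m+n/2})$ contributions from $2^{m+1}N_{SP}(\lfloor n/2\rfloor)$ and the $-\tfrac32 2^{n/2}$ corrections to $N_U$ combine non-positively is a worthwhile detail that the paper glosses over.

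The genuine gap is in how you arrive at the coefficient $2^m+1$ in front of $N_U(\lfloor n/2\rfloor)+N_U(\lceil n/2\rceil)$. Knill's formula contains $2\cdot 2^m$ instances of a state-preparation unitary or its inverse, and each instance carries one arbitrary unitary on each half-register; read off naively, this gives $2^{m+1}$ such unitaries per register, i.e.\ a leading term of $\tfrac{23}{12}2^{m+n}$ rather than $\tfrac{23}{24}2^{m+n}$. The paper's reduction to $2^m+1$ comes from merging the trailing unitaries of $V_i^{\dagger}$ with the leading unitaries of $V_{i+1}$ inside each adjacent product $V_{i+1}^{\dagger}V_i$ (setting $\tilde U_1=U_3U_1$, $\tilde U_2=U_4U_2$), which works because these single-register gates meet with nothing between them; the $C_{n-1}(P(\theta))$ gates sit inside the conjugations, not between them. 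Your per-column accounting asserts one arbitrary unitary per half-register per column, but the construction you describe does not deliver that without this merging step, and you never make it. A secondary, harmless inaccuracy: an ancilla-free $C_{n-1}(U)$ for a general single-qubit $U$ (and $P(\theta)\notin SU(2)$) costs $\Theta(n^2)$ \cnots{}, not $\cO(n)$; the paper has exactly one such gate per column, which is where the $16n^2-60n+42$ in the footnote comes from, whereas your ``$\cO(n)$ gates each costing $\cO(n)$'' reaches the same $\cO(n^2)2^m$ total by a bookkeeping that does not match the actual circuit.
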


\begin{rmk}
  For large $n$, the last three terms in~\eqref{eq:Iso_Knill_CNOT} are
  negligible. The leading order for this scheme is therefore derived
  from that of a unitary on $n/2$ qubits.
\end{rmk}

Consider a set of unitary operations $\{V_i\}_{i=0}^{2^m-1}$ such that
$V_i\ket{0}=V\ket{i}$, i.e., $V_i$ is a unitary for state preparation
on the state corresponding to the $i$th column of $V$.  In the proof
of Theorem~3.1 of~\cite{Knill} it is shown that 

\begin{equation}\label{eq:Knill_Dec}
U=V_{2^m-1}C_{n-1}(P({\theta_{2^{m-1}}}))V_{2^m-1}^{\dagger}\dots
V_{0}C_{n-1}(P({\theta_{0}}))V_{0}^{\dagger}\, ,
\end{equation}
where the gate $P(\theta):=e^{\I \theta }\proj{0}+\proj{1}$.  Consider
decomposing each $V_i$ using the (reverse of the) decomposition
scheme for state preparation described in~\cite{3}.
This leads to a circuit containing $2^m-1$ instances of the following
circuit diagram (shown in the case, where $n$ is even), each
corresponding to a unitary of the form $V_{i+1}^{\dagger}V_i$.

\[
\Qcircuit @C=0.45em @R=0.38em {
&&\qw&\qw&\multigate{3}{SP}&\qw&\ctrl{5}&\qw&\qw&\qw&\multigate{3}{U_1}&\multigate{3}{U_3}&\qw&\qw&\ctrl{5}&\multigate{3}{SP^{\dagger}}&\qw\\
&&\vdots&&&&&\ddots&&&&&&\reflectbox{$\ddots$}\\
&&&&&&&&&&&&&\\
&&\qw&\qw&\ghost{SP}&\qw&\qw&\qw&\ctrl{5}&\qw&\ghost{U_1}&\ghost{U_3}&\ctrl{5}&\qw&\qw&\ghost{SP^{\dagger}}&\qw\\
&&&&&&&&&&&&&\\
&&\qw&\qw&\qw&\qw&\targ&\qw&\qw&\qw&\multigate{3}{U_2}&\multigate{3}{U_4}&\qw&\qw&\targ&\qw&\qw\\
&&\vdots&&&&&\ddots&&&&&&\reflectbox{$\ddots$}\\
&&&&&&&&&&&&&\\
&&\qw&\qw&\qw&\qw&\qw&\qw&\targ&\qw&\ghost{U_2}&\ghost{U_4}&\targ&\qw&\qw&\qw&\qw
}
\]

We can merge the unitaries and define $\tilde{U}_1:=U_3U_1$ and $\tilde{U}_2:=U_4U_2$.

\[
\Qcircuit @C=0.45em @R=0.38em {
&&\qw&\qw&\multigate{3}{SP}&\qw&\ctrl{5}&\qw&\qw&\qw&\multigate{3}{\tilde{U}_1}&\qw&\qw&\ctrl{5}&\multigate{3}{SP^{\dagger}}&\qw\\
&&\vdots&&&&&\ddots&&&&&\reflectbox{$\ddots$}\\
&&&&&&&&&&&&&\\
&&\qw&\qw&\ghost{SP}&\qw&\qw&\qw&\ctrl{5}&\qw&\ghost{\tilde{U}_1}&\ctrl{5}&\qw&\qw&\ghost{SP^{\dagger}}&\qw\\
&&&&&&&&&&&&&\\
&&\qw&\qw&\qw&\qw&\targ&\qw&\qw&\qw&\multigate{3}{\tilde{U}_2}&\qw&\qw&\targ&\qw&\qw\\
&&\vdots&&&&&\ddots&&&&&\reflectbox{$\ddots$}\\
&&&&&&&&&&&&&\\
&&\qw&\qw&\qw&\qw&\qw&\qw&\targ&\qw&\ghost{\tilde{U}_2}&\targ&\qw&\qw&\qw&\qw
}
\]

We decompose all the terms of the form $V_{i+1}^{\dagger}V_i$ in
equation~(\ref{eq:Knill_Dec}) in this way. The gate $V_{2^m-1}$ and
$V_0^{\dagger}$ can also be decomposed using the (reversed)
decomposition scheme for state preparation described in~\cite{3}.  The
$C_{n-1}(P(\theta_i))$ gates are special cases of $C_{n-1}(U)$
gates. Hence, each can be decomposed into $16n^2-60n+42$ \cnot{} gates
(see Lemma~\ref{ThmC2}). This leads to the claimed \cnot{} count given
in equation~(\ref{eq:Iso_Knill_CNOT}).

\begin{figure*}[!t] 
\centering

 \renewcommand{\arraystretch}{0.5}
\begin{equation*}
 \mathsmaller{\ket{\psi_0^0}} = \begin{array}{c@{\!\!\!}l}
  \left[ \begin{array}[c]{ccccc} 
     \mathsmaller{\ast}\\
     \mathsmaller{\ast}\\
  \hline
    \mathsmaller{\ast}\\
      \mathsmaller{\ast}\\
    \hline
     \mathsmaller{\ast}\\
     \mathsmaller{\ast}\\
    \hline
    \mathsmaller{\ast}\\
      \mathsmaller{\ast}\\
    \hline
     \mathsmaller{\ast}\\
     \mathsmaller{\ast}\\
    \hline
    \mathsmaller{\ast} \\
      \mathsmaller{\ast}\\
    \hline
     \mathsmaller{\ast}\\
     \mathsmaller{\ast}\\
    \hline
    \mathsmaller{\ast}\\
      \mathsmaller{\ast}\\
  \end{array}  \right]
\end{array}
\phantom{a}
 \xrightarrow{ C_{3}^{\textnormal{u}}(U_{0,0}^{\textnormal{u}}) } 
 \begin{array}{c@{\!\!\!}l}
   \left[ \begin{array}[c]{ccccc} 
     \mathsmaller{\ast}\\
     \mathsmaller{0}\\
    \mathsmaller{\ast}\\
     \mathsmaller{0}\\
    \hline
     \mathsmaller{\ast}\\
    \mathsmaller{0}\\
    \mathsmaller{\ast}\\
     \mathsmaller{0}\\
    \hline
     \mathsmaller{\ast}\\
    \mathsmaller{0}\\
    \mathsmaller{\ast}\\
     \mathsmaller{0}\\
    \hline
     \mathsmaller{\ast}\\
    \mathsmaller{0}\\
    \mathsmaller{\ast}\\
     \mathsmaller{0}\\
\end{array}  \right]
\end{array}
\phantom{a}
 \xrightarrow{C_{2}^{\textnormal{u}}(U_{0,1}^{\textnormal{u}})}  
 \begin{array}{c@{\!\!\!}l}
   \left[ \begin{array}[c]{ccccc} 
     \mathsmaller{\ast}\\
    \mathsmaller{0}\\
   \mathsmaller{0}\\
     \mathsmaller{0}\\
     \mathsmaller{\ast}\\
   \mathsmaller{0}\\
  \mathsmaller{0}\\
     \mathsmaller{0}\\
    \hline
     \mathsmaller{\ast}\\
    \mathsmaller{0}\\
   \mathsmaller{0}\\
     \mathsmaller{0}\\
     \mathsmaller{\ast}\\
    \mathsmaller{0}\\
   \mathsmaller{0}\\
     \mathsmaller{0}\\
\end{array}  \right]
\end{array}
\phantom{a}
 \xrightarrow{C_{1}^{\textnormal{u}}(U_{0,2}^{\textnormal{u}})}  
 \begin{array}{c@{\!\!\!}l}
   \left[ \begin{array}[c]{ccccc} 
     \mathsmaller{\ast}\\
    \mathsmaller{0}\\
   \mathsmaller{0}\\
     \mathsmaller{0}\\
    \mathsmaller{0}\\
 \mathsmaller{0}\\
     \mathsmaller{0}\\
    \mathsmaller{0}\\
         \mathsmaller{\ast}\\
    \mathsmaller{0}\\
   \mathsmaller{0}\\
     \mathsmaller{0}\\
    \mathsmaller{0}\\
 \mathsmaller{0}\\
     \mathsmaller{0}\\
    \mathsmaller{0}\\
\end{array}  \right]
\end{array}
\phantom{a}
 \xrightarrow{C_{0}^{\textnormal{u}}(U_{0,3}^{\textnormal{u}})}  
\begin{array}{c@{\!\!\!}l}
  \left[ \begin{array}[c]{ccccc} 
      \mathsmaller{1}\\
   \mathsmaller{0}\\
     \mathsmaller{0}\\
    \mathsmaller{0}\\
 \mathsmaller{0}\\
     \mathsmaller{0}\\
    \mathsmaller{0}\\
       \mathsmaller{0}\\
    \mathsmaller{0}\\
   \mathsmaller{0}\\
     \mathsmaller{0}\\
    \mathsmaller{0}\\
 \mathsmaller{0}\\
     \mathsmaller{0}\\
    \mathsmaller{0}\\
        \mathsmaller{0}\\
         \end{array}  \right]
  \end{array}
\end{equation*}

\[
\Qcircuit @C=5em @R=.05em {
&& \gate{} \qwx[1] \qw 		&	 \gate{} \qwx[1] \qw 					&	\gate{} \qwx[1] \qw			&\gate{U^{\textnormal{u}}_{0,3}}		 &\qw			\\
&& \gate{} \qwx[1] \qw			&	 \gate{} \qwx[1] \qw  					&	\gate{U^{\textnormal{u}}_{0,2}}	 &\qw 					&\qw 		\\
&& \gate{} \qwx[1]  \qw 		&	\gate{U^{\textnormal{u}}_{0,1}} \qw&	\qw 						&\qw 					&\qw					\\
&& \gate{U^{\textnormal{u}}_{0,0}}  \qw &\qw						&	\qw						&\qw 					&\qw 						 \\
}
\]

\caption{Implementing the first column of an isometry $V$ from
  $m\geqslant0$ qubits to $n=4$ qubits. The action of $G_0$ on
  $\ket{{\psi_0^0}}:=V \ket{0}_m$ can be decomposed into operators
  $\{G_0^i\}_{i \in {0,1,2,3}}$, where
  $G_0^i:=C_{3-i}^{\textnormal{u}}(U_{0,i}^{\textnormal{u}})$.  The
  upper part shows how these gates successively zero the entries of
  the column, while the lower part gives the circuit representation.
  The inverse of this decomposition scheme was introduced in~\cite{10}
  for state preparation together with an efficient decomposition of
  the uniformly controlled gates $G_0^{i}$ into \cnots{} and
  single-qubit gates. The symbol ``$\ast$" denotes an arbitrary
  complex number.}
\label{fig:coloumn1}
\end{figure*}

 \renewcommand{\arraystretch}{1}

\subsection{Column-by-column decomposition}\label{sec:isounif}
In this section we introduce a circuit topology corresponding to a
column-by-column decomposition of an arbitrary isometry, i.e., we
decompose any isometry into single-qubit and \cnot{} gates proceeding
one column at a time.

\begin{thm}\label{thm_main}
  Let $m$ and $n$ be natural numbers with $n \geqslant 2$ and $m
  \leqslant n$ and $V$ be an $m$ to $n$ isometry. There exists a
  decomposition of $V$ in terms of single-qubit gates and \cnots{}
  such that the number of \cnot{} gates required satisfies
$$N_{\mathrm{iso}}(m,n)\leqslant 2^{m} (\Sigma_{s=0}^{n-1} N_{ \Delta C^{ \textnormal{u} }_{n-1-s}})+  \mathcal O\left(n^2 \right)2^m,$$
where $N_{\Delta C^{\textnormal{u}}_{n-1-s}}$ denotes the number of
\cnot{} gates required to decompose a $C^{\textnormal{u}}_{n-1-s}(U)$
gate up to a diagonal gate $\Delta$, i.e., to decompose the two gates
together, where the $C^{\textnormal{u}}_{n-1-s}(U)$ gate is determined
but we are free to choose the diagonal gate $\Delta$. Together with
the best known decomposition scheme for UCGs (up to diagonal
gates)~\cite{10} this leads to
$$N_{\mathrm{iso}}(m,n)\leqslant 2^{m+n}+  \mathcal O\left(n^2 \right)2^m.$$
\end{thm}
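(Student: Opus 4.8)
The plan is to disentangle $V$ one column at a time, exactly as in Figure~\ref{fig:coloumn1} but now applied to each of its $2^m$ columns in turn. Write $V$ as the first $2^m$ columns of a $2^n\times 2^n$ unitary. Setting $V_0:=V$ and $V_{k+1}:=G_kV_k$, I would choose unitaries $G_0,G_1,\dots,G_{2^m-1}$ on $n$ qubits so that $G_k$ maps the $k$-th column of $V_k$ to $\ket{k}_n$ while fixing $\ket{0}_n,\dots,\ket{k-1}_n$; then $V_{2^m}=I_{2^n\times2^m}$ and $V=G_0^{\dagger}G_1^{\dagger}\cdots G_{2^m-1}^{\dagger}\,I_{2^n\times2^m}$ is the desired circuit. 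Such a $G_k$ exists for trivial reasons: the $k$-th column of $V_k$ is a unit vector orthogonal to the images $\ket{0}_n,\dots,\ket{k-1}_n$ of the previously disentangled columns, hence it lies in $\mathrm{span}\{\ket{k}_n,\dots,\ket{2^n-1}_n\}$, and any unit vector there can be rotated onto $\ket{k}_n$ by a unitary acting as the identity on $\mathrm{span}\{\ket{0}_n,\dots,\ket{k-1}_n\}$. The content of the proof is therefore (i) an explicit gate decomposition of each $G_k$ meeting the claimed count, and (ii) the bookkeeping of the deferred diagonal gates.

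For (i) I would decompose each $G_k$ into the same layered sequence of uniformly controlled gates that Figure~\ref{fig:coloumn1} uses for the first column: a $C^{\textnormal{u}}_{n-1}(U)$ on qubit $0$, then a $C^{\textnormal{u}}_{n-2}(U)$ on qubit $1$, down to a single-qubit gate on qubit $n-1$, each layer zeroing half of the still-nonzero amplitudes. For $k=0$ this is exactly the inverse of the state-preparation scheme of~\cite{10}, and — implementing each uniformly controlled gate only up to a diagonal — it uses $\sum_{s=0}^{n-1}N_{\Delta C^{\textnormal{u}}_{n-1-s}}$ \cnots{}. For $k\geqslant1$ the two modifications are: the components of each uniformly controlled gate that act entirely inside $\mathrm{span}\{\ket{0}_n,\dots,\ket{k-1}_n\}$ are set to the identity (legitimate because the current column already vanishes on those basis states), which can only lower the count; and some extra multi-controlled gates $C_{k'}(U)$ are inserted to move the amplitude that the first few layers leave stranded on $\ket{1}_n,\dots,\ket{k-1}_n$ over to $\ket{k}_n$ and to fix the relative phase of the column. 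By Lemma~\ref{ThmC2} these cost $\mathcal{O}(n^2)$ \cnots{} per column, hence $\mathcal{O}(n^2)2^m$ in total. Summing the layers then gives $N_{\mathrm{iso}}(m,n)\leqslant 2^m\sum_{s=0}^{n-1}N_{\Delta C^{\textnormal{u}}_{n-1-s}}+\mathcal{O}(n^2)2^m$, and since $\sum_{s=0}^{n-1}N_{\Delta C^{\textnormal{u}}_{n-1-s}}=2^{n}+\mathcal{O}(1)$ with the uniformly-controlled-gate decomposition of~\cite{10}, this is $2^{m+n}+\mathcal{O}(n^2)2^m$.

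For (ii) I would follow~\cite{10} and merge the diagonal gate left by each layer into the next uniformly controlled gate — the part diagonal on the target qubit is absorbed for free, while the part diagonal on the control qubits commutes onwards — so that only one residual diagonal survives per column; pushed to the input side of the isometry, where the relevant space is only $2^m$-dimensional, these cost $\mathcal{O}(2^m)$ \cnots{} altogether, within the error term. The step I expect to be the main obstacle is precisely this interplay between the ``up to diagonal'' accounting and the column-by-column structure: one must check that commuting and merging the deferred diagonals never undoes the disentangling already achieved for columns $0,\dots,k-1$ (true because a diagonal sends every basis state to a multiple of itself), and that the stranded-amplitude clean-up gates can genuinely be chosen so that each $G_k$ fixes $\ket{0}_n,\dots,\ket{k-1}_n$ exactly, not merely up to phases. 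Writing the layered construction, the clean-up gates, and the exact \cnot{} counts out in full is left to Appendix~\ref{sec:TD}.
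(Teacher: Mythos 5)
Your construction is the same as the paper's: a column-by-column reduction $G_{2^m-1}\cdots G_0V=I_{2^n\times 2^m}$ in which each $G_k$ is a layered sequence of uniformly controlled gates whose blocks are set to the identity on $\mathrm{span}\{\ket{0}_n,\dots,\ket{k-1}_n\}$, supplemented by corrective multi-controlled gates for the entries those identity blocks leave behind, with the UCG diagonals deferred and cleaned up by one diagonal gate on the $m$ input qubits at the end. All of that matches Appendix~\ref{sec:app_isounif}, and your existence argument for $G_k$ and your treatment of the deferred diagonals (phases on the fixed basis states are harmless and are absorbed into the final $m$-qubit diagonal) are sound.

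The one genuine gap is the accounting for the corrective MCGs. You need up to one such gate \emph{per layer}, not per column: for column $k$ a correction is required at every layer $s$ with $k_s=0$ and $b^k_{s+1}\neq 0$, which for $k=1$ is $n-1$ layers (cf.\ Fig.~\ref{fig:coloumn2} and Corollary~\ref{cor_total_number_mcg}, which counts roughly $n\cdot 2^m$ MCGs in total). Pricing each of these via Theorem~\ref{ThmC2} at $\mathcal O(n^2)$ \cnots{} therefore gives $\mathcal O(n^3)$ per column and $\mathcal O(n^3)2^m$ overall, which does not establish the stated $\mathcal O(n^2)2^m$ error term. The fix, which the paper relies on, is that Lemma~\ref{rotate_to_basis_state} lets you take every corrective gate to be $C_{n-1}(W)$ with $W\in SU(2)$, for which Theorem~\ref{ThmC} gives an $\mathcal O(n)$ \cnot{} count per gate; only then does summing over the $\mathcal O(n)$ layers yield $\mathcal O(n^2)$ per column. (This does not affect the leading term $2^{m+n}$, but it is needed for the theorem as stated.)
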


We defer a rigorous proof of the theorem to
Appendix~\ref{sec:app_isounif}, and instead use this section to
explain the main ideas behind the argument. Our proof is constructive,
and the exact \cnot{} count is given in equation~(\ref{eq:Iso_CNOT}).

As before, we represent the $m$ to $n$ isometry $V$ by a
$2^n\times2^n$ unitary matrix, here $G^{\dagger}$, by writing
$V=G^{\dagger}I_{2^n\times2^m}$.  Since a \cnot{} gate is inverse to
itself and the inverse of a single-qubit unitary is another
single-qubit unitary, searching for a decomposition scheme for
$G^{\dagger}$ is equivalent to searching for a decomposition of a
unitary operation $G$ satisfying $GV=I_{2^n\times2^m}$.

In essence, the idea is to find a sequence of unitary operations that
when applied to $V$ successively bring it closer to
$I_{2^n\times2^m}$.  We will do this in a column by column fashion,
first choosing a sequence of quantum gates, corresponding to a unitary
$G_0$ that gets the first column right, i.e., 
$G_0V\ket{0}_m=I_{2^n\times2^m}\ket{0}_m=\ket{0}_n$, we then use $G_1$
to get the second column right without affecting the first, i.e.,
$G_1G_0V\ket{1}_m=I_{2^n\times2^m}\ket{1}_m=\ket{1}_n$ and
$G_1G_0V\ket{0}_m=G_1\ket{0}_n=\ket{0}_n$, and so on (up to the $2^m$th column).
In other words, $G_k$ gets the $(k+1)$th column right and acts
trivially on the first $k$ columns of $I_{2^n\times2^m}$.

The gate $G_0$ can be decomposed into single-qubit and \cnot{} gates
by reversing a decomposition scheme for the preparation of a state
(applied to $V\ket{0}_m$).  It is natural to imagine
repeating this construction for each column in turn.  However, without
further modification, this procedure doesn't work since the action
required for the decomposition of later columns affects those that
have already been done. In other words, if we construct a unitary
$\tilde{G}_1$ again by reversing a decomposition scheme for state
preparation, we can obtain $\tilde{G}_1G_0V\ket{1}_m=\ket{1}_n$, but, in
general, $\tilde{G}_1G_0V\ket{0}_m \neq \ket{0}_n$. We therefore introduce a modified technique that takes this
into account while only slightly increasing the number of \cnot{}
gates needed over that required for state preparation on each
column. This technique develops an idea used for state preparation using
uniformly controlled gates~\cite{10}.

 \begin{lem} \label{rotate_to_basis_state} Let
   $\ket{\psi'}\in\mathcal{H}_{1}$ and define $r$ such that
   $\braket{\psi'}{\psi'}=r^2$. There exist $U_0,U_1 \in SU(2)$, such
   that
\begin{eqnarray} \label{eq9}
	U_0\ket{\psi'}&=&r \ket{0},\\
\label{eq10}
	U_1 \ket{\psi'}&=&r \ket{1} .
\end{eqnarray}
\end{lem}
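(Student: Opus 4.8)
The plan is to exhibit the two required special unitaries explicitly as $2\times2$ matrices and then verify, by a short direct computation, that they are unitary, have determinant $1$, and act as claimed. First I would expand $\ket{\psi'}=a\ket{0}+b\ket{1}$ in the computational basis, so that $\braket{\psi'}{\psi'}=|a|^2+|b|^2$, and remove the sign ambiguity in the definition of $r$ by taking $r=\sqrt{|a|^2+|b|^2}\geqslant0$ (if one insists on the other sign, replace each $U_j$ by $-U_j$, which stays in $SU(2)$). The degenerate case $\ket{\psi'}=0$ (equivalently $r=0$) is trivial, since then $U_0\ket{\psi'}=0=r\ket{0}$ and $U_1\ket{\psi'}=0=r\ket{1}$ for \emph{any} $U_0,U_1\in SU(2)$; so from now on I assume $r>0$.

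The candidates are
\[
U_0=\frac{1}{r}\begin{pmatrix}\bar a & \bar b\\ -b & a\end{pmatrix},\qquad
U_1=R_y(\pi)\,U_0=\frac{1}{r}\begin{pmatrix} b & -a\\ \bar a & \bar b\end{pmatrix},
\]
where $R_y(\pi)=\left(\begin{smallmatrix}0 & -1\\ 1 & 0\end{smallmatrix}\right)\in SU(2)$. Three things then need checking. \emph{(i) Action:} a direct computation gives $U_0\ket{\psi'}=\tfrac{1}{r}\big((|a|^2+|b|^2)\ket{0}+(ab-ba)\ket{1}\big)=r\ket{0}$, and hence $U_1\ket{\psi'}=R_y(\pi)\,r\ket{0}=r\ket{1}$ (this can also be read off directly from the matrix of $U_1$). \emph{(ii) Unitarity:} the two rows of the matrix of $U_0$ each have norm $\sqrt{|a|^2+|b|^2}/r=1$ and are mutually orthogonal, so $U_0$ is unitary; since $R_y(\pi)$ is unitary, so is $U_1$. \emph{(iii) Determinant:} $\det U_0=\tfrac{1}{r^2}\big(\bar a\,a-\bar b(-b)\big)=\tfrac{1}{r^2}(|a|^2+|b|^2)=1$, and $\det U_1=\det R_y(\pi)\,\det U_0=1$. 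Hence $U_0,U_1\in SU(2)$ with the required action, which proves the lemma.

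The statement is elementary and I do not expect any genuine obstacle; the one point worth flagging is the determinant condition. The naive construction---complete $\ket{\psi'}/r$ to an orthonormal basis and map it onto the computational basis---produces a unitary whose determinant is \emph{some} unimodular number, which one then has to correct by an extra phase on the image of the vector orthogonal to $\ket{\psi'}$. Writing the matrices out as above instead fixes the correct relative phase from the outset, so $\det=1$ holds automatically, and it makes the relation $U_1=R_y(\pi)U_0$ manifest. (Alternatively one could invoke Lemma~\ref{ZYZ}, but the explicit form is both shorter and more informative.)
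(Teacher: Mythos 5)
Your proof is correct and is essentially the paper's own argument: your explicit matrix $U_0=\frac{1}{r}\left(\begin{smallmatrix}\bar a & \bar b\\ -b & a\end{smallmatrix}\right)$ is exactly the operator $\ketbra{0}{\psi}+\ketbra{1}{\phi}$ with $\ket{\phi}=-\braket{\psi}{1}\ket{0}+\braket{\psi}{0}\ket{1}$ that the paper constructs, just written in components. The only additions are cosmetic but harmless: you handle the degenerate case $r=0$ explicitly and spell out $U_1=R_y(\pi)U_0$ where the paper says ``analogously.''
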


\begin{proof} 
  Define $\ket{\psi}=\frac{1}{r}\ket{\psi'}$ and $\ket{\phi}=-\braket{\psi}{1}\ket{0}+\braket{\psi}{0}\ket{1}\in\cH_1$. Then
  $U_0=\ketbra{0}{\psi}+\ketbra{1}{\phi}$ is unitary with $\det U_0=1$ and
  obeys equation~\eqref{eq9}. $U_1$ can be obtained
  analogously.
\end{proof}

\begin{figure*}[!t] 
\centering
 \renewcommand{\arraystretch}{0.5}

\begin{equation*}
  \mathsmaller{\ket{{\psi_1^0}}} = \begin{array}{c@{\!\!\!}l}
  \left[ \begin{array}[c]{ccccc} 
     \mathsmaller{0}\\
     \mathsmaller{\ast}\\
  \hline
    \mathsmaller{\ast}\\
      \mathsmaller{\ast}\\
    \hline
     \mathsmaller{\ast}\\
     \mathsmaller{\ast}\\
    \hline
    \mathsmaller{\ast}\\
      \mathsmaller{\ast}\\
    \hline
     \mathsmaller{\ast}\\
     \mathsmaller{\ast}\\
    \hline
    \mathsmaller{\ast} \\
      \mathsmaller{\ast}\\
    \hline
     \mathsmaller{\ast}\\
     \mathsmaller{\ast}\\
    \hline
    \mathsmaller{\ast}\\
      \mathsmaller{\ast}\\
  \end{array}  \right]
\end{array}
\phantom{a}
 \xrightarrow{ C_{3}^{\textnormal{u}}(U_{1,0}^{\textnormal{u}}) }  
 \begin{array}{c@{\!\!\!}l}
   \left[ \begin{array}[c]{ccccc} 
     \mathsmaller{0}\\
    \mathsmaller{\ast}\\
     \mathsmaller{0}\\
     \mathsmaller{\ast}\\
    \hline
    \mathsmaller{0}\\
    \mathsmaller{\ast}\\
     \mathsmaller{0}\\
       \mathsmaller{\ast}\\
    \hline
    \mathsmaller{0}\\
    \mathsmaller{\ast}\\
     \mathsmaller{0}\\
          \mathsmaller{\ast}\\
    \hline
    \mathsmaller{0}\\
     \mathsmaller{\ast}\\
    \mathsmaller{0}\\
    \mathsmaller{\ast}\\
\end{array}  \right]
\end{array}
\phantom{a}
 \xrightarrow{C^{\textnormal{u}}_{2}(U_{1,1}^{\textnormal{u}}) }  
 \begin{array}{c@{\!\!\!}l}
   \left[ \begin{array}[c]{ccccc} 
    \mathsmaller{0}\\
    \mathsmaller{\ast}\\
     \mathsmaller{0}\\
      \mathsmaller{\ast}\\
      \hline
     \mathsmaller{0}\\
   \mathsmaller{\ast}\\
  \mathsmaller{0}\\
     \mathsmaller{0}\\
    \hline
     \mathsmaller{0}\\
    \mathsmaller{\ast}\\
   \mathsmaller{0}\\
     \mathsmaller{0}\\
      \hline
     \mathsmaller{0}\\
    \mathsmaller{\ast}\\
   \mathsmaller{0}\\
     \mathsmaller{0}\\
\end{array}  \right]
\end{array}
\phantom{a}
 \xrightarrow{C_{3}(U_{1,1})}  
 \begin{array}{c@{\!\!\!}l}
   \left[ \begin{array}[c]{ccccc} 
      \mathsmaller{0}\\
    \mathsmaller{\ast}\\
     \mathsmaller{0}\\
      \mathsmaller{0}\\
     \mathsmaller{0}\\
   \mathsmaller{\ast}\\
  \mathsmaller{0}\\
     \mathsmaller{0}\\
    \hline
     \mathsmaller{0}\\
    \mathsmaller{\ast}\\
   \mathsmaller{0}\\
     \mathsmaller{0}\\
     \mathsmaller{0}\\
    \mathsmaller{\ast}\\
   \mathsmaller{0}\\
     \mathsmaller{0}\\
\end{array}  \right]
\end{array}
\phantom{a}
 \xrightarrow{ C^{\textnormal{u}}_{1}(U_{1,2}^{\textnormal{u}}) }  
 \begin{array}{c@{\!\!\!}l}
   \left[ \begin{array}[c]{ccccc} 
     \mathsmaller{0}\\
    \mathsmaller{\ast}\\
   \mathsmaller{0}\\
     \mathsmaller{0}\\
    \mathsmaller{0}\\
 \mathsmaller{\ast}\\
     \mathsmaller{0}\\
    \mathsmaller{0}\\
    \hline
         \mathsmaller{0}\\
    \mathsmaller{\ast}\\
   \mathsmaller{0}\\
     \mathsmaller{0}\\
    \mathsmaller{0}\\
 \mathsmaller{0}\\
     \mathsmaller{0}\\
    \mathsmaller{0}\\
\end{array}  \right]
\end{array}
\phantom{a}
 \xrightarrow{C_{3}(U_{1,2}) }  
 \begin{array}{c@{\!\!\!}l}
   \left[ \begin{array}[c]{ccccc} 
      \mathsmaller{0}\\
    \mathsmaller{\ast}\\
   \mathsmaller{0}\\
     \mathsmaller{0}\\
    \mathsmaller{0}\\
 \mathsmaller{0}\\
     \mathsmaller{0}\\
    \mathsmaller{0}\\
         \mathsmaller{0}\\
    \mathsmaller{\ast}\\
   \mathsmaller{0}\\
     \mathsmaller{0}\\
    \mathsmaller{0}\\
 \mathsmaller{0}\\
     \mathsmaller{0}\\
    \mathsmaller{0}\\

\end{array}  \right]
\end{array}
\phantom{a}
 \xrightarrow{C_{3}(U_{1,3}) }  
\begin{array}{c@{\!\!\!}l}
  \left[ \begin{array}[c]{ccccc} 
      \mathsmaller{0}\\
   \mathsmaller{1}\\
     \mathsmaller{0}\\
    \mathsmaller{0}\\
 \mathsmaller{0}\\
     \mathsmaller{0}\\
    \mathsmaller{0}\\
       \mathsmaller{0}\\
    \mathsmaller{0}\\
   \mathsmaller{0}\\
     \mathsmaller{0}\\
    \mathsmaller{0}\\
 \mathsmaller{0}\\
     \mathsmaller{0}\\
    \mathsmaller{0}\\
        \mathsmaller{0}\\
         \end{array}  \right]
  \end{array}
\end{equation*}

\[
\Qcircuit @C=4.75em @R=.05em {
&& \gate{} \qwx[1] \qw 			&	 \gate{} \qwx[1] \qw 					&\ctrlo{1} \qw					&\gate{} \qwx[1] \qw			&\ctrlo{1} \qw					 	&\gate{U_{1,3}} \qw &\qw			\\
&& \gate{} \qwx[1] \qw			&	 \gate{} \qwx[1] \qw  					&\ctrlo{1} \qw					&\gate{U^{\textnormal{u}}_{1,2}}	&\gate{U_{1,2}} \qw 						&\ctrlo{-1} \qw&\qw 		\\
&& \gate{} \qwx[1]  \qw 		&	\gate{U^{\textnormal{u}}_{1,1}} \qw&\gate{U_{1,1}} \qw	&\qw 					&\ctrlo{-1} \qw										&\ctrlo{-1} \qw&\qw					\\
&& \gate{U^{\textnormal{u}}_{1,0}} \qw &\qw						&\ctrl{-1} 						&\qw						&\ctrl{-1} \qw									&\ctrl{-1} \qw&\qw 						 \\
}
\]
\caption{Implementing the second column of an isometry $V$ from
  $m\geqslant 1$ qubits to $n=4$ qubits. The operation of $G_1$ on
  $\ket{{\psi_1^0}}:=G_0V\ket{1}_m$ can be decomposed into operators
  $\{G_1^{i}\}_{i \in {0,1,2,3}}$, where
  $G_1^0=C_{3}^{\textnormal{u}}(U_{1,0}^{\textnormal{u}})$,
  $G_1^{1}=C_{3}(U_{1,1})
  C_{2}^{\textnormal{u}}(U_{1,1}^{\textnormal{u}})$,
  $G_1^{2}=C_{3}(U_{1,2})
  C_{1}^{\textnormal{u}}(U_{1,2}^{\textnormal{u}})$ and
  $G_1^{3}=C_{3}(U_{1,3})$. Note that all these gates act trivially on
  $\ket{0}_n$. The symbol ``$\ast$" denotes an arbitrary complex
  number.}
\label{fig:coloumn2}
\end{figure*}

\renewcommand{\arraystretch}{1}

As noted above, the unitary operation $G_0$ can be decomposed using the reverse of the
decomposition scheme for state preparation as described
in~\cite{10}. First we act with a UCG
$G_0^{0}=C_{n-1}^{\textnormal{u}}(U_{0,0}^{\textnormal{u}})$ on the
least significant qubit. The gate $G_0^{0}$ has a $2 \times 2$ block
diagonal structure. Using Lemma~\ref{rotate_to_basis_state} we can
construct $G_0^{0}$ such that it zeroes every second entry of
$\ket{{\psi_0^0}}:=V\ket{0}_m$ (see Fig.~\ref{fig:coloumn1}). This
corresponds to disentangling (i.e., rotating to product form) the
least significant qubit, so we can write
$G_0^{0}\ket{{\psi_0^0}}=\ket{{\psi_0^1}}\otimes\ket{{0}}$ for some
state $\ket{{\psi_0^1}}\in\cH_{n-1}$. Now we apply the same procedure
to $\ket{{\psi_0^1}}$ leaving the least significant qubit
invariant. We act with
$G_0^{1}:=C_{n-2}^{\textnormal{u}}(U_{0,1}^{\textnormal{u}}) $, which
corresponds to conditionally rotating the second least significant
qubit, leading to
$G_0^{1}G_0^{0}\ket{{\psi_0^0}}=\ket{{\psi_0^2}}\otimes\ket{{0}}\otimes\ket{{0}}$,
for some $\ket{{\psi_0^2}}\in\cH_{n-2}$.  We continue in this fashion
until all the qubits have been disentangled.  Thus we have constructed a
quantum gate $G_0:=G_0^{n-1} G_0^{n-2}\dots G_0^{0}$ such that
$G_0\ket{{\psi_0^0}}=\ket{0}_n$\footnote{Note that $G_0^\dagger$ is a
  circuit for preparing the state $\ket{\psi^0_0}$; in this sense we
  have performed the inverse of state preparation.}.

In the following we describe how to construct a unitary $G_1$ setting
the second column of $G_0V$ to $(0,1,0,\ldots,0)$ without affecting
the first column.  We construct
$G_1^0=C_{n-1}^{\textnormal{u}}(U_{1,0}^{\textnormal{u}})$ choosing
the unitary operations such that the first entry of each pair becomes
zero (see Fig.~\ref{fig:coloumn2}).  In other words, defining
$\ket{{\psi_1^0}}:=G_0V\ket{1}_m$ we have
$G_1^{0}\ket{{\psi_1^0}}=\ket{{\psi_1^1}}\otimes\ket{1}$, for some
state $\ket{{\psi_1^1}}$. Note that, by construction, the first column of $G_0V$ in matrix form is
$(1,0,\ldots,0)$, and, since $G_0$ is unitary, the first row also has
the form $(1,0,\ldots,0)$.  Hence the first entry of
$\ket{{\psi_1^0}}$ is already 0 and we can set
the upper most $2\times 2$ block of the uniformly controlled gate
$G_1^0$, i.e. the block acting on the states $\ket{0}_n$ and
$\ket{1}_n$, to the identity.  Therefore we can perform this step
without affecting the first column, i.e. $G_1^0 G_0V\ket{0}_m= G_1^0
\ket{0}_n=\ket{0}_n$.  The next step would be to do the same to
$\ket{{\psi_1^1}}$ (i.e., zero every second entry). Doing so using a
$C^{\textnormal{u}}_{n-2}(U)$ gate would, in general, have a
non-trivial effect on the basis state $\ket{0}_n$. Therefore we modify
the procedure and instead use a $C^{\textnormal{u}}_{n-2}(U)$ gate to
zero every second entry except that in the upper most double block of
$\ket{{\psi_1^1}}$ or equivalently that in the upper most block of
four elements of $G_1^0 \ket{{\psi_1^0}}$. We subsequently correct for
this using an additional MCG acting on the second least significant
qubit, i.e., we set $G_1^{1}=C_{n-1}(U_{1,1})
C_{n-2}^{\textnormal{u}}(U_{1,1}^{\textnormal{u}})$. With this
additional MCG we can directly address the quantum states
corresponding to the two non zero entries in the upper-most
four-element block. Indeed, controlling on $\ket{0} \otimes \ket{0}
\otimes \dots \otimes \ket{0}$ on the first $(n-2)$ qubits and on
$\ket{1}$ on the least significant qubit we can zero the second non
zero entry of the upper-most four-element block without affecting
$\ket{0}_n$.

We conclude that
$G_1^{1}G_1^{0}\ket{{\psi_1^0}}=\ket{{\psi_1^2}}\otimes\ket{{0}}\otimes\ket{{1}}$
and $G_1^{1}\ket{0}_n=\ket{0}_n$. We continue in this way, until the
most significant qubit is disentangled.  We have therefore constructed
a operation $G_1$ such that
$G_1G_0V\ket{1}_m=G_1\ket{{\psi_1^0}}=\ket{1}_n$ and
$G_1G_0V\ket{0}_m=G_1\ket{0}_n=\ket{0}_n$.

This procedure can be continued in a similar fashion, leading to
unitaries $G_k$ such that $G_k G_{k-1}\dots G_0 V \ket{k}_m=\ket{k}_n$
and $G_k \ket{i}=\ket{i}$ for all $i \in \{0,1,\dots,k-1 \}$. For a
general description of the construction of the unitary $G_k$ see
Appendix~\ref{sec:app_isounif}.  We can hence construct a unitary
operator $G:=G_{2^{m}-1}G_{2^{m}-2}\dots G_{0}$
satisfying $GV=I_{2^n\times2^m}$.

In order to compute the number of \cnots{} used for such a
decomposition, we use the following existing results:
\begin{enumerate}[label=(\roman*)]
\item \label{pt:1} $N_{ \Delta C^{ \textnormal{u} }_{k}}=2^k-1$ \cnots{} are sufficient to decompose a UCG
  with $k$ controls, up to a diagonal gate~\cite{10}.
\item \label{pt:2} $N_{\Delta}(m)=2^m-2$ \cnots{} are sufficient to decompose a
  diagonal gate acting non trivially on $m$ qubits~\cite{diag}.
\item \label{pt:3} $N_{ C_{n-1}(W)}=\mathcal O\left(n\right)$ \cnots{} are sufficient
  to decompose an $(n-1)$-controlled special unitary
  gate $W$~\cite[Corollary 7.10]{5}.
\end{enumerate}

To take advantage of~\ref{pt:1}, we require a small modification to
our decomposition scheme.  Note that instead of implementing the UCGs,
we do so up to diagonal gates, i.e., for every $k$, instead of
$C_{k}^{\textnormal{u}}(U)$ we implement
$\Delta_{k+1}C_{k}^{\textnormal{u}}(U)$, for some diagonal gate
$\Delta_{k+1}$ on $k+1$ qubits. The effect of these diagonal gates is
then be corrected for at the end of the entire circuit by adding a
diagonal gate that acts non-trivially on $m$ qubits and whose \cnot{}
count is given in~\ref{pt:2}.  (In fact, the number of \cnots{}
required for this is of sufficiently low-order that it doesn't feature
in the count of Theorem~\ref{thm_main}.)

Furthermore, as shown in Lemma~\ref{rotate_to_basis_state}, we only
require MCGs $C_{n-1}(W)$ for $W \in SU(2)$, and hence can
use~\ref{pt:3}.  In fact, we have modified the decomposition described
in~\cite{5} and used some technical tricks (see
Appendix~\ref{sec:app_multicontroll}) to obtain a \cnot{} count for a
$C_{n-1}(W)$ gate with leading order $28n$.

We conclude that we can decompose each column of an isometry using at
most   $N_{\textrm{col}}=\sum_{s=0}^{n-1} \left(N_{ \Delta C^{ \textnormal{u} }_{n-1-s}} +N_{ C_{n-1}(W)} \right) =\sum_{s=0}^{n-1} \left(\left(2^{n-1-s}-1\right)+\mathcal
  O\left(n\right)\right)=2^n+\mathcal O\left(n^2\right)$ \cnots. Note
that (for simplicity) we have overcounted the number of additional
MCGs, since in the above we have assumed each $G_k^s$ requires an
additional MCG. Therefore, to decompose an $m$ to $n$ isometry, we
require at most $2^m N_{\textrm{col}}+N_{\Delta}(m)= 2^m\left( 2^n+\mathcal
  O\left(n^2\right)\right)+2^m=2^{m+n}+ \mathcal O\left(n^2
\right)2^m$ \cnots.

Note that we implement every column of the isometry in a similar
fashion.  However, there are a lot of constraints on the last few
columns due to orthogonality, or, in other words, the first $k$
entries of $\ket{{\psi_{k}^0}}:=G_{k-1}G_{k-2} \dots G_0V\ket{k}_m$
are already zero by construction and so we have only to act on the
other $2^n-k$ entries. Therefore one might expect that the \cnot{}
count for $G_k$ decreases when $k$ increases. Since we use $2^n$
\cnots{} to leading order for each column, our decomposition scheme
doesn't take an advantage of this fact (for large $n$). Hence the
column-by-column decomposition has some inefficiency in the case where
$m\simeq n$ (by comparison to the case $m\ll n$). To give an improved
count in the cases $m= n-1$ and $m=n$, we introduce a further
decomposition scheme based on the CSD, which is adjusted to the
unitary structure, in Section~\ref{sec:isocsd}. Note that this scheme
corresponds exactly to the decomposition scheme of~\cite{2} in the
case $m=n$.\smallskip

\begin{rmk} \label{rmk:nearest_neighbour} In some physical
  realizations it is difficult to implement \cnot{} gates between
  non-adjacent qubits. The decomposition in this section can be
  adapted to the gate library containing only nearest neighbour
  \cnot{} and single-qubit gates in a relatively efficient way.  To do
  so, note that the UCGs used to implement one column of an $m$ to $n$
  isometry can be performed with at most $(5/3)2^n +\mathcal
  O\left(n^2\right)$ nearest neighbour \cnot{} gates~\cite{10}.
  Furthermore, since a \cnot{} gate acting between qubits a distance
  $n$ apart can be decomposed using $\mathcal O\left(n\right)$ nearest
  neighbour \cnot{} gates~\cite{2}, the MCGs used to implement one
  column use $\mathcal O\left(n^3\right)$ nearest neighbour \cnot{}
  gates. Therefore the decomposition of an $m$ to $n$ isometry uses at
  most $(5/3)2^{m+n}+ \mathcal O\left(n^3\right) 2^m$ nearest
  neighbour \cnot{} gates.
\end{rmk}

\subsection{Decomposition of isometries using the Cosine-Sine Decomposition} \label{sec:isocsd}

The most efficient known decomposition scheme for arbitrary unitary
operators in term of the number of \cnot{} gates required uses the
CSD~\cite{2}. In this section we adapt the decomposition scheme used
in~\cite{2} to $m$ to $n$ isometries.  To simplify the exposition, the
count given here is not the lowest we can obtain; an improvement is
given in Appendix~\ref{opt_CNOT_count_CSD}.

\begin{thm}
  Let $m$ and $n$ be natural numbers with $2\leqslant m\leqslant n$
  and $V$ be an isometry from $m$ qubits to $n$ qubits. There exists a
  decomposition of $V$ in terms of single-qubit gates and \cnots{} 
  such that the number of \cnot{} gates required satisfies
\begin{equation}\label{eq:CSD_count_1}
  N_{\mathrm{iso}}(m,n)\leqslant 3\cdot2^{2n-3}-2^n+2^{m-4}\left(3\cdot2^m-8\right).
\end{equation}
\end{thm}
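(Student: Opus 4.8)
The plan is to adapt the Cosine--Sine-Decomposition (CSD) synthesis of an arbitrary unitary from~\cite{2} so that it exploits the fact that, writing $V=U\,I_{2^n\times 2^m}$, only the first $2^m$ columns of the $2^n\times 2^n$ unitary $U$ are constrained. I would first apply the (rectangular) CSD to $V$ itself: splitting the $2^n$ rows into two halves of size $2^{n-1}$---equivalently, conditioning on one designated qubit---one can write, for $m<n$, $V=(L_1\oplus L_2)\,D\,R^{\dagger}$, where $L_1,L_2$ are $2^{n-1}\times 2^{n-1}$ unitaries, $R$ is a $2^m\times 2^m$ unitary, and $D$ is the ``cosine--sine'' factor carrying the diagonal cosines and sines together with the rectangular zero-blocks that are forced because $m<n$. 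When $m=n$ this becomes precisely the square CSD used in~\cite{2}, with two block-diagonal factors and a full $R_y$-multiplexor in the middle, so the scheme reduces to theirs; this also justifies the remark that the two schemes agree at $m=n$.

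The next step is to read off the circuit meaning of the three factors. The factor $R^{\dagger}$ is an arbitrary unitary on the $m$ ``data'' qubits, decomposed recursively. The factor $D$ I would identify as follows: the designated qubit and the $n-1-m$ ``padding'' ancillas all carry a fixed $\ket{0}$ on input, the $n-1-m$ padding wires pass through untouched, and the designated qubit is rotated by an angle determined by the data register; hence $D$ is a uniformly controlled $R_y$ rotation of the designated qubit controlled by the $m$ data qubits, at a cost of $2^m$ \cnot{} gates. The factor $L_1\oplus L_2$ is a uniformly controlled (``multiplexed'') $(n-1)$-qubit unitary; I would decompose it by the demultiplexing identity of~\cite{2} ($L_1=PEQ$, $L_2=PE^{\dagger}Q$ with $E$ diagonal, giving $(I\otimes P)(\text{multiplexed }R_z)(I\otimes Q)$), implementing $P$ and $Q$ by the plain CSD recursion of~\cite{2}, whose leading \cnot{} count on $k$ qubits is $\tfrac{3}{4}4^k$---the more careful bookkeeping (merging of diagonals, etc.)\ that lowers this is postponed to Appendix~\ref{opt_CNOT_count_CSD}. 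Crucially, I would use the orthonormality of the columns of $V$: after $D$ and $R^{\dagger}$ the state on the $n-1$ non-designated qubits always lies in the $2^m$-dimensional coordinate subspace on which the padding ancillas are $\ket{0}$, so $L_1$ and $L_2$ only need to be specified there, and the remaining freedom is used to keep the recursion in ``isometry'' form and to set whole blocks (in particular the second block-diagonal factor and the controlled pieces acting on never-populated states) to the identity. This is what holds the $n$-part of the count down to $3\cdot 2^{2n-3}-2^{n}$ rather than a full $\tfrac{3}{4}4^n$, and the analogous trimming on the $R^{\dagger}$ side down to $2^{m-4}(3\cdot 2^m-8)$.

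Finally, I would collect the counts: the \cnot{}s spent on the multiplexed $(n-1)$-qubit unitary and its $R_z$-multiplexor, the $2^m$ of $D$, the cost of the recursively handled $m$-qubit factor, and a single diagonal gate at the end (cost $2^m-2$~\cite{diag}, absorbed) to undo all the ``up to a diagonal'' shortcuts; solving the resulting short geometric recursion in $n$ with base case $n=m$ (the unitary scheme of~\cite{2}) and simplifying should give $N_{\mathrm{iso}}(m,n)\leqslant 3\cdot 2^{2n-3}-2^{n}+2^{m-4}(3\cdot 2^m-8)$ for $2\leqslant m\leqslant n$. I expect the main obstacle to be the structural bookkeeping rather than the arithmetic: getting the ancilla-padding and the placement of the zero-blocks of $D$ exactly right when $m<n$ (which qubit is ``designated'', which coordinate subspace is populated), and then verifying that the freedom left in $L_1$ and $L_2$ really does let one discard the pieces claimed above so that the recursion closes with these precise constants and matches the $m=n$ scheme of~\cite{2} at the boundary. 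Once those points are pinned down the \cnot{} accounting is routine, and the exact count (together with a genuinely lower one) is recorded in Appendix~\ref{opt_CNOT_count_CSD}.
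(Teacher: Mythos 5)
Your starting point---a CSD-type factorization that exploits the fact that only $2^m$ columns of the unitary extension are constrained---is the right family of ideas, but you deploy it in a way that breaks down at exactly the step you yourself flag as unresolved, and that step is a genuine gap. The paper applies the \emph{square} CSD to a unitary extension $V_n$ of $V$ and exploits the input constraint on the \emph{rightmost} factor: since the select qubit enters in $\ket{0}$, the block-diagonal factor $B_0\oplus B_1$ can be taken to be $I\ot B_0$ with $B_0$ an $m$ to $n-1$ isometry, and the recursion is run on $B_0$ alone; the leftmost multiplexor $A_0\oplus A_1$ and the two multiplexed rotations are paid for generically, giving $N_{\mathrm{iso}}(m,i+1)=N_{\mathrm{iso}}(m,i)+\tfrac{9}{8}4^{i}-2^{i}$ with base case $N_{\mathrm{iso}}(m,m)=\tfrac{9}{16}4^m-\tfrac{3}{2}2^m$ from~\cite{2}, whose solution is exactly the stated formula. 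You instead place the unconstrained freedom in the \emph{left} factor $L_1\oplus L_2$ and hope to trim it after demultiplexing. That step fails as described: the demultiplexing of~\cite{2} obtains $P$ by diagonalizing $L_1L_2^{\dagger}$ and sets $Q=EP^{\dagger}L_2$, and neither $P$ nor $Q$ inherits any column structure from the fact that only the first $2^m$ columns of $L_1,L_2$ are specified, so there is no justification for ``setting whole blocks to the identity'' or for a recursion that stays ``in isometry form''. Your accounting is also internally inconsistent: in your factorization $R^{\dagger}$ is a fully generic $m$-qubit unitary costing $\tfrac{9}{16}4^m-\tfrac{3}{2}2^m$ \cnots{} by~\cite{2}, which already exceeds the term $2^{m-4}(3\cdot 2^m-8)=\tfrac{3}{16}4^m-2^{m-1}$ you assign to that side (e.g.\ $3$ versus $1$ at $m=2$); and your one-shot factorization contains no recursion in $n$, so the ``short geometric recursion in $n$ with base case $n=m$'' has nothing to run on.

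It is worth noting that your rectangular CSD $V=(L_1\oplus L_2)DR^{\dagger}$ is itself valid and, with everything treated generically---two generic $(n-1)$-qubit unitaries plus a $C^{\textnormal{u}}_{n-1}(R_z)$ for the multiplexor, $2^m$ \cnots{} for the $m$-controlled $R_y$, and a generic $m$-qubit unitary for $R^{\dagger}$---costs $\tfrac{9}{32}4^n-2^n+\tfrac{9}{16}4^m-\tfrac{1}{2}2^m$, which one can check is at most the claimed bound precisely when $m\leqslant n-1$ (the case $m=n$ reverting to~\cite{2}). So a correct proof along your lines exists, but it requires dropping the unproved trimming claims and making that comparison explicitly; as written, the argument does not establish the stated count.
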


The Cosine-Sine Decomposition (CSD)~\cite{csd} was first used
by~\cite{csd_quantum} in the context of quantum computation.  In
particular, the CSD states that every unitary matrix $U\in
\mathbb{C}^{2^n\times 2^n}$ can be decomposed in terms of unitaries
$A_0,A_1,B_0,B_1 \in \mathbb{C}^{2^{n-1}\times 2^{n-1}}$ and real
diagonal matrices $C$ and $S$ satisfying $C^2+S^2=I$:

\begin{equation}\label{eq:CSD_matrix}
U   =  \left(\begin{array}{c|c}A_0 &0 \\\hline 0& A_1\end{array}\right)\left(\begin{array}{c|c}C & -S  \\\hline S & C \end{array}\right)\left(\begin{array}{c|c}B_0 &0 \\\hline 0 & B_1\end{array}\right)
\end{equation}

The CSD can be summarized by the gate identity
\[
\Qcircuit @C=1em @R=.2em {
&&			& \multigate{2}{U_n}	&\qw& &	&& \gate{}\qwx[2]       & \gate{R_{y}} 	&  \gate{}  \qwx[2]	&   \qw	\\
&& 			&	&	&=			&&&       	      &	 			   & &   & 	  \\
&\lstick{n-1}& {\backslash}	& \ghost{U_n}&\qw &&& {\backslash}	&\gate{U_{n-1}}   &\gate{} \qwx[-2] \qw&		\gate{U_{n-1}} 	&   \qw	 \\
}
\]
Together with
\begin{equation}\label{CSD2}
\Qcircuit @C=0.65em @R=.2em {
&&&			& \gate{}\qwx[2]	&\qw& &&&	&& \qw       & \gate{R_z} 	&   \qw&   \qw	\\
&&& 			&		&&&=			&&&&       	      &	 			 \\
&&\lstick{n-1}& {\backslash}	& \gate{U_{n-1}}&\qw &&&&& {\backslash}	&\gate{U_{n-1}}   &\gate{} \qwx[-2] \qw&  \gate{U_{n-1}} &  \qw	 \\
}
\end{equation}
(which is Theorem~12 of~\cite{2}) it allows a recursive decomposition
of an arbitrary unitary operation in terms of single-qubit gates and
uniformly controlled $R_y$ and $R_z$ gates.

In the case of an isometry, we again use a representation in terms of
a unitary matrix, $V_n$, such that $V=V_n I_{2^n\times2^m}$.  Now, if
$n>m$, we can take the control qubit of the first
$(n-1)$-qubit-$C^{\textnormal{u}}_{1}(U_{n-1})$ gate to be in the
state $\ket{0}$, and hence this gate need not be uniformly
controlled. Thus, the following circuit identity holds
\[
\Qcircuit @C=0.8em @R=.2em {
&\lstick{\ket{0}}&			& \multigate{2}{V_n}	&\qw& &&	&\lstick{\ket{0}}& \qw       & \gate{R_{y}} 	&  \gate{}  \qwx[2]	&   \qw	\\
&& 			&		&&=			&&&&       	      &	 			   & &   & 	  \\
&\lstick{n-1}& {\backslash}	& \ghost{ V_n}&\qw &&&& {\backslash}	&\gate{V_{n-1}}   &\gate{} \qwx[-2] \qw&		\gate{U_{n-1}} 	&   \qw	 \\
}
\]

Note that $V_{n-1}$ represents an $m$ to $n-1$ isometry. In the matrix
representation the circuit identity above corresponds to setting
$B_1=B_0$ in equation~(\ref{eq:CSD_matrix}). We can decompose the
$(n-1)$-qubit-$C^{\textnormal{u}}_{1}(U)$ gate as above so that
\[
\Qcircuit @C=0.33em @R=.2em {
&&\lstick{\ket{0}}&			& \multigate{2}{ V_n}	&\qw& &&&&&&& 	&\lstick{\ket{0}}& \qw       & \gate{R_{y}} 	&   \qw&  \gate{R_z} 	&  \qw  & \qw	\\
&&& 			&		&&&=&&&			& &&&&       	      &	 			   & &  			 & &&	  \\
&&\lstick{n-1}& {\backslash}	& \ghost{ V_n}&\qw &&&&&&&& & {\backslash}	&\gate{V_{n-1}}   &\gate{} \qwx[-2] \qw&  \gate{U_{n-1}} & \gate{}  \qwx[-2]&   \gate{U_{n-1}}	&   \qw	 \\
}
\]
 
We can use this idea to recursively decompose $V_n$. The uniformly
$(n-1)$-controlled rotations can be decomposed using at most $2^{n-1}$
\cnot{} gates~\cite{diag,unif_rot}. The two $U_{n-1}$ gates can be
decomposed by using the CSD and the circuit equivalence~\eqref{CSD2} recursively until
two-qubit gates remain\footnote{We could finish the recursion at any stage, such that only  $\tilde{n}$-qubit unitaries reamain. Therefore, an improvement of the \cnot{} count for $\tilde{n}$-qubit unitaries could help to improve the  \cnot{} count given in equation~(\ref{eq:CSD_count_1}) (and equation~(\ref{eq:CSD_count_2})).} (each of which can be implemented with 3
\cnots). In this way it can be shown that each $U_{n-1}$ requires at
most $(9/16)4^{n-1}-(3/2)2^{n-1}$ \cnot{} gates~\cite{2}. Note that
this is not the optimal count reached in~\cite{2}, but we use this
slightly weaker count here for simplicity (a count that takes into
account the additional optimizations of the Appendix of~\cite{2} can
be found in Appendix~\ref{sec:app_isocsd}). The \cnot{} count for an
$m$ to $n$ isometry, $N_{\mathrm{iso}}(m,n)$, hence satisfies the
recursion relations
\begin{align}
\label{eq:rec1} 
		N_{\mathrm{iso}}(m,i+1)&=N_{\mathrm{iso}}(m,i)+\frac{9}{8}4^{i}-2^{i},
                \text{ if } m\leqslant i < n\, ,\\
N_{\mathrm{iso}}(m,m)&=\frac{9}{16}4^m-\frac{3}{2}2^m\, .\label{eq:recs1}
\end{align}
Solving these leads to the claimed count.\smallskip

\begin{rmk} [CSD approach zeroes too many entries] 
  Recall that constructing a gate $V_n$ such that $V=V_n
  I_{2^n\times2^m}$ is equivalent to constructing a gate $V_n^{\dagger}$
  such that $V_n^{\dagger}V= I_{2^n\times2^m}$. Therefore,
  rewriting equation~(\ref{eq:CSD_matrix}), the first recursion step of the
 CSD approach leads to

\begin{equation}\label{eq:CSD_matrix_inv}
\left(\begin{array}{c|c}C & S  \\\hline -S & C \end{array}\right) \left(\begin{array}{c|c}A_0^{\dagger} & 0 \\\hline 0& A_1^{\dagger} \end{array}\right)  U   = \left(\begin{array}{c|c}B_0 &0  \\\hline 0& B_1\end{array}\right)
\end{equation}

If $m<n-1$ we apply the same procedure to $B_0$.  However,
in this case, we already zeroed more entries than necessary in
the first recursion step. Specifically, it was unnecessary to zero at
least half of the entries in the upper right  and in the lower left 
$2^{n-1}\times 2^{n-1}$-dimensional block  of the matrix on the rhs
of equation~\eqref{eq:CSD_matrix_inv}, and the number of unnecessary zeros
grows as $m$ decreases. This intuitively explains why the CSD approach
is not well-suited to $m$ to $n$ isometries, where $m<n-1$: by zeroing
too many entries, more \cnot{} gates are used than needed.

\end{rmk}

\begin{rmk} [Optimized state preparation] \label{spodd} As a
  by-product of the above we obtain an improved bound over that
  of~\cite{10} on the number of \cnot{} gates required for state
  preparation on an odd number $n=2k+1\geqslant5$ of qubits.  The
  optimized decomposition is based on~\cite{3} and described in
  Section~\ref{opt_SP_app}.  The count~(\ref{CNOT_SP_odd}) using state
  preparation on $k$ qubits, which requires $2^k-k-1$ \cnots{} (as
  in~\cite{10}), gives the following count for state preparation
  starting from the basis state $\ket{0}^{\otimes n}$:
\begin{equation}
		N_{\mathrm{SP_{opt}}}(n)\leqslant\frac{23}{24} 2^n-\frac{3}{2}2^{\frac{n+1}{2}}+4/3.
	\label{eq:statepreparation_opt}
\end{equation}

Previously, the bound of $\frac{23}{24} 2^n$ \cnots{} to leading order was only known to be achievable for an even number of qubits~\cite{3} with a slightly weaker bound of $2^n$ \cnots{} to leading order in the odd case~\cite{10}.  For completeness, the bound for even $n$ is~\cite{3}
$$N_{\mathrm{SP}}(n)\leqslant\frac{23}{24}2^n-2^{\frac{n}{2}+1}+5/3$$
and since this bound is larger than~\eqref{eq:statepreparation_opt} for all $n$, the bound in the even case can be used for all $n$.

It is interesting to note the parallelizability of our circuit for state preparation, similarly to~\cite{3}. The form of the circuit means that, for large (odd) $n$, the circuit depth (i.e., the number of computational steps needed to perform the circuit) is about $3/4$ of the total gate count. ÊMeasuring the circuit depth only in terms of \cnots{}, our decomposition scheme has depth $\frac{23}{32}2^n$ to leading order, improving the previous best known bound of $\frac{23}{24}2^n$~\cite{3}. In the case of even $n$, the minimum known circuit depth is $\frac{23}{48}2^n$~\cite{3}.
\end{rmk}

\section{Comparison of decompositions}\label{sec:results}

We introduced three constructive decomposition schemes for arbitrary
isometries from $m$ to $n$ qubits and derived a lower bound on the
number of \cnot{} gates required for such decompositions. The
asymptotic results are summarized in Tables~\ref{tab:Results}
and~\ref{tab:Results2}. To compare the three decomposition schemes, we
consider the ratios $c_K(m,n)$, $c_{CC}(m,n)$ and $c_{CSD}(m,n)$ of
the \cnot{} count for the optimized decomposition scheme of Knill, the
column-by-column approach or the CSD approach, respectively, to that
of the lower bound for an $m$ to $n$ isometry. First note that for
$m\geqslant 5$ and for large enough $n$ the optimized decomposition
scheme of Knill performs similarly to the column-by-column
decomposition (i.e., $c_K(m,n) \simeq c_{CC}(m,n)$). For $m\leqslant
4$ we have $c_{CC}(m,n)Ê\simeq 2$ and $c_K(m,n)$ varies between
$c_K(4,n)\simeq 2$ (if $n$ is even) and $c_K(0,n)\simeq 4.8$ (if $n$
is odd). Hence the column-by-column decomposition requires fewer
\cnot{} gates if $m \leqslant 4$ (and $n$ is large). In the case
$m\simeq n$, the CSD approach may outperform the other two
decompositions. For any natural number $d$ and for sufficiently large
$n$, we have $c_{CC}(n-d,n)=2^{d+2}/(2^{d+1}-1)$ (and
$c_{CC}(n-d,n)\simeq c_K(n-d,n)$) and
$c_{CSD}(n-d,d)=\frac{23(2^{2d+1}+1)}{36(2^{d+1}-1)}$. In particular
$c_{CC}(n-2,n) \simeq 2.3$ and $c_{CC}(n-1,n) \simeq 2.7$ for large
$n$. For $m=n-1$ we can use the CSD approach to again reach
$c_{CSD}(n-1,n)\simeq 1.9$ for large $n$.

The column-by-column decomposition and the CSD-approach also perform
well for small $m$ and $n$. We give a step by step description of how to
decompose $m$ to $n\leqslant 4$ isometries in
Appendix~\ref{small_cases_appendix}. The results are summarized in
Table~\ref{Tab:small_cases}.

In addition we could use the CSD-approach (and a technical trick) to lower the \cnot{} count for state preparation. In particular we could lower the lowest
known \cnot{} count for state preparation on $4$ qubits from
$9$~\cite{3} to $8$ \cnots{} and on $5$ qubits from $26$~\cite{3,10}
to $19$ \cnots{} (cf.\ Appendix§~\ref{opt_SP_app}).

The column-by-column decomposition performs similarly to the optimized
decomposition of Knill with respect to the \cnot{} count, but there
are other differences that should be noted. For example, the
column-by-column decomposition adapts quite well to implementations
where we only allow nearest neighbour \cnot{} gates (cf.\
Remark~\ref{rmk:nearest_neighbour}). The optimized decomposition
scheme of Knill has the advantage that some of the gates can be
performed in parallel (cf.\ the circuit diagrams in
Section~\ref{sec:Knill}).

Another important difference between the column-by-column
decomposition and the optimized decomposition of Knill is their
dependence on the efficiency of the decomposition of their building
blocks. In the first case, any improvement of the leading order of the
\cnot{} count of uniformly controlled gates (up to diagonal gates)
leads to an improvement of the leading order of the \cnot{} count for
isometries (cf.\ Theorem~\ref{thm_main}). Where in the second case,
the leading order of the \cnot{} count depends on the leading order of
the \cnot{} count for arbitrary unitary gates (cf.\
Theorem~\ref{thm_Knill}).

\begin{rmk}
  Another interesting black box relation can be extracted
  from~\cite{Wolf}, where the Sinkhorn normal form for unitary
  matrices is used to decompose a unitary into a sequence of diagonal
  gates and discrete Fourier transforms (cf.\ Corollary 1
  of~\cite{Wolf}). Since we can perform the discrete Fourier transform
  with a polynomial number of gates, they do not contribute to the leading
  order of the \cnot{} count of this decomposition. Therefore, this
  decomposition allows us to relate the efficiency with which we can
  decompose a unitary with the decomposition of diagonal gates.
\end{rmk}

\section{Application to quantum operations and POVMs} \label{sec:ACO}
Experimental groups strive to demonstrate their ability to control a
small number of qubits, and the ultimate demonstration would be the
ability to do any quantum operation on them (i.e., any completely
positive trace-preserving (CPTP) map).  Since any such operation can
be implemented via an isometry followed by partial trace (using
Stinespring's theorem), we can use our decomposition scheme for
isometries to efficiently synthesize arbitrary CPTP maps.

Indeed, we can use a similar parameter counting argument as used to
derive the lower bound for isometries to find a lower bound on the
number of \cnot{} gates required to implement arbitrary CPTP maps via
a fixed quantum circuit topology.  First we use the Choi-Jamiolkowski
isomorphism~\cite{Pillis, Jamiolkowski, Choi}  to simplify the parameter count.
This isomorphism states that the set of all CPTP maps from a system
$A$ consisting of $m$ qubits to a system $B$ consisting of $n$ qubits
is isomorphic to the set of all density operators $\rho_{AB}$ on
$\mathcal{H}_A\otimes \mathcal{H}_B$ satisfying
$\tr_B(\rho_{AB})=\frac{1}{2^m}I_A$. Since a density operator
$\rho_{AB}$ is Hermitian, it can be described by $2^{2(n+m)}$ real
parameters. The condition $\tr_B(\rho_{AB})=\frac{1}{2^m}I_A$
corresponds to $2^{2m}$ constraints, and hence the determination of a
CPTP map requires $2^{2(n+m)}-2^{2m}$ real parameters.

We restrict our analysis of the lower bound to the following setting:
For the implementation of a CPTP map $\mathcal{E}$ from an $m$-qubit
system $A$ to an $n$-qubit system $B$ we allow the use of an arbitrary
number $k$ of qubits on which we can perform \cnot{} and single-qubit
gates, before we trace out a system $C$ consisting of $k-n$
qubits. (Since tracing out qubits commutes with quantum gates on the
other qubits, without loss of generality, we can defer tracing out to
the end of the circuit.) We then use a similar argument as used to
derive the lower bound for isometries, but instead of commuting the
$R_x$ and $R_z$ gates to the left of each \cnot{}, we commute them to
the right so that we perform arbitrary single-qubit unitaries on all
of the qubits at the end of the circuit (reversing the order of
circuit diagram~(\ref{comm_to_left})). Since we have unitary freedom
on the system $C$ (because $\tr_C((I_B\ot U_C)\rho_{BC}(I_B\ot
U_C^\dagger))=\tr_C(\rho_{BC}))$, the single-qubit gates on each qubit
of the system $C$ at the end of the circuit cannot introduce
additional parameters. Hence, using $r$ \cnots{}, we can introduce at
most $4r+3n$ real parameters. By the parameter count for a CPTP map
given above, we conclude that a circuit topology has to consist of at
least $ \lceil \frac{1}{4}4^m(4^n-1)-\frac{3}{4}n \rceil$
\cnots{} in order that it can implement arbitrary CPTP maps from $m$
to $n$ qubits\footnote{For a more rigorous proof one could use a
  similar argument as given in~\cite{unitary_lowerb1,
    unitary_lowerb2}.}.

By Stinespring's theorem, every CPTP map $\mathcal{E}$ from an
$m$-qubit system $A$ to an $n$-qubit system $B$ can be implemented
with an isometry $V$ from system $A$ to system $BC$, where the system
$C$ consists of (at most) $n+m$ qubits, followed by partial trace on
$C$. We can use the column-by-column approach\footnote{The optimized
  decomposition scheme of Knill also leads to a similar asymptotic
  result if $m\geqslant 5$.} to decompose the isometry $V$, which
requires $4^{m+n}-\frac{1}{24}2^{2n+m}$ \cnots{} to leading order
(without exploiting the unitary freedom on $C$). Therefore we have
found a way to implement an arbitrary quantum channel from $m$ to $n$
qubits in a constructive and exact way using about four times the
number of \cnots{} required by the lower bound (for large enough $n$).

Note that the results of this section are derived in the setting where
the CPTP map is implemented in the quantum circuit model. However,
this is not the only possibility. For example, alternative methods for
the implementation of quantum channels are described in~\cite{Wang}
and~\cite{piani_linear-optics_2011}, which allow for additional
classical randomness.  In future work we will investigate how to use
our approach in an alternative model that allows either measurements
or classical randomness as additional resources, in order to further
improve the \cnot{} counts.

Note also that, by Naimark's theorem, any POVM on a system $A$ can be
implemented using an isometry from system $A$ to an enlarged system $AB$
followed by a measurement on system $B$. Therefore our decomposition
schemes for isometries can also be used for the implementation of
arbitrary POVMs.

\section{Acknowledgements}
Part of this work was carried out while MC and RC were with ETH Z\"urich. MC was supported by a Sapere Aude grant of the Danish Council for Independent Research, an ERC Starting Grant, the CHIST-ERA project ``CQC'', an SNSF Professorship, the Swiss NCCR ``QSIT'' and the Swiss SBFI in relation to COST action MP1006. JH was also supported by the Swiss NCCR ``QSIT''. RC acknowledges support from the EPSRC's Quantum Communications Hub (grant no.\ EP/P016588/1).

We thank Vadym Kliuchnikov for kindly pointing out reference~\cite{Knill}.

The authors are grateful to the authors of {\tt quant-ph/0406003}, whose
package Qcircuit.tex was used to produce the circuit
diagrams.

\appendix

\section{Technical details} \label{sec:TD}
In this section we give a rigorous proof that the column-by-column decomposition works for arbitrary $m$ to $n$ isometries and  we give an explicit \cnot{} count in the case $n\geqslant 8$. Since MCG arise in the column-by-column decomposition, we first optimize the decomposition of such gates, based on the decomposition scheme of~\cite{5}.  In addition we perform some optimizations for the CSD-approach (based on the Appendix of~\cite{2}) and for state preparation. 

\subsection{Decomposition of MCGs} \label{sec:app_multicontroll}

In this section we describe how to efficiently decompose MCGs
$C_{n-1,n}(U)$, where we focus on the special case of $C_{n-1,n}(W)$
gates, where $W \in SU(2)$. The decomposition schemes are based on
those in~\cite{5}, except that we use some technical tricks to reduce
the number of \cnots{} needed. Note that the number of \cnots{}
required is the same whether we control on one or zero, because we can
always transform a gate controlled on $\ket{0}$ on a certain
control-qubit of a MCG into a gate controlled on $\ket{1}$ using two
$\sigma_{x}$ gates, as illustrated below.
\[
\Qcircuit @C=1.0em @R=.7em {
  &    \ctrl{1} \qw &	\qw 		 & &       &  &     &\qw 		 &    \ctrl{1} \qw &	\qw & \qw  \\
   &  \ctrlo{1}  \qw&	\qw 	 & & =      & &       &  \gate{\sigma_{x}}		 & \ctrl{1}  \qw&   \gate{\sigma_{x}}	&	\qw  \\
  & \gate{U}  & \qw           & &           & &    &\qw 	& \gate{U}  & \qw   & \qw 		& &   \\
}
\]

We denote a $k$-controlled \nt{} gate acting on $n$ qubits by
$C_{k,n}(\sigma_{x})$. In the case $k=2$ with control on
$\ket{1}\ot\ket{1}$, we call such a gate a Toffoli gate.

\begin{lem} [$C_{1,2}(U)$ gates~{\cite[Corollary 5.3]{5}}]  \label{cor5.3}
Any  $C_{1,2}(U)$ gate can be decomposed using two \cnot{} gates, three special unitary gates $A$, $B$ and $C$ and a diagonal gate of the form $E=\ketbra{0}{0}+e^{\I \delta}\ketbra{1}{1}$, where $\delta \in \mathbb{R}$.

\[
\Qcircuit @C=0.8em @R=.47em {
   &    \ctrl{2} \qw & \qw  && 		&& \gate{E}   &  \ctrl{2}  \qw         & \qw  &   \ctrl{2}  \qw        	 &  \qw	& \qw	&\\
   & &         &&=            \\
     &\gate{U}  &	\qw 	&&		        &&   \gate{C}  &  \targ           &\gate{B}      &  \targ    & \gate{A}   &   \qw &\\
}
\]

\end{lem}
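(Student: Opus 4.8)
The plan is to prove Lemma~\ref{cor5.3} by first establishing a normal form for an arbitrary single-qubit unitary $U$, and then reading off the circuit directly. The key algebraic fact I would use is the following: for any $U\in U(2)$ there exist $A,B,C\in SU(2)$ and $\delta\in\mathbb{R}$ such that $ABC=I$ and $U=e^{\I\delta}\, A\sigma_x B\sigma_x C$. This is a standard refinement of the ZYZ decomposition (Lemma~\ref{ZYZ}); I would derive it as follows. Write $U=e^{\I\alpha}R_z(\beta)R_y(\gamma)R_z(\delta')$. Because $\sigma_x R_z(\theta)\sigma_x=R_z(-\theta)$ and $\sigma_x R_y(\theta)\sigma_x=R_y(-\theta)$, one can set $A=R_z(\beta)R_y(\gamma/2)$, $B=R_y(-\gamma/2)R_z(-(\beta+\delta')/2)$ and $C=R_z((\delta'-\beta)/2)$; a short computation gives $ABC=I$ and $A\sigma_x B\sigma_x C=R_z(\beta)R_y(\gamma)R_z(\delta')=e^{-\I\alpha}U$, so the overall phase $e^{\I\alpha}$ is the only leftover, and I rename it $e^{\I\delta}$ (matching the statement's $E$).

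Next I would translate this into the circuit. Take the top wire to be the control and the bottom wire to be the target. Acting from right to left on the target, apply $C$ (always), then a \cnot{} (which applies $\sigma_x$ to the target exactly when the control is $\ket{1}$), then $B$, then another \cnot{}, then $A$. When the control is $\ket{0}$, the target sees $A\,I\,B\,I\,C=ABC=I$; when the control is $\ket{1}$, the target sees $A\sigma_x B\sigma_x C=e^{-\I\alpha}U$. Thus the controlled operation implemented is $\ketbra{0}{0}\otimes I+\ketbra{1}{1}\otimes(e^{-\I\alpha}U)$, which differs from the desired $C_{1,2}(U)=\ketbra{0}{0}\otimes I+\ketbra{1}{1}\otimes U$ only by the relative phase $e^{-\I\alpha}$ on the $\ket{1}$ branch of the control. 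That phase can be supplied by the diagonal gate $E=\ketbra{0}{0}+e^{\I\delta}\ketbra{1}{1}$ on the control qubit (with $\delta=\alpha$), and since $E$ is diagonal it commutes with the two \cnot{}s, so it may be placed on the left of the control wire as drawn. This yields exactly the circuit in the statement, using two \cnot{}s, the three special unitaries $A,B,C$, and the single diagonal $E$.

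The only genuinely delicate point — and the one I would be most careful about — is bookkeeping the phases and orientations: keeping straight that $A,B,C$ are required to lie in $SU(2)$ (not merely $U(2)$), verifying the identity $ABC=I$ exactly rather than up to phase, and confirming that the leftover global phase of the ZYZ form becomes precisely a \emph{relative} phase on the control branch once $U$ is conditioned, so that it is absorbable by a diagonal gate on the control rather than requiring an extra single-qubit gate on the target. Everything else is routine $2\times2$ matrix algebra. I would present the algebraic identity as a displayed equation, note that it follows from Lemma~\ref{ZYZ} together with the conjugation rules $\sigma_x R_{z}(\theta)\sigma_x=R_z(-\theta)$ and $\sigma_x R_y(\theta)\sigma_x = R_y(-\theta)$, and then state that the circuit is obtained by direct substitution, leaving the elementary verification to the reader.
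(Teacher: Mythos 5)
Your proof is correct, and it is essentially the argument of the cited source: the paper itself offers no proof of this lemma, deferring entirely to Corollary~5.3 of~\cite{5}, and your choice $A=R_z(\beta)R_y(\gamma/2)$, $B=R_y(-\gamma/2)R_z(-(\beta+\delta')/2)$, $C=R_z((\delta'-\beta)/2)$ together with the conjugation rules $\sigma_x R_z(\theta)\sigma_x=R_z(-\theta)$, $\sigma_x R_y(\theta)\sigma_x=R_y(-\theta)$ is exactly the construction given there. The phase bookkeeping is also right: the leftover $e^{\I\alpha}$ from the ZYZ form becomes a relative phase on the $\ket{1}$ branch of the control, absorbed by $E$ with $\delta=\alpha$, which commutes past the \cnot{} controls.
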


\begin{lem} [$C_{2,3}(U)$ gates~{\cite[Lemma 6.1]{5}}]  \label{lem6.1}
Any $C_{2,3}(U)$ gate can be decomposed as follows
\[
\Qcircuit @C=1.0em @R=.7em {
   &    \ctrl{1} \qw & \qw  && 		&&  \qw &  \ctrl{1}  \qw         & \qw  &   \ctrl{1}  \qw        	 & \ctrl{2} \qw	& \qw	&\\
   &  \ctrl{1} \qw &\qw        &&=             && \ctrl{1}& \targ                 &      \ctrl{1}&   \targ                & \qw  	       & \qw		&\\
     &\gate{U}  &	\qw 	&&		        &&  \gate{V}  &  \qw           &\gate{V^{\dag}}      &  \qw     & \gate{V}   &   \qw &\\
}
\]
where $V^2=U$.

\end{lem}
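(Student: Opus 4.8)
The plan is to verify the stated circuit identity directly, by analysing the action of the right-hand side on the computational basis states of the two control qubits. First I would record that a unitary square root $V$ with $V^2=U$ always exists: writing the spectral decomposition $U=\sum_j e^{\I\phi_j}\proj{j}$ and setting $V=\sum_j e^{\I\phi_j/2}\proj{j}$ gives a unitary with $V^2=U$. If moreover $U\in SU(2)$ its eigenvalues are $e^{\pm\I\phi}$, so the corresponding $V$ has eigenvalues $e^{\pm\I\phi/2}$ and hence $\det V=1$, i.e.\ $V\in SU(2)$ as well; this last observation is the one that matters for the \cnot{} counts used later, where only the case $U=W\in SU(2)$ is needed.

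Next I would observe that every gate on the right-hand side is either a single-qubit gate on the target qubit controlled by one of the top two qubits, or a \cnot{} between the top two qubits. In all cases the induced operation on the two control qubits is either diagonal (for the controlled $V$ and $V^{\dagger}$ gates) or a permutation (for the two \cnots{}) of the computational basis. Consequently the whole right-hand side acts invariantly on each sector obtained by fixing the computational-basis values $c_1,c_2\in\{0,1\}$ of the top two qubits, and it suffices to check, for each of the four choices $c_1c_2\in\{00,01,10,11\}$, that the resulting operation on the target qubit equals $U$ when $c_1=c_2=1$ and the identity otherwise.

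Then I would run through the four cases, reading the circuit from left to right and tracking (i) the current value of the middle qubit, which gets toggled by a \cnot{} precisely when the top qubit is $\ket1$, and (ii) the resulting word in $V$ and $V^{\dagger}$ applied to the target. For $c_1c_2=00$ no gate fires and the target is untouched. For $01$ the middle qubit stays $1$ throughout, the controlled $V$ and controlled $V^{\dagger}$ both fire while the final $V$ (controlled by the top qubit) does not, giving $V^{\dagger}V=I$. For $10$ the two \cnots{} send the middle qubit to $1$ and back, so only the controlled $V^{\dagger}$ fires, together with the final $V$ controlled by the top qubit, giving $VV^{\dagger}=I$. For $11$ the first controlled $V$ fires, the first \cnot{} sends the middle qubit to $0$ so the controlled $V^{\dagger}$ does \emph{not} fire, the second \cnot{} restores it, and the final controlled $V$ fires since the top qubit is $\ket1$, giving $V\cdot V=V^2=U$. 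This establishes the identity.

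The argument is entirely routine; the only point requiring care is the bookkeeping in the $10$ and $11$ cases, where one must keep track of which side of each \cnot{} the controlled $V^{\dagger}$ sits on, so that the middle qubit has the correct value when that gate is encountered. There is no real obstacle here: once one asks for a gadget that produces $V^2$ on the target only in the all-ones sector while cancelling $V$ against $V^{\dagger}$ in the remaining sectors, the placement of the gates is essentially forced.
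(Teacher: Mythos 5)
Your verification is correct: the case analysis over the four control-qubit sectors is exactly right, and the sequence controlled-$V$, \cnot{}, controlled-$V^{\dagger}$, \cnot{}, controlled-$V$ does yield $I$, $V^{\dagger}V$, $VV^{\dagger}$ and $V^2=U$ in the $00$, $01$, $10$ and $11$ sectors respectively. The paper gives no proof of this lemma, deferring entirely to Barenco et al.~\cite{5}, and your argument (including the spectral-decomposition construction of $V$ and the observation that $U\in SU(2)$ permits $V\in SU(2)$) is precisely the standard one found there.
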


\begin{lem} [Toffoli gates~{\cite[Section~VI A]{5}}] \label{lem1c2} A
  Toffoli gate can be performed with 6 \cnots{} using the following
  circuit
\[
\Qcircuit @C=0.65em @R=.47em {
&\qw     &\qw     &\qw     &\ctrl{2}&\ctrl{1}&\qw           &\ctrl{1}&\qw&\qw&\ctrl{2}&\gate{E}&\qw\\
&\gate{E}&\ctrl{1}&\qw     &\qw     &\targ   &\gate{E^\dagger}&\targ   &\ctrl{1}&\qw&\qw&\qw&\qw\\
&\gate{C}&\targ   &\gate{B}&\targ   &\qw     &\gate{B^\dagger}&\qw&\targ&\gate{B}&\targ&\gate{A}&\qw
}
\]
where $A=R_z(-\frac{\pi}{2})R_y(\frac{\pi}{4})$, $B=R_y(-\frac{\pi}{4})$, $C=R_z(\frac{\pi}{2})$ and $E=\proj{0}+e^{\frac{\I\pi}{4}}\proj{1}$.
\end{lem}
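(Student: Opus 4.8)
The plan is to assemble the stated circuit by composing Lemma~\ref{lem6.1} with Lemma~\ref{cor5.3} for the particular square root $V=\sqrt{\sigma_x}$, and then to bring the \cnot{} count down from the naive eight to six. First I would apply Lemma~\ref{lem6.1} with $U=\sigma_x$: writing $\Lambda^{(j)}(W)$ for the gate on qubit~$3$ controlled by qubit~$j$, this expresses the Toffoli gate (controls $1,2$; target $3$) as $\Lambda^{(1)}(V)\,\mathrm{CNOT}_{12}\,\Lambda^{(2)}(V^{\dagger})\,\mathrm{CNOT}_{12}\,\Lambda^{(2)}(V)$ for any $V$ with $V^{2}=\sigma_x$. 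I would take $V=e^{\I\pi/4}R_x(\pi/2)$; one checks $V^{2}=\sigma_x$ and $\det V=\I$, so the controlled-$V$ splits as a diagonal $E=\proj0+e^{\I\pi/4}\proj1$ on the control line times a controlled-$SU(2)$ gate. Feeding the ZYZ form of $V$ obtained from $R_x(\theta)=R_z(-\pi/2)R_y(\theta)R_z(\pi/2)$ into the standard $ABC$-decomposition underlying Lemma~\ref{cor5.3} then produces precisely $A=R_z(-\pi/2)R_y(\pi/4)$, $B=R_y(-\pi/4)$, $C=R_z(\pi/2)$, with $ABC=I$ and $A\sigma_x B\sigma_x C=e^{-\I\pi/4}V$ --- exactly the data named in the statement.

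Next I would decompose the three controlled gates with Lemma~\ref{cor5.3}, choosing the circuit for $\Lambda^{(2)}(V^{\dagger})$ to be the dagger of the one for $\Lambda^{(2)}(V)$ (so on qubit~$3$ it begins with $A^{\dagger}$ and ends with $C^{\dagger}$) and the standard one for $\Lambda^{(1)}(V)$ (begins with $C$, ends with $A$). This produces a circuit with eight \cnot{}s, but two cancellations now appear: the $A$ closing the first block meets the $A^{\dagger}$ opening the second across an intervening $\mathrm{CNOT}_{12}$ --- which they commute through, acting on disjoint qubits --- and collapse to the identity; likewise a $C^{\dagger}$ meets a $C$ across the other $\mathrm{CNOT}_{12}$. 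Each cancellation leaves a string $\mathrm{CNOT}_{23}\,\mathrm{CNOT}_{12}\,\mathrm{CNOT}_{23}$ (respectively $\mathrm{CNOT}_{23}\,\mathrm{CNOT}_{12}\,\mathrm{CNOT}_{13}$), and each such string, read as an $\mathbb{F}_2$-linear map on the computational basis, equals a product of only \emph{two} \cnot{}s. Performing both rewrites gives a six-\cnot{} circuit whose surviving single-qubit gates on the target line are $C,B,B^{\dagger},B,A$ and whose diagonals $E,E^{\dagger},E$ sit on the two control lines exactly as drawn.

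Finally I would confirm correctness by evaluating the reduced circuit on the four eigensectors of the two control qubits: on $\ket{00}$, $\ket{01}$, $\ket{10}$ the controlled-$V$/$V^{\dagger}$ factors either act trivially or contribute a cancelling pair $VV^{\dagger}=I$, leaving qubit~$3$ untouched, whereas on $\ket{11}$ two copies of $V$ act and give $V^{2}=\sigma_x$; since each reduction step is an exact circuit identity and the $E$-phases carry the global-phase bookkeeping, the six-\cnot{} circuit implements the Toffoli. The main obstacle is precisely this bookkeeping in the second step: one must push the single-qubit and diagonal gates through the commutations and the \cnot{}-rewrites carefully enough to guarantee that no stray single-qubit factor survives on the target line and that the diagonals end up exactly where the statement places them --- routine but error-prone. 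A less illuminating but more direct route is to skip the derivation altogether and simply verify the given circuit by multiplying out the (explicitly known) single-qubit matrices sector by sector.
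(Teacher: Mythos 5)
Your proposal is correct and follows essentially the same route the paper takes: the paper states Lemma~\ref{lem1c2} as a citation of~\cite{5} and proves the general-$U$ version (Remark~\ref{rmk:6}) exactly by composing Lemma~\ref{lem6.1} with Lemma~\ref{cor5.3} and then applying the same two three-\cnot{}-to-two-\cnot{} rewriting identities you use. The only thing you add beyond the paper's argument is the explicit choice $V=e^{\I\pi/4}R_x(\pi/2)$ and the resulting $A$, $B$, $C$, $E$, which check out.
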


\begin{rmk} [{\cite[Corollary~6.2]{5}}] \label{rmk:6}
By adjusting $A$, $B$, $C$ and $E$, the circuit topology in
Lemma~\ref{lem1c2} can be used to generate $C_{2,3}(U)$ for any
unitary $U$.
\end{rmk}
\begin{proof}
  This circuit equivalence follows from Lemma~\ref{cor5.3} and
  Lemma~\ref{lem6.1} together with the following circuit identities.

\[
\Qcircuit @C=1.0em @R=.7em {
   &    \qw 		&\ctrl{1} \qw  	& \qw		&\qw	&	&&\ctrl{2}&\ctrl{1}&\qw    \\
   &  \ctrl{1} \qw &\targ 		& \ctrl{1} \qw	&\qw&=	&& \qw& \targ&\qw\\
    &\targ  		&\qw  		&\targ		&\qw	&	&&\targ & \qw&\qw \\
}
\]

\[
\Qcircuit @C=1.0em @R=.7em {
   &    \qw 		&\ctrl{1} \qw  	& \ctrl{2}\qw		&\qw	&	&&\ctrl{1}&\qw&\qw    \\
   &  \ctrl{1} \qw &\targ 		&  \qw	&\qw&=	&&\targ &\ctrl{1} &\qw\\
    &\targ  		&\qw  		&\targ		&\qw	&	&& \qw& \targ&\qw \\
}
\]
\end{proof}

We can halve the \cnot{} count if we are only interested in performing
the Toffoli gate up to a diagonal gate.

\begin{lem} [{\cite[Section~VI B]{5}}] \label{lem1c} Let
  $A:=R_{y}\left(\frac{\pi}{4}\right)$. We can decompose a Toffoli
  gate up to a diagonal gate with the following decomposition

\[
\Qcircuit @C=0.65em @R=.47em {
   &    \ctrl{1} \qw &   \multigate{2}{ \Delta}  &  \qw  & 	&&  \multigate{2}{ \Delta} 	  &    \ctrl{1} \qw		&\qw& &&  \qw &  \ctrl{2}  \qw         & \qw  &  \qw        		&	\qw 	 & \ctrl{2} \qw	& \qw	&\\
   &  \ctrl{1} \qw  & \ghost{ \Delta}	 &\qw             &=	&&\ghost{ \Delta}	&  \ctrl{1} \qw  		&\qw&= && \qw&  \qw               &      \qw        &       \ctrl{1}  & \qw  	       & \qw		& \qw	&\\
     &\targ     & \ghost{ \Delta}	&	\qw 			&	&&\ghost{ \Delta}	&\targ     	  	&\qw& && 	 \gate{A^{\dag}}  &\targ      &\gate{A^{\dag}}      & \targ  &   \gate{A}  & \targ  &   \gate{A}&\\
}
\]

\end{lem}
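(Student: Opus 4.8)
The plan is to verify the right-most circuit by a direct computation conditioned on the values of the two control qubits, and to obtain the first ``$=$'' from the fact that the Toffoli gate $T$ is a permutation in the computational basis.

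For the equivalence of the two ``Toffoli up to diagonal'' topologies: since $T$ only transposes $\ket{110}$ and $\ket{111}$, conjugating an arbitrary diagonal unitary $\Delta$ by $T$ produces another diagonal unitary $\Delta'=T\Delta T^{\dagger}$, so $T\Delta=\Delta' T$ (and conversely). Hence the sets $\{T\Delta\}$ and $\{\Delta T\}$, ranging over diagonal $\Delta$, coincide; this is the content of the first equality.

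For the right-most circuit, label the controls $q_{1}$ (top wire) and $q_{2}$ (middle wire). In time order the target qubit is acted on by $A^{\dagger}$, then $\sigma_{x}$ conditioned on $q_{1}$, then $A^{\dagger}$, then $\sigma_{x}$ conditioned on $q_{2}$, then $A$, then $\sigma_{x}$ conditioned on $q_{1}$, then $A$; thus on the sector with control values $q_{1}q_{2}$ the induced operation on the target is $A\,\sigma_{x}^{q_{1}}\,A\,\sigma_{x}^{q_{2}}\,A^{\dagger}\,\sigma_{x}^{q_{1}}\,A^{\dagger}$. Using $A=R_{y}(\pi/4)$ (so $AA=R_{y}(\pi/2)$, $A^{\dagger}A^{\dagger}=R_{y}(-\pi/2)$, $AA^{\dagger}=I$), together with $\sigma_{x}R_{y}(\theta)\sigma_{x}=R_{y}(-\theta)$ (equivalently $A\sigma_{x}=\sigma_{x}A^{\dagger}$ and $A^{\dagger}\sigma_{x}=\sigma_{x}A$) and $R_{y}(\pi/2)\,\sigma_{x}\,R_{y}(-\pi/2)=-\sigma_{z}$, one checks: $q_{1}q_{2}\in\{00,10\}$ give the identity on the target; $q_{1}q_{2}=01$ gives $-\sigma_{z}$; and $q_{1}q_{2}=11$ gives $\sigma_{x}$, after telescoping $A\sigma_{x}A\sigma_{x}A^{\dagger}\sigma_{x}A^{\dagger}$ via $A\sigma_{x}=\sigma_{x}A^{\dagger}$, then $A^{\dagger}A=I$, then $A^{\dagger}\sigma_{x}=\sigma_{x}A$. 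So the circuit agrees with $T$ on the sectors $00,10,11$ and differs only by the factor $-\sigma_{z}$ on the target in the $01$ sector; equivalently it equals $DT$, where $D$ is the diagonal unitary equal to $-1$ on $\ket{010}$ and $+1$ on every other computational basis state. By the previous paragraph this is also of the form $TD'$, which gives the second equality. A determinant check ($A$ and each three-qubit controlled-$\sigma_{x}$ have determinant $+1$, while $D$ and $T$ each have determinant $-1$) confirms the identity holds exactly, with no residual global phase.

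The only real work is the four-case analysis over the control sectors, and the only case that takes care is $q_{1}q_{2}=11$, where three interleaved $\sigma_{x}$ factors must be pushed past four $A^{\pm1}$ factors for the product to collapse to a single $\sigma_{x}$; the other three cases are one-line computations. (Alternatively, the same identity can be recovered by commuting the diagonal $E$-type gates through the control qubits in the six-\cnot{} decomposition of Lemma~\ref{lem1c2} and discarding the resulting diagonal.)
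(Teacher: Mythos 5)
Your proof is correct and follows essentially the same route as the paper's: a direct case analysis over the control values, computing $A\,\sigma_x^{q_1}A\,\sigma_x^{q_2}A^{\dagger}\sigma_x^{q_1}A^{\dagger}$ to get $I$, $-\sigma_z$, or $\sigma_x$, and identifying the residual diagonal as $-1$ on $\ket{010}$. The only additions are explicit justifications the paper leaves implicit (the first equality via $T\Delta T^{\dagger}$ being diagonal, and the determinant sanity check), which are fine.
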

\begin{proof}  To see this, note that if the second control-qubit is in the state $\ket{0}$, the least significant qubit is unchanged, since $AA^{\dagger}=I$. If the second control-qubit is in the state $\ket{1}$ and the first control-qubit in the state $\ket{0}$, the action on the least significant qubit is $A^2\sigma_x {A^{\dagger}}^2$, which is $-\ketbra{0}{0}+\ketbra{1}{1}$. If both control-qubits are in the state $\ket{1}$, the action on the least significant qubit is $A\sigma_x A \sigma_x A^{\dagger} \sigma_x A^{\dagger}= \sigma_x$. We choose the diagonal gate $ \Delta$  such that  $\ket{010}$ is mapped to $-\ket{010}$.
\end{proof}

\begin{lem} [Diagonal gates commute with UCGs]  \label{commutation}

\[
\Qcircuit @C=1.0em @R=.47em {
&\lstick{k}&{\backslash} &  \gate{} \qwx[2] \qw&\gate{\Delta}\qw
& &  & {\backslash}   &\gate{\Delta}\qw 		& \gate{}
\qwx[2] \qw	& \qw  \\
&&&&&=&&&& \\
&\lstick{l}& {\backslash} 		      &\gate{U} &\qw 		          	&&     	& {\backslash} 	&\qw		 & \gate{U}  & \qw \\
}
\]
\end{lem}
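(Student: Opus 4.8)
The plan is to verify the circuit identity by evaluating both sides on an arbitrary computational basis state and invoking linearity. Split the $n$ qubits as in equation~\eqref{eq12} into the $k$ control qubits, the $l$ target qubits and the $f$ free qubits, write $F:=C^{\textnormal{u}}_k(U)$ for the UCG acting as $F(\ket{i_1}_k\otimes\ket{i_2}_l\otimes\ket{i_3}_f)=\ket{i_1}_k\otimes(U_{i_1}\ket{i_2}_l)\otimes\ket{i_3}_f$, and note that the diagonal gate $\Delta$ is supported on the control register, so $\Delta\ket{i_1}_k=d_{i_1}\ket{i_1}_k$ for some complex numbers $d_{i_1}$, and $\Delta$ acts trivially on the target and free qubits.

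First I would compute the left-hand side of the claimed equivalence, namely $\Delta$ applied after $F$: applying $F$ to $\ket{i_1}_k\otimes\ket{i_2}_l\otimes\ket{i_3}_f$ gives $\ket{i_1}_k\otimes(U_{i_1}\ket{i_2}_l)\otimes\ket{i_3}_f$, and then $\Delta$ multiplies by $d_{i_1}$, so the result is $d_{i_1}\,\ket{i_1}_k\otimes(U_{i_1}\ket{i_2}_l)\otimes\ket{i_3}_f$. Next I would compute the right-hand side, $F$ applied after $\Delta$: applying $\Delta$ first gives $d_{i_1}\,\ket{i_1}_k\otimes\ket{i_2}_l\otimes\ket{i_3}_f$, and then $F$, by linearity, gives $d_{i_1}\,\ket{i_1}_k\otimes(U_{i_1}\ket{i_2}_l)\otimes\ket{i_3}_f$. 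The two expressions coincide on every basis state, hence, extending by linearity, the two operators are equal, which is exactly the asserted circuit equivalence.

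There is no genuine obstacle here: the conceptual content is simply that $F$ is block diagonal with respect to the control register (it maps each subspace of states with a fixed control value $i_1$ to itself), while $\Delta$ acts as the scalar $d_{i_1}$ on that same subspace, and a block-diagonal operator commutes with one that is scalar on each block. The only point requiring care is the bookkeeping of tensor factors together with the fact that in this statement $\Delta$ is supported entirely on the $k$ control qubits, so that its action depends only on $i_1$; if one instead allowed $\Delta$ to act on the target qubits, the identity would fail in general.
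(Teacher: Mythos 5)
Your proof is correct and is simply the explicit version of what the paper dismisses with ``By inspection'': the UCG is block diagonal with respect to the control values $i_1$, while $\Delta$, being supported on the control register, acts as the scalar $d_{i_1}$ on each block, so the two operators commute. Your closing caveat---that the identity would fail in general if $\Delta$ were allowed to act on the target qubits---is accurate and correctly identifies the one hypothesis that matters.
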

\begin{proof}  By inspection. \end{proof}

\begin{lem} [$C_{k,n}(\sigma_{x})$,
  $k\leqslant\lceil\frac{n}{2}\rceil$] \label{prop1c} Let $n \geqslant
  5$ denote the total number of qubits considered and $k \in \{
  1,\dots, \lceil \frac{n}{2} \rceil\}$, then we can implement a
  $C_{k,n}(\sigma_{x})$ gate with at most $(8k-6)$ \cnots.
\end{lem}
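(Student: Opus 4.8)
The plan is to realize a $C_{k,n}(\sigma_x)$ gate using the well-known trick of borrowing the remaining $n-k-1$ qubits as dirty ancillas (ancillas in an unknown state that are returned unchanged). Since $k\leqslant\lceil n/2\rceil$, we have at least $k-1$ free qubits available to serve as workspace, which is exactly the number needed for the standard ``ladder'' construction. First I would recall the elementary identity that a $C_{k,n}(\sigma_x)$ gate with $k-1$ available dirty ancillas can be written as a sequence of Toffoli gates arranged in two nested V-shaped ladders: one ladder of Toffolis computes, into the ancilla register, the conjunction of the $k$ control bits (leaving partial conjunctions on the intermediate ancillas), a final Toffoli applies the $\sigma_x$ to the target, and then the first ladder is run in reverse to uncompute the ancillas, restoring their original values. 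Writing this out, one sees it uses $2(k-1)+1$ Toffoli gates for $k\geqslant 2$ (and it is just a single Toffoli or a \cnot{} for the small base cases), while leaving all ancilla qubits invariant regardless of their input state.

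Next I would count \cnots{} by invoking Lemma~\ref{lem1c} (Toffoli up to a diagonal) together with Lemma~\ref{commutation} (diagonal gates commute through UCGs, hence in particular through subsequent Toffoli-type gates). The key observation is that of the $2(k-1)+1$ Toffolis, most can be implemented only \emph{up to a diagonal gate} at a cost of 4 \cnots{} each (Lemma~\ref{lem1c}), because the spurious diagonals can be commuted to one end of the circuit where they either cancel against the diagonals of the uncomputation ladder or are absorbed. I would argue that the two ladders are mirror images, so when a 4-\cnot{} ``diagonal-Toffoli'' in the compute ladder is paired with its mirror in the uncompute ladder, the accumulated diagonals telescope; only a bounded number of genuine Toffolis (those adjacent to the target, and possibly the turning point of the ladder) need the full 6-\cnot{} or 8-\cnot{} treatment, or a diagonal correction of $\mathcal{O}(1)$ \cnots{} is tolerated. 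Carefully bookkeeping this pairing yields $8k-6$ \cnots{}: roughly $2(k-1)$ relative-phase Toffolis at 4 \cnots{} each gives $8k-8$, plus a constant overhead for handling the target-adjacent gate, with the constants arranged to land exactly on $8k-6$. I would verify the bound against the base case $k=1$ (a plain \cnot{}, cost $1\leqslant 8\cdot 1-6=2$) and $k=2$ (a Toffoli, cost $6=8\cdot 2-6$) to fix the constant.

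The main obstacle I anticipate is the diagonal bookkeeping: one must be precise about \emph{which} Toffolis in the two ladders can be downgraded to their relative-phase (4-\cnot{}) versions, and show that the leftover diagonal gates from all of them collectively either cancel (because the uncompute ladder is the exact inverse up to these phases) or can be pushed onto a single target qubit where they are absorbed into a single-qubit gate at no \cnot{} cost. A secondary subtlety is the edge of the range $k=\lceil n/2\rceil$, where the number of dirty ancillas is exactly $k-1$ with none to spare (when $n$ is even, $n-k-1=k-1$; when $n$ is odd with $k=(n+1)/2$, one must check $n-k-1=k-2$ still suffices, which it does because the ladder construction needs $k-1$ ancillas only when $k\geqslant 2$ and one can reuse the target's neighbor carefully, or one simply restricts to $k\leqslant\lfloor n/2\rfloor+1$ as the statement's hypothesis allows). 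Once the diagonal cancellations are pinned down, the \cnot{} count is a routine sum, so I expect the write-up to consist of one clearly-drawn ladder circuit, one lemma-application paragraph for the phase cancellation, and one line of arithmetic.
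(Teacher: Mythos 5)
There is a genuine gap in the core construction. The single compute--uncompute ladder you describe ($2(k-1)+1$ Toffolis: build the conjunction of the controls onto the ancilla register, flip the target, uncompute) only works when the ancillas are \emph{clean}, i.e.\ known to start in $\ket{0}$. Here the $n-k-1$ free qubits are arbitrary qubits of the computation, so the ancillas are dirty: the first Toffoli leaves $v_1\oplus c_1c_2$ on an ancilla whose unknown initial value is $v_1$, the next ladder step produces $v_2\oplus v_1c_3\oplus c_1c_2c_3$, and so on, so the bit that finally controls the target contains garbage terms involving the $v_i$. Uncomputing restores the ancillas but does not remove the garbage already imprinted on the target. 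This is exactly why the paper instead invokes Lemma~7.2 of~\cite{5}, whose network consists of an \emph{action part} and a \emph{reset part} totalling $4(k-2)$ Toffolis; the doubling is what cancels the garbage contributions on the target.

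Once the Toffoli count is corrected to $4(k-2)$, your arithmetic no longer lands on $8k-6$: at $4$ \cnots{} per relative-phase Toffoli you would get about $16k-32$. The paper reaches $8k-6$ by three further ingredients: (i) every Toffoli not acting on the target is implemented up to a diagonal gate with only $3$ \cnots{} (Lemma~\ref{lem1c}), and the spurious diagonals cancel in pairs by Lemma~\ref{commutation} because the network is palindromic; (ii) adjacent \cnots{} of consecutive relative-phase Toffolis cancel, so one reset part costs only $4(k-3)+3$ \cnots{}; (iii) only the two Toffolis of the action part that act on the target are implemented exactly, at $6$ \cnots{} each by Lemma~\ref{lem1c2}. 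This gives $2\bigl(4(k-3)+3\bigr)+12=8k-6$. (Your base-case check is also off: for $k=2$ the bound is $8\cdot2-6=10$, not $6$; the interesting range is $k\geqslant 3$.) Your instinct about diagonal bookkeeping and mirror-image cancellation is the right one, but it must be applied to the doubled Barenco network, and the count must use the $3$-\cnot{} relative-phase Toffoli together with the pairwise \cnot{} cancellations to reach the stated bound.
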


Note that the case $k=1$ is trivial and the case $k=2$ is implied by Lemma~\ref{lem1c2} (although we know of a tighter bound in both cases).

To illustrate the idea in the remaining cases, consider the
decomposition leading to the desired \cnot{} count for $k=4$, $n=7$.
Lemma~7.2 of~\cite{5} shows that \scalefont{1}{
\[
\Qcircuit @C=0.4em @R=.5em {
&&			&	&&&&&&		&     		&&&  \mbox{action part}										        	 	& 								 &      		&&       	               &        &        & &           	    &	\mbox{reset part}					&     &\\
&&		&	&&&&&&		&     		&	&									&         	 	& 								 &      		&&       	               &        &            &						& &           	&     &\\
   & \ctrl{1}\qw&\qw& &&&&&&	 \qw			& \qw
   &\qw								&
   \ctrl{1}\qw &	 \qw & \qw 		&\qw
   &\qw 		&\qw&& &\qw
   &		 \qw 	 	& \qw & \ctrl{1} \qw  &\qw									&\qw  	 	&\qw&		\\
   & \ctrl{1} \qw &\qw& &&&&&&	 \qw			& \qw
   &\qw								&
   \ctrl{3}\qw 	& \qw 	&\qw 		&\qw
   &\qw 		&\qw&& & \qw
   &	\qw 		& \qw & \ctrl{3}\qw 	&\qw								&\qw	 	&\qw&		\\
   & \ctrl{1} \qw &\qw&&&&&&& 	 \qw			& \ctrl{2}\qw 		&\qw &\qw 		&\qw	& \ctrl{2}\qw    &\qw&\qw			 &\qw &&&\qw &     \ctrl{2}\qw	 & \qw&\qw 		&\qw & \ctrl{2}\qw     &\qw&	\\
   & \ctrl{3} \qw &\qw&&&=&&&&	 \ctrl{2}\qw	&\qw
   &\qw	      &\qw 		&\qw	&\qw		&\qw
   &\ctrl{2}\qw	&\qw &&&  \qw		  &	\qw 		& \qw &\qw 		&\qw		&\qw		&\qw&					 \\
   &\qw &\qw&& &&&&&		 \qw			&   \ctrl{1}\qw 		&\qw         &\targ			&\qw	& \ctrl{1}\qw 	& \qw		&\qw 		&\qw &&&  \qw	  &       \ctrl{1}\qw& \qw &\targ			&\qw		& \ctrl{1}\qw 	 &\qw&            \\
   &\qw &\qw& &&&&&& 		 \ctrl{1}\qw 	&\targ		 	&\qw		& \qw 	 	&\qw 							&	\targ		&\qw		&\ctrl{1}\qw 	&\qw &&&  \qw	 	 & 	\targ		& \qw 							&\qw 		&\qw		&\targ	        &\qw&									\\
   &\targ &\qw&&&&&&&		 \targ		& 	\qw 		 	&\qw								&\qw 		&\qw 							&	\qw 		&	\qw						   		&\targ		&\qw &&& \qw								 &	\qw 		&\qw 							 &\qw 		&\qw 								&\qw 		 &\qw&								\\			 
   }
\]
}

However, we consider instead the alternative decomposition
\scalefont{1}{
\[
\Qcircuit @C=0.2em @R=.3em {
&&			&	&&&&&&		&     		&&&  \mbox{action part}										        	 	& 								 &      		&&       	               &        &        & &           	    &	\mbox{reset part}					&     &\\
&&		&	&&&&&&		&     		&	&									&         	 	& 								 &      		&&       	               &        &            &						& &           	&     &\\
   & \ctrl{1}\qw&\qw& &&&&&&	 \qw			& \qw  		 	&\qw								& \ctrl{1}\qw &	 \multigate{4}{\mathsmaller{\Delta_{1}}}& \qw 		&\qw								&\qw 		&\qw&& &\qw							        &		 \qw 	 	& \multigate{4}{\mathsmaller{\Delta_{1}}}& \ctrl{1} \qw  &\qw									&\qw  	 	&\qw&		\\
   & \ctrl{1} \qw &\qw& &&&&&&	 \qw			& \qw 		 	&\qw								& \ctrl{3}\qw 	& \ghost{\mathsmaller{\Delta_{1}}}	&\qw 		&\qw								&\qw 		&\qw&& & \qw							        &	\qw 		& \ghost{\mathsmaller{\Delta_{1}}}& \ctrl{3}\qw 	&\qw								&\qw	 	&\qw&		\\
   & \ctrl{1} \qw &\qw&&&&&&& 	 \qw			& \ctrl{2}\qw 		&\multigate{3}{\mathsmaller{\Delta_{0}}} &\qw 		&\ghost{\mathsmaller{\Delta_{1}}}	& \ctrl{2}\qw    &\multigate{3}{\mathsmaller{\Delta_{2}}}&\qw			 &\qw &&&\multigate{3}{\mathsmaller{\Delta_{2}}} &     \ctrl{2}\qw	 & \ghost{\mathsmaller{\Delta_{1}}}&\qw 		&\multigate{3}{\mathsmaller{\Delta_{0}}} & \ctrl{2}\qw     &\qw&	\\
   & \ctrl{3} \qw &\qw&&&=&&&&	 \ctrl{2}\qw	&\qw 		 	&\ghost{\mathsmaller{\Delta_{0}}}	      &\qw 		&\ghost{\mathsmaller{\Delta_{1}}}	&\qw		& \ghost{\mathsmaller{\Delta_{2}}}		&\ctrl{2}\qw	&\qw &&&  \ghost{\mathsmaller{\Delta_{2}}}		  &	\qw 		& \ghost{\mathsmaller{\Delta_{1}}}&\qw 		&\ghost{\mathsmaller{\Delta_{0}}}		&\qw		&\qw&					 \\
   &\qw &\qw&& &&&&&		 \qw			&   \ctrl{1}\qw 		&\ghost{\mathsmaller{\Delta_{0}}}         &\targ			&\ghost{\mathsmaller{\Delta_{1}}}	& \ctrl{1}\qw 	& \ghost{\mathsmaller{\Delta_{2}}}		&\qw 		&\qw &&&  \ghost{\mathsmaller{\Delta_{2}}}		  &       \ctrl{1}\qw& \ghost{\mathsmaller{\Delta_{1}}} &\targ			&\ghost{\mathsmaller{\Delta_{0}}}		& \ctrl{1}\qw 	 &\qw&            \\
   &\qw &\qw& &&&&&& 		 \ctrl{1}\qw 	&\targ		 	&\ghost{\mathsmaller{\Delta_{0}}} 		& \qw 	 	&\qw 							&	\targ		&\ghost{\mathsmaller{\Delta_{2}}}		&\ctrl{1}\qw 	&\qw &&&  \ghost{\mathsmaller{\Delta_{2}}}	 	 & 	\targ		& \qw 							&\qw 		&\ghost{\mathsmaller{\Delta_{0}}}		&\targ	        &\qw&									\\
   &\targ &\qw&&&&&&&		 \targ		& 	\qw 		 	&\qw								&\qw 		&\qw 							&	\qw 		&	\qw						   		&\targ		&\qw &&& \qw								 &	\qw 		&\qw 							 &\qw 		&\qw 								&\qw 		 &\qw&								\\			 
   }
\]
}

To see that this is also valid, note that the diagonal gates
$\Delta_i$ are of the same kind as introduced in Lemma~\ref{lem1c} and
therefore $\Delta_i=\Delta_i^{\dagger{}}$. By Lemma~\ref{commutation}
the two $\Delta_2$ and $\Delta_1$ gates cancel each other out. In
addition, the combination of all gates between the two $\Delta_0$
gates together correspond to a UCG acting only on the least
significant (lowest) qubit, and hence the two $\Delta_0$ gates cancel
out each other by Lemma~\ref{commutation}.

The Toffoli gates that don't act on the least significant qubit, can
be decomposed together with the diagonal gates using
Lemma~\ref{lem1c}. This leads to the following decomposition of the
  action part of the last circuit
\[
\Qcircuit @C=0.2em @R=.4em {
  &  \qw&  \qw					&\qw	&\qw	&\qw		&\qw			&\qw	&\qw&\qw	&\qw									&\ctrl{1}\qw&\multigate{4}{\mathsmaller{\Delta_{1}}}&	\qw	&\qw	&\qw		&\qw			&\qw	&\qw&\qw	&\qw							&\qw	&\qw	&		\\
 &  \qw&  \qw				&\qw	&\qw	&\qw		&\qw			&\qw	&\qw	&\qw	&\qw									&\ctrl{3}\qw& \ghost{\mathsmaller{\Delta_{1}}}	&\qw	&\qw	&\qw		&\qw			&\qw	&\qw	&\qw	&\qw							&\qw&	\qw	&		\\
  &  \qw	&  \qw			&\qw	&\ctrl{3}\qw&\qw&\qw			&\qw	&\ctrl{3}\qw &\qw&\qw						&\qw	& \ghost{\mathsmaller{\Delta_{1}}} 	&	\qw	&\ctrl{3}\qw &     \qw&\qw			&\qw  	&\ctrl{3}\qw&\qw&\qw				&\qw&\qw&			\\
   &  \ctrl{2}&  \qw			&\qw	&\qw	&\qw		&\qw			&\qw	&\qw	&\qw	& \qw									&\qw&	 \ghost{\mathsmaller{\Delta_{1}}}	&	\qw	&\qw	&\qw		&\qw			&\qw	&\qw	&\qw	&\qw							&\ctrl{2}&\qw	&	 \\
   &  \qw	&  \qw			&\qw	&\qw	&\qw	&    \ctrl{1}\qw	&\qw	&\qw	&	\qw	&    \qw								&\targ		& \ghost{\mathsmaller{\Delta_{1}}} & \qw	&\qw	&\qw	&    \ctrl{1}\qw	&\qw	&\qw	&	\qw	&    \qw						&\qw	&	\qw	&	\\
 &   \ctrl{1}	&  \qw			&\gate{\mathsmaller{A^{\dagger}}} &\targ&\gate{\mathsmaller{A^{\dagger}}}&\targ&\gate{\mathsmaller{A}} &\targ&\gate{\mathsmaller{A}}&\qw&\qw &\qw	&\gate{\mathsmaller{A^{\dagger}}} &\targ&\gate{\mathsmaller{A^{\dagger}}}&\targ&\gate{\mathsmaller{A}} &\targ&\gate{\mathsmaller{A}}&\qw&\ctrl{1}	&\qw	&		\\
 &\targ	&  \qw&\qw&\qw&\qw		&\qw	 &\qw	&\qw&\qw&\qw									&\qw &\qw		&	\qw&\qw&\qw		&\qw	 &\qw	&\qw&\qw&\qw	&\targ	&\qw		
\gategroup{3}{8}{6}{10}{.7em}{.} \gategroup{3}{14}{6}{16}{.7em}{.}  }
\]
where $A=R_{y}(\frac{\pi}{4})$.  The marked gates cancel each other
out, because they commute with the gates between them. The reset part
can be decomposed analogously.

\begin{table*}[!th] 
\renewcommand{\arraystretch}{1.1}
\caption{\cnot{} counts and numbers of real parameters that can be
  introduced into a circuit by a specific gate, for various controlled
  gates.}
\label{tab:counts_overview}
\centering
\begin{ruledtabular}
\begin{tabular}{lccc}

Gate & Notation&  \cnot{} count (upper bound)&\# Real parameters   \\ \hline
UCG (up to a diagonal gate)&$  \Delta C^{\textnormal{u}}_{n-1}(U)$&$2^{n-1}-1$ \cite{10}&$2^n$\\ [0.04cm]   
Uniformly controlled rotation&$C^{\textnormal{u}}_{n-1}(R_z)$/$C^{\textnormal{u}}_{n-1}(R_y)$&$2^{n-1}$~\cite{diag,unif_rot}&$2^{n-1}$ \\ [0.04cm]  
Multi controlled unitary gate&$C_{n-1,n}(U)$& $16n^2-60n+42$ if $n\geqslant
3$ (Thm.~\ref{ThmC2})&4\\ [0.04cm]  
Multi controlled special unitary gate&$C_{n-1,n}(W)$& $28n-88$ if $n\geqslant 8$ is even (Thm.~\ref{ThmC})&3\\
 											     &($W \in SU(2)$)&$28n-92$ if $n\geqslant 8$ is odd (Thm.~\ref{ThmC})&\\ [0.04cm]  
Multi controlled \nt{} gate&$C_{k,n}(\sigma_x)$& $8k-6$ if $n\geqslant 5$, $k  \in \{  3,\dots,   \lceil  \frac{n}{2}   \rceil\}$ (Lemma~\ref{prop1c})&0
\end{tabular}
\end{ruledtabular}
\end{table*}

\renewcommand{\arraystretch}{1}

\begin{proof} [Proof of Lemma~\ref{prop1c}]
  First we apply Lemma~7.2 of~\cite{5} (a circuit diagram for the case
  $k=5$ and $n=9$ can be found in~\cite{5}). By similar arguments as
  used in the special case above, we introduce a corresponding
  diagonal gate for each Toffoli gate apart from the two that act on
  the least significant qubit (i.e., on the target qubit of the
  $C_{k,n}(\sigma_{x})$ gate).

  The required \cnot{} count for $C_{k,n}(\sigma_{x})$ is thus equal
  to twice that required for the reset part plus the number of
  \cnots{} needed to implement the Toffoli gates that form the first
  and last gate in the action part. By Lemma~\ref{lem1c2}, the two
  Toffoli gates can be decomposed using 12 \cnots. One reset part uses
  $N^{\text{reset}}_{C_{k,n}(\sigma_{x})}=4(k-3)+3$ \cnots. This leads
  to the claimed count.
\end{proof}

\begin{lem} [$C_{k,n}(\sigma_{x})$~{\cite[Lemma 7.3]{5}}]  \label{prop1c2}
Let $n \geqslant 5$ denote the total number of qubits considered. A
$C_{n-2,n}(\sigma_{x})$ gate can be decomposed into two
$C_{k,n}(\sigma_{x})$ and two $C_{n-k-1,n}(\sigma_{x})$ gates, where
$k \in \{ 2,3,\dots,n-3  \}$.\end{lem}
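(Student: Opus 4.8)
The plan is to exhibit an explicit four-gate circuit implementing $C_{n-2,n}(\sigma_x)$ and to verify its correctness by tracking the action on computational basis states (this is essentially the construction of \cite[Lemma~7.3]{5}, reproduced here). Since the $n-2$ control qubits together with the single target qubit occupy only $n-1$ of the $n$ wires, there is (exactly) one spare qubit; call it the \emph{work} qubit $w$, and crucially we do \emph{not} assume it starts in any particular state. I would partition the $n-2$ control qubits into two disjoint groups $A$ and $B$ with $|A|=k$ and $|B|=n-2-k$; the hypothesis $k\in\{2,\dots,n-3\}$ guarantees both groups are nonempty and, more to the point, that $k\geqslant 2$ and $n-k-1\geqslant 2$, so each of the gates appearing below genuinely has the form $C_{k,n}(\sigma_x)$ or $C_{n-k-1,n}(\sigma_x)$.

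Concretely, I would write down the operator
\[
C=\bigl(C_{n-k-1,n}(\sigma_x)\bigr)\,\bigl(C_{k,n}(\sigma_x)\bigr)\,\bigl(C_{n-k-1,n}(\sigma_x)\bigr)\,\bigl(C_{k,n}(\sigma_x)\bigr),
\]
read right to left, where the first $C_{k,n}(\sigma_x)$ has controls $A$ and target $w$, the first $C_{n-k-1,n}(\sigma_x)$ has controls $B\cup\{w\}$ (a set of size $(n-2-k)+1=n-k-1$) and target the original target $t$, and then these same two gates are applied once more in the same roles. Note that for the second and fourth gates the control set $B\cup\{w\}$ is disjoint from the untouched group $A$, so all gates are well defined on the $n$ qubits.

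The key step is then a direct bookkeeping on a basis state. Write $a=\bigwedge_{j\in A}x_j$ and $b=\bigwedge_{j\in B}x_j$ for the conjunctions of the relevant control bits, and let $w_0$ be the initial value of the work qubit. Applying the gates in order: after the first, $w\mapsto w_0\oplus a$; after the second, $t\mapsto t\oplus\bigl(b\wedge(w_0\oplus a)\bigr)$; after the third, $w$ returns to $w_0$ (since the bits of $A$ are untouched throughout); after the fourth, $t\mapsto t\oplus\bigl(b\wedge(w_0\oplus a)\bigr)\oplus(b\wedge w_0)$. By distributivity of $\wedge$ over $\oplus$ the net flip of $t$ equals $b\wedge a=\bigwedge_{j\in A\cup B}x_j$, which is $1$ exactly when all $n-2$ control bits are $1$, while $w$ is returned to $w_0$ regardless of its value. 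Hence $C$ agrees with $C_{n-2,n}(\sigma_x)$ on every computational basis state, so $C=C_{n-2,n}(\sigma_x)$.

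I do not expect a real obstacle here; the only subtle points are that the work qubit is correctly ``uncomputed'' independently of its initial state (so that no ancilla preparation is needed — this is what makes a \emph{free} qubit suffice) and that the two partial flips of $t$ combine by $\mathbb{Z}_2$-linearity into the desired conjunction, both of which the basis-state computation above makes transparent. One should also record the edge conditions: $n\geqslant 5$ is precisely what makes $\{2,\dots,n-3\}$ nonempty, and for every $k$ in this range both group sizes and both control counts lie in the admissible ranges, so the decomposition holds for the full claimed set of $k$.
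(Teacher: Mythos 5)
Your construction is exactly the one the paper illustrates (for $n=7$, $k=4$) and attributes to Lemma~7.3 of~\cite{5}: the $k$-controlled gate targets the single free qubit $w$, the $(n-k-1)$-controlled gate has controls $B\cup\{w\}$ and targets $t$, and the pair is applied twice. The paper itself only cites the result and draws the example circuit, whereas you supply the (correct) basis-state verification, including the important observations that $w$ is uncomputed regardless of its initial value and that the two partial flips of $t$ combine by $\mathbb{Z}_2$-linearity into the full conjunction; this is a faithful and complete proof of the same decomposition.
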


For example, the decomposition for $n=7$ and $k=4$ is shown in the following circuit diagram.
\[
\Qcircuit @C=1.0em @R=.7em {
  & \ctrl{1} & \qw &&&& 	\qw 	&\ctrl{1} 		&\qw 	&\ctrl{1}  &\qw 	&\qw 		 \\ 
   & \ctrl{1} & \qw &&&& 	\qw 	&\ctrl{1} 		&\qw 	&\ctrl{1}  &\qw 	&\qw 		 \\ 
   &   \ctrl{1} &\qw&&&&	\qw 	& \ctrl{1} 		&\qw  	 &\ctrl{1} &\qw 	&\qw 		 \\
    &\ctrl{1}  & \qw&&=&& 	\qw 	& \ctrl{2} 		&\qw  	& \ctrl{2}	&\qw &\qw 		 \\
    &\ctrl{2}& \qw&&&&	\qw 	&\qw			&\ctrl{1}	&\qw&\ctrl{1}&\qw  			\\
    &  \qw&\qw&&&&		\qw 	& \targ		&\ctrl{1}	 &\targ &\ctrl{1}	&\qw 		 \\
     & \targ& \qw&&&&		\qw  &\qw			&\targ	&\qw	&\targ&\qw 			 \\
}
\]

\begin{thm} [$C_{n-1,n}(U)$] \label{ThmC2} Let $n\geqslant 3$ and $U$
  be a single-qubit unitary. We can decompose a $C_{n-1,n}(U)$ gate
  using at most $16n^2-60n+42$ \cnots{}.
\end{thm}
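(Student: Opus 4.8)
The plan is to prove the bound by induction on $n$, using the standard ``peel off one control'' recursion of~\cite{5}, of which Lemma~\ref{lem6.1} is the $n=3$ instance. For the base case $n=3$, Remark~\ref{rmk:6} shows that the optimized Toffoli topology of Lemma~\ref{lem1c2} (with suitably adjusted single-qubit gates) realizes any $C_{2,3}(U)$ with $6$ \cnots{}, which matches $16\cdot 9-60\cdot 3+42=6$.

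For the inductive step ($n\geqslant 4$) I would write $U=V^2$ for a square root $V$ of $U$ (which exists in $U(2)$), let $c_1,\dots,c_{n-1}$ be the controls and $q$ the target, and generalize Lemma~\ref{lem6.1} by replacing its single top control wire with the whole block $c_1,\dots,c_{n-2}$. This gives the circuit identity (read right to left)
\[
C_{n-1,n}(U)=C_{n-2,n-1}(V)\cdot C_{n-2,n-1}(\sigma_x)\cdot C_1(V^\dagger)\cdot C_{n-2,n-1}(\sigma_x)\cdot C_1(V),
\]
where each $C_1(\cdot)$ acts on the pair $(c_{n-1},q)$, each $C_{n-2,n-1}(\sigma_x)$ flips $c_{n-1}$ conditioned on $c_1,\dots,c_{n-2}$, and the outer $C_{n-2,n-1}(V)$ applies $V$ to $q$ conditioned on $c_1,\dots,c_{n-2}$. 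Correctness follows from a short case analysis on the control values: if all $n-1$ controls equal $1$ the target picks up $V\cdot V=U$; if $c_{n-1}=1$ but some $c_j=0$ the two $C_1$ gates compose to $VV^\dagger=I$ while the rest is trivial; and if $c_{n-1}=0$ one checks that $c_{n-1}$ is toggled to $1$, possibly hit by $V^\dagger$, toggled back, and then corrected by the outer $V$, leaving the joint state and $c_{n-1}$ unchanged.

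The \cnot{} count then follows by adding the pieces. Each $C_1(V)$ and $C_1(V^\dagger)$ costs $2$ \cnots{} by Lemma~\ref{cor5.3}, since the diagonal gate appearing there sits on the control qubit and is therefore absorbed into a free single-qubit gate. The outer $C_{n-2,n-1}(V)$ is exactly the theorem's statement with $n$ replaced by $n-1$, so by the induction hypothesis it needs at most $16(n-1)^2-60(n-1)+42$ \cnots{}. For each $C_{n-2,n-1}(\sigma_x)$ the qubit $q$ is idle, so it is available as a (dirty, hence borrowed) ancilla; relabelling, this is a $C_{n-2,n}(\sigma_x)$ gate on $n$ qubits, which by Lemma~\ref{prop1c2} with split parameter $k=\lfloor n/2\rfloor$ (valid in its range $2\leqslant k\leqslant n-3$ for $n\geqslant 5$) decomposes into two $C_{k,n}(\sigma_x)$ and two $C_{n-k-1,n}(\sigma_x)$ gates, and since then $k,n-k-1\leqslant\lceil n/2\rceil$, Lemma~\ref{prop1c} bounds the total by $2(8k-6)+2(8(n-k-1)-6)=16(n-1)-24=16n-40$; the leftover case $n=4$ is handled directly because $C_{2,3}(\sigma_x)$ is an ordinary Toffoli ($6\leqslant16\cdot4-40$). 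Writing $f(n)$ for the \cnot{} count, we obtain $f(n)\leqslant f(n-1)+2(16n-40)+4=f(n-1)+32n-76$ with $f(3)=6$, and solving this recursion yields $f(n)\leqslant 6+\sum_{j=4}^{n}(32j-76)=16n^2-60n+42$, as claimed.

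The step that needs the most care in the full write-up is the ancilla bookkeeping: one must check that the spare qubit used inside Lemma~\ref{prop1c2} is genuinely returned to its input state on every branch (so that nesting the construction is legitimate), that the index constraints of Lemmas~\ref{prop1c2} and~\ref{prop1c} are met for all $n\geqslant 5$ (with the $k\in\{1,2\}$ sub-Toffolis covered by the over-estimate $8k-6$), and that the small cases $n=3,4$ agree with the closed form. Solving the recursion and verifying the algebraic identity $16(n-1)^2-60(n-1)+42+32n-76=16n^2-60n+42$ is routine.
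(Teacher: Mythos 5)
Your proposal is correct and follows essentially the same route as the paper: the same circuit identity (Lemma~7.5 of~\cite{5}) peeling off one control via $U=V^2$, the same treatment of the two $C_{n-2}(\sigma_x)$ gates through Lemmas~\ref{prop1c2} and~\ref{prop1c} with split $k=\lfloor n/2\rfloor$ giving $16n-40$ each, the same base case $N_{C_{2,3}(U)}=6$ from Remark~\ref{rmk:6}, and the same recursion $f(n)=f(n-1)+32n-76$. The only cosmetic difference is that you phrase it as an explicit induction, whereas the paper states the recursion and solves it directly.
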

\begin{proof}
The idea is contained in the following diagram in which $V$ is chosen
such that $V^2=U$ (see Lemma~7.5 of~\cite{5}).
\[
\Qcircuit @C=1.0em @R=.7em {
n-2&&&\backslash\qw   & \ctrl{2} & \qw &&&&&\backslash\qw & \qw 	&\ctrl{1} &\qw 	& \ctrl{1}	& \ctrl{2} &\qw 		 \\ 
    &&&\qw&    \ctrl{1}&\qw &&=&&&\qw&	\ctrl{1} &\targ	& \ctrl{1}&\targ	&\qw  	&\qw 	 \\
    &&&\qw&   \gate{U}&\qw &&&&&\qw&	\gate{V}&\qw 	& \gate{V^\dagger}&\qw 	& \gate{V}	&\qw 	 \\
}
\]
Using Lemma~\ref{cor5.3}, this gives the relation
$N_{C_{n-1,n}(U)}=N_{C_{n-2,n}(U)}+4+2N_{C_{n-2,n}(\sigma_x)}$. For
simplicity, we consider the $C_{n-2,n}(U)$ gate as a $C_{n-2,n-1}(U)$
gate. This will lead to an overcount in our final \cnot{} count. Using Lemma~\ref{prop1c2}  we have
$N_{C_{n-2,n}(\sigma_x)}= 2(N_{C_{\lceil
    n/2\rceil-1,n}(\sigma_x)}+N_{C_{\lfloor n/2\rfloor,n}(\sigma_x)})$
for $n\geqslant 5$ and hence, from Lemma~\ref{prop1c},
$N_{C_{n-2,n}(\sigma_x)}\leq 16n-40$ for $n\geqslant 5$. Note that
Lemma~\ref{lem1c2} implies that the same bound also holds for $n=4$
(although we know of a tighter bound in this case). Thus, we
wish to solve the recursion
$N_{C_{n-1,n}(U)}=N_{C_{n-2,n-1}(U)}+32n-76$. Noting that
$N_{C_{2,3}(U)}=6$ (cf.\ Remark~\ref{rmk:6}) we obtain the stated
count.
\end{proof}

Note that this count could be improved.  However, it turns out that
the case $W\in SU(2)$ is particularly useful.  In this case we make more
effort with the optimizations leading to the following.

\begin{thm} [$C_{n-1,n}(W)$, where $W\in SU(2)$] \label{ThmC}
  Let $n \geqslant 8$ and $W \in SU(2)$. We can decompose a
  $C_{n-1,n}(W)$ gate using at most $(28n-88)$ \cnots{} if $n$ is even
  and $(28n-92)$ \cnots{} if $n$ is odd.
\end{thm}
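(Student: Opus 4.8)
The plan is to exploit the extra structure available when $W \in SU(2)$. By Lemma~\ref{ZYZ}, any such $W$ can be written as $W = A\sigma_x B\sigma_x C$ with $A,B,C \in SU(2)$ and $ABC = I$ (the ``$ABC$ identity'' of~\cite{5}); this is what makes $C_{1,2}(W)$ decomposable with only $2$ \cnots{} and no diagonal gate (the $\delta=0$ case of Lemma~\ref{cor5.3}), and more generally lets one build $C_{k,n}(W)$ out of single-qubit gates and just \emph{two} multi-controlled \nt{} gates. The whole difficulty is that a $C_{n-1,n}(\sigma_x)$ gate on $n$ qubits has no free qubit to use as scratch, so the efficient decompositions of Lemmas~\ref{prop1c} and~\ref{prop1c2} do not apply to it directly; the count $28n$ will come from arranging things so that only multi-controlled \nt{}s on $n-1$ of the $n$ qubits (hence with one ``borrowed'' qubit available) ever appear.

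Concretely, I would first apply the recursion behind Theorem~\ref{ThmC2} (Lemma~7.5 of~\cite{5}) \emph{once}, choosing $V\in SU(2)$ with $V^2 = W$ (possible precisely because $W\in SU(2)$). This rewrites $C_{n-1,n}(W)$ as a product of $C_{1,2}(V)$, $C_{1,2}(V^\dagger)$ (each $2$ \cnots{}, by the $SU(2)$ case of Lemma~\ref{cor5.3}), two $C_{n-2,n}(\sigma_x)$ gates that target the ``sacrificed'' control and hence leave the original target qubit free, and one $C_{n-2,n}(V)$ gate that leaves the sacrificed control free. Since $V\in SU(2)$, I would then apply the $ABC$ identity to this $C_{n-2,n}(V)$ gate (rather than recursing again, which would reintroduce a quadratic blow-up), turning it into single-qubit gates and two further $C_{n-2,n}(\sigma_x)$ gates using the freed control as scratch. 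The entire \cnot{} cost is now carried by four multi-controlled \nt{} gates on $n-1$ of the $n$ qubits, each with a borrowed qubit, plus $4$ \cnots{} for the two $C_{1,2}(V^{(\dagger)})$ gates.

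Next I would count and optimise these four $C_{n-2,n}(\sigma_x)$ gates. Each splits by Lemma~\ref{prop1c2} into two $C_{\lceil n/2\rceil-1,n}(\sigma_x)$ and two $C_{\lfloor n/2\rfloor,n}(\sigma_x)$ gates, which are decomposed by the ``action part plus reset part'' construction of Lemma~\ref{prop1c}; this is where the parity of $n$ enters (through the $\lceil\cdot\rceil$/$\lfloor\cdot\rfloor$) and where $n\geqslant 8$ is needed so that every sub-gate has at least three controls. Taking all the internal Toffolis up to a diagonal (Lemma~\ref{lem1c}, $3$ \cnots{} each), collecting the resulting diagonal gates through the uniformly controlled gates to the ends of the circuit where they cancel or merge (Lemma~\ref{commutation}), and decomposing only the few genuinely necessary full Toffolis by the $6$-\cnot{} circuit of Lemma~\ref{lem1c2}, already improves on the naive $8k-6$ bound. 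The further key saving is that in each $ABC$ sandwich the two multi-controlled \nt{}s meet in the pattern $(\text{setup})(\text{flip})(\text{setup}^\dagger)$, and by choosing the borrowed qubit so that the ``setup'' subcircuits act only on control and scratch qubits, they commute past the single-qubit gate sitting between the two multi-controlled \nt{}s (which lives on the target) and telescope; the same cancellation applies to the two $C_{n-2,n}(\sigma_x)$ gates coming from the recursion, which are separated only by $C_{1,2}(V^\dagger)$. Summing the resulting geometric-type series yields the stated counts.

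The main obstacle is precisely this last bit of bookkeeping: tracking which ``setup'' blocks and which diagonal factors cancel across the recursion/$ABC$ boundaries, making sure no Toffoli is double-counted or wrongly assumed free, and carrying the floor/ceiling arithmetic through so that the additive constants come out exactly as $-88$ (even $n$) and $-92$ (odd $n$) rather than merely to within $\cO(1)$. A secondary point to check carefully is that $n\geqslant 8$ really is enough for every sub-decomposition invoked --- in particular that Lemma~\ref{prop1c} applies with $k\geqslant 3$ at each level of the splitting --- with the small cases $n<8$ left aside.
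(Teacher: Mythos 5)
Your low-level toolbox is the right one---splitting $(n-2)$-controlled \nt{}s via Lemma~\ref{prop1c2} into sub-gates with a borrowed qubit, implementing the internal Toffolis up to diagonal gates (Lemma~\ref{lem1c}), and cancelling reset parts across commuting gates (Lemma~\ref{commutation}) is exactly how the paper finishes---but your top-level reduction cannot reach the stated bound. Applying the square-root recursion of Theorem~\ref{ThmC2} (Lemma~7.5 of~\cite{5}) once leaves two $C_{n-2,n}(\sigma_x)$ gates \emph{and} a residual $C_{n-2,n}(V)$; expanding the latter with the $ABC$ identity costs two \emph{further} $C_{n-2,n}(\sigma_x)$ gates, so your circuit contains four $(n-2)$-controlled \nt{}s. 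After Lemma~\ref{prop1c2} and every optimization you describe, a pair of such gates still costs $28n-94$ \cnots{} (this is the paper's fully optimized price: the theorem's total minus $6$ for the singly controlled gates), so four of them land you near $56n$, roughly twice the claimed $28n-88$. The telescoping you invoke cannot close this factor of two: the two \nt{}s produced by Lemma~7.5 target the sacrificed qubit and are separated by $C_{1,2}(V^{\dagger})$, which is controlled on that very qubit and hence does not commute through them, and the reset-part cancellations only ever remove an $O(k)$ piece of each sub-gate, never half the circuit. Even merging the two adjacent \nt{}s with distinct targets into a single multi-controlled $\sigma_x\otimes\sigma_x$ (at the price of two extra \cnots{}) still leaves three large gates and a count near $42n$.

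The paper avoids the doubling by using Lemma~7.9 of~\cite{5} instead: for $W\in SU(2)$ one can write $C_{n-1,n}(W)=C_1(C)\,C_{n-2,n}(\sigma_x)\,C_1(B)\,C_{n-2,n}(\sigma_x)\,C_1(A)$, where $A,B,C\in SU(2)$ are \emph{singly controlled on the $(n-1)$th qubit} (two \cnots{} each by Lemma~\ref{cor5.3}) and the two $C_{n-2,n}(\sigma_x)$ gates are controlled on the remaining $n-2$ qubits and target the last qubit, so that the $(n-1)$th qubit is free to be borrowed. This is your $ABC$ identity applied directly at the top, with the control set split so that no square root and no residual $C_{n-2,n}(V)$ ever appear; only two large multi-controlled \nt{}s remain, and the bookkeeping you outline (with $k_2=\lceil n/2\rceil$, $k_1=n-k_2-1$, the $8k_1-14$ count for the $C_{k_1,n}(\sigma_x)$ gates, and the $4k_2+3$ count for the two $C_{k_2,n}(\sigma_x)$ gates whose reset parts commute past the controlled $B$ and cancel) then gives exactly $28n-88$ for even $n$ and $28n-92$ for odd $n$.
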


\begin{proof}  
To aid the proof, we provide illustrations for the case $n=8$. By Lemma 7.9 of~\cite{5} there exist quantum gates $A,B,C \in SU(2)$ such that we can decompose the $C_{n-1,n}(W)$ gate as follows.
\[
\Qcircuit @C=1.0em @R=.7em {
  & \ctrl{1} & \qw &&&& \qw 	&\ctrl{1} &\qw 	& \ctrl{1}	& \qw &\qw 		 \\ 
   & \ctrl{1} & \qw &&&& \qw 	&\ctrl{1} &\qw 	& \ctrl{1}	& \qw &\qw 		 \\ 
   &   \ctrl{1} &\qw&&&&\qw 	& \ctrl{1}	&\qw & \ctrl{1}	& \qw&\qw 		 \\
    & \ctrl{1} & \qw&&&& \qw 	&\ctrl{1}	&\qw & \ctrl{1}	 & \qw &\qw 		 \\
    &  \ctrl{1} &\qw&&=&&\qw 	& \ctrl{1}&\qw 	 & \ctrl{1}	& \qw &\qw 		 \\
    &  \ctrl{1} &\qw &&&&\qw 	& \ctrl{2} &\qw & \ctrl{2}	&\qw  &\qw 		 \\
    &    \ctrl{1}&\qw &&&&	\ctrl{1} &\qw	& \ctrl{1}&\qw	&\ctrl{1}  	&\qw 	 \\
    &   \gate{W}&\qw &&&&	\gate{A}&\targ  	& \gate{B}&\targ 	& \gate{C}	&\qw 	 \\
}
\]

By Lemma~\ref{prop1c2} we can decompose the $C_{n-2,n}(\sigma_{x})$
gates using two $C_{k_1,n}(\sigma_{x})$ and two
$C_{k_2,n}(\sigma_{x})$ gates, where we set $k_2=\lceil n/2 \rceil$
and $k_1=n-k_2-1$. In our example $k_1=4$ and $k_2=3$:
\[
\Qcircuit @C=1.0em @R=.7em {
& \qw&\ctrl{1}& \qw  	&\ctrl{1}& \qw  	&				\qw 	&	\qw&\ctrl{1}& \qw  	&\ctrl{1}& \qw&\qw   	 		 \\ 
& \qw&\ctrl{1}& 	\qw	&\ctrl{1}& 	\qw	&			\qw 	&	\qw&\ctrl{1}& 	\qw	&\ctrl{1}& 	\qw	&\qw 		 \\ 
  &\qw&\ctrl{4}& 	\qw	&\ctrl{4}& 	\qw	&				\qw & 	\qw	&\ctrl{4}& 	\qw	&\ctrl{4}& 	\qw	&\qw 		 \\
& \qw &\qw &	 \ctrl{1}	&\qw &	 \ctrl{1}	&		\qw & 	\ctrl{1}&\qw &	 \ctrl{1}	&\qw &\qw&\qw 	 		 \\
&\qw& \qw &	 \ctrl{1}	& \qw &	 \ctrl{1}	&		\qw 	 &	\ctrl{1}& \qw &	 \ctrl{1}	& \qw &\qw&\qw 		 \\
&\qw& \qw &	 \ctrl{1}	& \qw &	 \ctrl{1}	&		\qw &	\ctrl{1}& \qw &	 \ctrl{1}	& \qw &\qw&\qw 	 		 \\
 &\ctrl{1}&\targ&  \ctrl{1}	&\targ&  \ctrl{1}	&			 \ctrl{1}&	\ctrl{1}&\targ&  \ctrl{1}	&\targ&\ctrl{1}&\qw 	 		 \\
 &	\gate{A}& \qw&\targ &\qw&\targ &				 \gate{B}&	\targ	 & \qw&\targ &\qw&	 \gate{C}	&\qw
 \gategroup{4}{6}{8}{8}{.7em}{--}  
}
\]

\begin{figure*}[!t] 
\centering

\begin{equation*}
  \mathbf{\ket{\psi^{\textnormal{e}}}} = \begin{array}{c@{\!\!\!}l}
  \left[ \begin{array}[c]{ccccc}
    0\\
   \vdots \\
   0\\
    \hline
    c_{a^k_{s}} \\
     c_{a^k_{s}+1} \\
    \hline
    c_{a^k_{s }+2} \\
     c_{a^k_{s }+3} \\
      c_{a^k_{s } +4}\\
     c_{a^k_{s  } +5} \\
      \vdots \\
       c_{2^{n-s}-2}   \\
     c_{2^{n-s} -1 }  \\
  \end{array}  \right]

\end{array}
\phantom{a}
 \xrightarrow{\phantom{a} A \phantom{a}}  
  \mathbf{\ket{\psi'^{\textnormal{e}}}} = \begin{array}{c@{\!\!\!}l}
  \left[ \begin{array}[c]{ccccc}
    0\\
   \vdots \\
   0\\
    \hline
   c'_{a^k_{s+1}} \\
    0 \\
    \hline
    c'_{a^k_{s+1}+1} \\
     0   \\
        c'_{a^k_{s+1}+2} \\
    0   \\
      \vdots \\
      c'_{2^{n-(s+1)}-1}  \\
   0  \\
  \end{array}  \right]

\end{array}
 \phantom{a}, \phantom{aaaa}
  \mathbf{\ket{\psi^{\textnormal{e}}}} = \begin{array}{c@{\!\!\!}l}
  \left[ \begin{array}[c]{ccccc}
    0\\
   \vdots \\
   0\\
    \hline
   0 \\
    c_{a^k_{s} }\\
    \hline
     c_{a^k_{s}+1} \\
     c_{a^k_{s}+2 }  \\
      c_{a^k_{s} +3}  \\
   c_{a^k_{s} +4   }\\
      \vdots \\
       c_{2^{n-s}-2}    \\
     c_{2^{n-s}-1}    \\
  \end{array}  \right]
\end{array}
\phantom{a}
 \xrightarrow{\phantom{a} A \phantom{a}}  
  \mathbf{\ket{\psi'^{\textnormal{e}}}} = \begin{array}{c@{\!\!\!}l}
  \left[ \begin{array}[c]{ccccc}
    0\\
   \vdots \\
   0\\
    \hline
   0 \\
    c'_{a^k_{s+1}} \\
    \hline
    0  \\
    c'_{a^k_{s+1}+1}   \\
       0  \\
     c'_{a^k_{s+1}+2}     \\
      \vdots \\
     0  \\
    c'_{2^{n-(s+1)}-1}    \\
  \end{array}  \right]
  
\end{array}
\end{equation*}

\caption{Using a quantum gate $A$ to disentangle the $(n-s)$th qubit into the state $k_s=0$ or $k_s=1$ respectively.}
\label{fig:disentanglement_matrix_not}
\end{figure*}

Since the $C_{n-2,n}(\sigma_{x})$ gate is its own inverse, we can use
the inverted decomposition scheme to decompose the second
$C_{n-2,n}(\sigma_{x})$ gate.  We can decompose the gates
$C_{k_1,n}(\sigma_{x})$ and $C_{k_2,n}(\sigma_{x})$ using
Lemma~\ref{prop1c}. Note that this works for all $n\geqslant 8$, since
$3 \leqslant k_1,k_2\leqslant \lceil n/2 \rceil$. We can lower the
\cnot{} count with some technical tricks. As in the proof of
Corollary~7.4 of~\cite{5} we can decompose all Toffoli gates not
acting on the least significant qubit up to diagonal gates. This can
be seen by reversing the decomposition scheme of Lemma~\ref{prop1c}
for the second and fourth $C_{k_1,n}(\sigma_{x})$ gate and using
Lemma~\ref{commutation}.  Therefore, using the same technique as in
Lemma~\ref{prop1c}, but implementing all Toffoli gates up to diagonal
gates, we can decompose each of the $C_{k_1,n}(\sigma_{x})$ gates
using $N_{C_{k_1,n}(\sigma_{x})}-2\cdot6+2\cdot2=8k_1-14$ \cnots.

Now consider the marked part of the last circuit. By
Lemma~\ref{prop1c} this can be decomposed using
\[
\Qcircuit @C=0.45em @R=.38em {
& \qw& \qw& \qw& \qw& \qw& \qw& \qw& \qw& \qw& \qw& \qw& \qw& \qw& \qw& \qw& \qw& \qw & \qw \\ 
& \qw&\ctrl{1}&\targ&\ctrl{1}& \qw&\ctrl{1}&\targ&\ctrl{1}&\qw&\ctrl{1}&\targ&\ctrl{1}& \qw&\ctrl{1}&\targ&\ctrl{1}& \qw& \qw  \\ 
& \ctrl{4}&\targ& \qw&\targ&\ctrl{4}&\targ& \qw&\targ&\qw&\targ& \qw&\targ& \ctrl{4}&\targ& \qw&\targ&\ctrl{4}& \qw  \\ 
& \qw& \qw& \ctrl{-2}& \qw&\qw&\qw& \ctrl{-2}&\qw&\qw&\qw& \ctrl{-2}&\qw& \qw& \qw& \ctrl{-2}& \qw&\qw& \qw  \\ 
& \qw& \qw& \ctrl{-1}& \qw&\qw& \qw& \ctrl{-1}& \qw&\qw& \qw& \ctrl{-1}& \qw& \qw& \qw& \ctrl{-1}& \qw&\qw & \qw \\ 
& \qw& \ctrl{-3}& \qw& \ctrl{-3}& \qw&\ctrl{-3}& \qw& \ctrl{-3} & \qw &\ctrl{-3}& \qw& \ctrl{-3}& \qw& \ctrl{-3}& \qw& \ctrl{-3}& \qw& \qw \\ 
& \ctrl{1}& \qw& \qw& \qw& \ctrl{1}& \qw& \qw& \qw&\ctrl{1}& \qw& \qw& \qw& \ctrl{1}& \qw& \qw& \qw& \ctrl{1}& \qw \\ 
&\targ& \qw& \qw& \qw&\targ& \qw& \qw& \qw& \gate{B}& \qw& \qw& \qw&\targ& \qw& \qw& \qw&\targ& \qw 
\gategroup{2}{7}{6}{9}{.7em}{--} \gategroup{2}{11}{6}{13}{.7em}{--}
}
\]
where, to simplify, we have not explicitly illustrated the diagonal
gates. The two reset parts commute with the controlled $B$ gate, since
they don't act on the two least significant qubits, and cancel out.
Therefore each of the marked $C_{k_2,n}(\sigma_{x})$ gates uses
$N_{C_{k_2,n}(\sigma_{x})}-N^{\text{reset}}_{C_{k_2,n}(\sigma_{x})}=4k_2+3$
\cnots.  We decompose the other two $C_{k_2,n}(\sigma_{x})$ gates
exactly as in Lemma~\ref{prop1c}. Using Lemma~\ref{cor5.3} for the
three single controlled gates then leads to the claimed \cnot{} count.
\end{proof}

\subsection{Overview of  \cnot{} counts for controlled gates} \label{sec:counts_overview}

We summarize \cnot{} counts for some commonly-used uniformly and not
uniformly controlled gates in Table~\ref{tab:counts_overview}. Note
that implementing a uniformly controlled $C^{\textnormal{u}}_{n-1}(U)$
gate up to a diagonal gate $\Delta$ means that we implement
$ \Delta C^{\textnormal{u}}_{n-1}(U)$, for some diagonal gate
$\Delta$. The number of real parameters required to
specify a particular gate is shown in the final column and follows
from Lemma~\ref{ZYZ} and the block diagonal form of the uniformly
controlled gates (see also the argument used to derive the lower bound
for isometries in Section~\ref{sec:lower_bound}).  For example, a
$C^{\textnormal{u}}_{n-1}(U)$ gate is described by $2^{n-1}$
$(2\times2)$-unitaries. By Lemma~\ref{ZYZ} this corresponds to $4
\cdot 2^{n-1}$ real parameters. Since a diagonal gate $\Delta$ on $n$
qubits is described by $2^n$ real parameters, a
$\Delta C^{\textnormal{u}}_{n-1}(U)$ gate is described by
$4\cdot2^{n-1}-2^n=2^n$ real parameters.

\subsection{Rigorous proof of the decomposition scheme described in Section~\ref{sec:isounif} and exact \cnot{} count} \label{sec:app_isounif}

We begin this section by introducing some additional notation. For $m' \in \mathbb{N}$ and $k \in  \{0,1,\dots , 2^{m'}-1\}$ we use the notation: $k=[k_{m'-1},k_{m'-2},\ldots,k_0]:=\sum_{i=0}^{m'-1}k_{i}2^{i}$, i.e., $\{k_i\}$ are the binary digits of $k$. For $s\in\mathbb{N}_0$ we define $a^k_{s},b^k_{s} \in \mathbb{N}_0$ by $k=a^k_{s}2^s+b^k_{s}$,  such that $a^k_{s}$ is maximal. For $s\in\{1,2,\dots,n'-1\}$, where $n'\in\mathbb{N}_{\geqslant 2}$ and $n'\geqslant m'$, we can also write $a^k_{s}=[k_{n'-1},k_{n'-2},\ldots,k_s]$ and $b^k_{s}=[k_{s-1},k_{s-2},\ldots,k_0]$. 

We now consider an elementary step in the decomposition scheme. Let
$n\in\mathbb{N}_{\geqslant 2}$, $m\in\mathbb{N}$ with $n\geqslant m$,
$k \in \{1,2,\dots ,2^n-1\}$ and $s \in \{0,1,\dots ,n-2
\}$. Furthermore suppose $\ket{\psi}$ is an $n$-qubit state of the
form
    \begin{equation}  \label{eq:lem1:psi_init_form_s}
\ket{\psi}=\left(\sum_{l=a^k_s}^{2^{n-s}-1} c_l\ket{l}\right) \otimes \ket{k_{s-1}k_{s-2}\dots k_{0}},
\end{equation}
where $c_l\in \mathbb{C}$ for all $l \in\{a^k_s,a^k_s+1,\dots,
2^{n-s}-1 \}$. Since it is clear from the context that, e.g., $\ket{l}
\in \mathcal{H}_{n-s}$, we shorten the notation and write $\ket{l}$
instead of $\ket{l}_{n-s}$.

[Note that we use the following convention: If $s-1<0$, we mean that
the part $ \ket{k_{s-1}k_{s-2}\dots k_{0}}$
in equation~(\ref{eq:lem1:psi_init_form_s}) does not exist, i.e., for $s=0$ the
statement of  equation~(\ref{eq:lem1:psi_init_form_s}) is:
$\ket{\psi}=\sum_{l=a^k_0}^{2^{n}-1} c_l\ket{l}$. Analogously,
$I^{\otimes 0}$ means that no such part exists in the considered
expression. Similarly we set $\{n_s,\dots,n_e \} =\emptyset$ if
$n_e<n_s$.]

 
 \begin{figure*}[h!tp]

\centering
\[
\Qcircuit @C=1em @R=.45em {
 &\qw& \gate{\mathsmaller{*}} \qwx[1] \qw  & \gate{} \qwx[1] \qw 	&\multigate{7}{\mathsmaller{\Delta_{0}}}			& \gate{\mathsmaller{*}} \qwx[1] \qw  & \gate{} \qwx[1] \qw &\multigate{6}{\mathsmaller{\Delta_{1}}}	&\qw{.}&{.}&{.}   		&\gate{\mathsmaller{*}} \qwx[1] \qw&    \gate{} \qwx[1] \qw&\multigate{1}{\mathsmaller{\Delta_{n-2}}}&\gate{\mathsmaller{{U}_{n-1}}} &\gate{U^{ \textnormal{u} }_{n-1}}&\gate{\mathsmaller{\Delta_{n-1}}} \qw	\\
& \qw & \gate{\mathsmaller{*}} \qwx[1] \qw  & \gate{} \qwx[1] \qw	&\ghost{\mathsmaller{\Delta_{0}}}			& \gate{\mathsmaller{*}} \qwx[1] \qw  & \gate{} \qwx[1] \qw  	&\ghost{\mathsmaller{\Delta_{1}}}		&\qw{.}&{.}&{.}			&\gate{\mathsmaller{{U}_{n-2}}} \qwx[1] \qw&	\gate{\mathsmaller{U^{ \textnormal{u} }_{n-2}}}&\ghost{\mathsmaller{\Delta_{n-2}}}  &\gate{\mathsmaller{*}} \qwx[-1] \qwx[1]& \qw& \qw\qw	\\
&&{.}&		{.}				&		&{.}														&{.}	&											&&	&				&     {.}     	 	&  	&  &{.}  & & & & 	\\
&&{.}& 			{.}		&	&{.}															&{.}	&											&&	&				&{.}			&   	&           &    {.} && &   \\
&&{.}&				{.}		&	&{.}															&{.}&												&&	&				&{.}			&   	& &{.} && &  \\
& \qw& \gate{\mathsmaller{*}} \qwx[-1]\qwx[1] \qw&\gate{} \qwx[1]  \qwx[-1]  \qw&\ghost{\mathsmaller{\Delta_{0}}}			&\gate{\mathsmaller{*}} \qwx[-1] \qw  & \gate{} \qwx[1]\qwx[-1] \qw 		&\ghost{\mathsmaller{\Delta_{1}}}	&\qw{.}&{.}&{.}		&\gate{\mathsmaller{*}} \qwx[-1] \qw&\qw	&\qw	&\gate{\mathsmaller{*}} \qwx[-1] & \qw& \qw\qw\\
& \qw& \gate{\mathsmaller{*}} \qwx[1] \qw&\gate{} \qwx[1]  \qw 	&	\ghost{\mathsmaller{\Delta_{0}}}	&\gate{\mathsmaller{{U}_{1}}} \qwx[-1] \qw&\gate{\mathsmaller{U^{ \textnormal{u} }_{1}}} \qwx[-1] \qw& \ghost{\mathsmaller{\Delta_{1}}}	&\qw{.}&{.}&{.}			&\gate{\mathsmaller{*}} \qwx[-1] \qw&\qw&\qw	&\gate{\mathsmaller{*}} \qwx[-1] & \qw& \qw\qw	\\
& \qw&\gate{\mathsmaller{{U}_{0}}} \qw&	\gate{\mathsmaller{U^{ \textnormal{u} }_{0}}}  \qw 	&	\ghost{\mathsmaller{\Delta_{0}}}	&	\gate{\mathsmaller{*}} \qwx[-1] \qw  &\qw&\qw&		\qw															{.}&{.}&{.}			&\gate{\mathsmaller{*}} \qwx[-1] \qw&	\qw&\qw	&\gate{\mathsmaller{*}} \qwx[-1]& \qw& \qw \qw \\
}
\]
\caption{Decomposition scheme of a quantum gate $G_k$. The notation
  ``$*$'' surrounded by the square signifies either a control on one
  or on zero.}
\label{fig:decG_k}

\end{figure*}

\begin{lem} \label{unif_disent.} Take $\ket{\psi^{\textnormal{e}}} :=
  \sum_{l=a^k_s}^{2^{n-s}-1} c_l\ket{l}$, where ``\textnormal{e}"
  stands for entangled and assume that
\begin{equation}
      \label{eq:lem_add_cond.}
c_{2a^k_{s+1}+1}=0 \text{ if } k_s=0 \text{ and } b^k_{s+1}\neq 0.
\end{equation}

There exists a UCG $A:=C_{n-1-s}^{\textnormal{u}}(U)$ of the form
 \begin{equation} 
  \label{eq:lem1:form_A_s}
A=\sum_{l=0}^{2^{n-1-s}-1} \ketbra{l}{l} \otimes U_l \otimes I^{ \otimes s} ,
\end{equation}
such that $\ket{\psi'}:=A\ket{\psi}$ has the form
     \begin{equation}
     \label{eq:lem1:form_psi'_s}
	\ket{\psi'}=\left(\sum_{l=a^k_{s+1}}^{2^{n-(s+1)}-1} c'_l\ket{l}\right) \otimes  \ket{k_{s}k_{s-1}\dots k_{0}}, 
\end{equation}
where $c'_l\in \mathbb{C}$ for all $l
\in\{a^k_{s+1},a^k_{s+1}+1,\dots, 2^{n-(s+1)}-1 \}$. Additionally, $A$
has the property that
\begin{equation}
  \label{eq:lem1:action_of_A_on_basis}
  A\ket{i}=\ket{i} \text{ for all } i\in\{0,1,\ldots,k-1\}.
\end{equation}

\end{lem}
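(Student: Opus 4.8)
The idea is to build the uniformly controlled gate $A=C^{\textnormal{u}}_{n-1-s}(U)$ explicitly, choosing one single-qubit unitary $U_l$ for each computational basis state $\ket{l}$, $l\in\{0,\dots,2^{n-1-s}-1\}$, of the control qubits, with Lemma~\ref{rotate_to_basis_state} supplying the only non-trivial ingredient; I identify $\ket{\psi}$ with $\ket{\psi^{\textnormal{e}}}\otimes\ket{k_{s-1}k_{s-2}\dots k_0}$ as in~\eqref{eq:lem1:psi_init_form_s}. To expose the relevant $2\times2$ block structure, use $a^k_s=2a^k_{s+1}+k_s$ and the convention $c_j=0$ for $j<a^k_s$, and set $\ket{\chi_l}:=c_{2l}\ket{0}+c_{2l+1}\ket{1}\in\cH_1$, so that $\ket{\psi^{\textnormal{e}}}=\sum_{l=a^k_{s+1}}^{2^{n-1-s}-1}\ket{l}\otimes\ket{\chi_l}$ (the blocks $l<a^k_{s+1}$ being empty, since then $2l+1<2a^k_{s+1}\leqslant a^k_s$). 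Now I would define $U_l:=I$ for $l<a^k_{s+1}$; for $l>a^k_{s+1}$, let $U_l$ be the element of $SU(2)$ provided by Lemma~\ref{rotate_to_basis_state} with $U_l\ket{\chi_l}=r_l\ket{k_s}$, $r_l^2=|c_{2l}|^2+|c_{2l+1}|^2$; and for the boundary block $l=a^k_{s+1}$ set $U_{a^k_{s+1}}:=I$ when $\ket{\chi_{a^k_{s+1}}}$ is already proportional to $\ket{k_s}$ and take it from Lemma~\ref{rotate_to_basis_state} otherwise. The boundary block is automatically aligned when $k_s=1$ (then $c_{2a^k_{s+1}}=0$ because $2a^k_{s+1}<a^k_s$, so $\ket{\chi_{a^k_{s+1}}}=c_{a^k_s}\ket{1}$) and when $k_s=0$ with $b^k_{s+1}\neq0$ (then hypothesis~\eqref{eq:lem_add_cond.} gives $c_{2a^k_{s+1}+1}=0$, so $\ket{\chi_{a^k_{s+1}}}=c_{a^k_s}\ket{0}$), so a non-trivial $U_{a^k_{s+1}}$ is used only when $k_s=0$ and $b^k_{s+1}=0$. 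Tensoring this UCG with $I^{\otimes s}$ yields $A$ of the form~\eqref{eq:lem1:form_A_s}, and $A\ket{\psi}=\big(\sum_{l=a^k_{s+1}}^{2^{n-(s+1)}-1}c'_l\ket{l}\big)\otimes\ket{k_sk_{s-1}\dots k_0}$ with $c'_l=r_l$ for $l>a^k_{s+1}$ and $c'_{a^k_{s+1}}$ the surviving coefficient of the boundary block, which is exactly~\eqref{eq:lem1:form_psi'_s}.

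It remains to verify~\eqref{eq:lem1:action_of_A_on_basis}. For $i<k$ write $i=a^i_{s+1}2^{s+1}+i_s2^s+b^i_s$; then $A\ket{i}=\ket{a^i_{s+1}}\otimes(U_{a^i_{s+1}}\ket{i_s})\otimes\ket{b^i_s}$, so it suffices that $U_{a^i_{s+1}}\ket{i_s}=\ket{i_s}$. If $a^i_{s+1}>a^k_{s+1}$ then $i\geqslant(a^k_{s+1}+1)2^{s+1}>k$, which is impossible; if $a^i_{s+1}<a^k_{s+1}$ then $U_{a^i_{s+1}}=I$ by construction; and if $a^i_{s+1}=a^k_{s+1}$ then $i<k$ forces $b^i_{s+1}<b^k_{s+1}$, which when $k_s=1$ is harmless because $U_{a^k_{s+1}}=I$, and when $k_s=0$ forces simultaneously $b^k_{s+1}\neq0$ (so, by the construction, $U_{a^k_{s+1}}=I$) and $i_s=0$ (since $b^k_{s+1}<2^s$). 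In all cases $U_{a^i_{s+1}}\ket{i_s}=\ket{i_s}$, which gives~\eqref{eq:lem1:action_of_A_on_basis}.

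The only step carrying real content is the treatment of the boundary block: one has to recognise that the hypothesis ``$b^k_{s+1}\neq0$'' in~\eqref{eq:lem_add_cond.} is precisely the condition that some index $i<k$ shares the top $n-1-s$ bits of $k$, which is exactly what would force $U_{a^k_{s+1}}$ to fix $\ket{0}$ and hence would obstruct~\eqref{eq:lem1:action_of_A_on_basis} if one insisted on disentangling that block — so the hypothesis cannot be weakened — while, dually, when $k_s=1$ the leading entry of the boundary block is absent for free, so no hypothesis is needed there. Everything else is bookkeeping with the binary data $a^k_s,b^k_s,k_s$ introduced before the lemma, the degenerate case $s=0$ being covered by the stated conventions.
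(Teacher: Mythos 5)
Your proof is correct and follows essentially the same route as the paper's: you set $U_l=I$ on all blocks up to (and, except when $k_s=0$ and $b^k_{s+1}=0$, including) the boundary block $l=a^k_{s+1}$, use Lemma~\ref{rotate_to_basis_state} on the remaining blocks to rotate each two-dimensional block onto $\ket{k_s}$, and observe that the hypothesis~\eqref{eq:lem_add_cond.} (respectively the vanishing of $c_{2a^k_{s+1}}$ when $k_s=1$) makes the boundary block already aligned exactly when some $i<k$ shares its top $n-1-s$ bits with $k$. Your verification of~\eqref{eq:lem1:action_of_A_on_basis} via the digit decomposition of $i$ is slightly more explicit than the paper's, but the construction and case analysis are the same.
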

\begin{proof} 
The following proof depends on whether $k_s=0$ or $k_s=1$. In the case  $k_s=0$ we has also to distinguish between the cases $b^k_{s+1}=0$ and $b^k_{s+1}\neq 0$. The reader might find it useful to read the proof first considering only the case $k_s=1$ (and therefore $b^k_{s+1}\neq0$).\newline
Considering blocks of two elements,  there exist two possible forms of $\ket{\psi^{\textnormal{e}}}$, depending on whether $k_s=0$ or $k_s=1$. If $k_{s}=0$, then $a_s^k=2 a^k_{s+1}$ is even and therefore $\ket{\psi^{\textnormal{e}}}$ begins with an even number of zeros (assuming $c_{a_s^k}\neq 0$). If $k_s=1$, then $a_s^k=2 a^k_{s+1}+1$ is odd and  $\ket{\psi^{\textnormal{e}}}$ begins with an odd number of zeros (see Fig.~\ref{fig:disentanglement_matrix_not}). By equation~(\ref{eq:lem1:form_A_s}) the quantum gate $A$ leaves the $s$ lower significant qubits invariant and we can write:  $A \ket{\psi}= \left(\sum_{l=0}^{2^{n-s}-1} {c'}_{l}^{\textnormal{e}} \ket{l}\right) \otimes  \ket{k_{s-1}k_{s-2}\dots k_{0}}$ for some coefficients ${c'}_l^{\textnormal{e}} \in \mathbb{C}$. We define  $\ket{\psi'^{\textnormal{e}}}:= \sum_{l=0}^{2^{n-s}-1} {c'}_{l}^{\textnormal{e}} \ket{l}$. We want to find a gate $A$, such that  for $l'\in\{0,1,\dots,2^{n-s-1}-1\}$:  ${c'}_{2l'+1}^{\textnormal{e}}=0$ if $k_s=0$, and ${c'}_{2l'}^{\textnormal{e}}=0$ if $k_s=1$, i.e., we want to disentangle the $(n-s)$th qubit into the state $\ket{k_s}$.

We now determine the UCG $A$. To ensure that $A$
fulfils equation~(\ref{eq:lem1:action_of_A_on_basis}) we set:
\begin{subnumcases}{ \label{eq:lem1:setu} \!\!\!\!\!\!\!\! U_l=\! } 
\! I \text{ for } l \in \{0,1,\dots, a^k_{s+1}\} \text{ if }b^k_{s+1} \neq 0 \label{eq:lem1:setu1.1},\\
\! I \text{ for } l \in \{0,1,\dots ,a^k_{s+1}-1\} \text{ if }b^k_{s+1}=0. \text{~ ~ }\label{eq:lem1:setu1.2}
\end{subnumcases}

If the gate $A$ is not already fully specified by equation~(\ref{eq:lem1:setu}), we use Lemma~\ref{rotate_to_basis_state} to determine the gates $U_l$ for $l \in \{a^k_{s+1}+1,a^k_{s+1}+2,\dots,2^{n-1-s}-1\}$ if $b^k_{s+1}\neq 0$ and for $l \in \{a^k_{s+1},a^k_{s+1}+1,\dots,2^{n-1-s}-1\}$ if $b^k_{s+1}=0$:

\begin{subnumcases}{U_l \left(\begin{array}{c} c_{2l} \\ c_{2l+1} \end{array}\right)=} 
r  \left(\begin{array}{c} 1 \\ 0 \end{array}\right) \text{ if }k_s=0,\label{eq:lem1:setu2.1}\\
r \left(\begin{array}{c} 0 \\  1\end{array}\right)   \text{ if }k_s=1,\label{eq:lem1:setu2.2}
\end{subnumcases}
where $r \in \mathbb{R}$. [Note that if $b^k_{s+1}=0$ and $l=a_{s+1}^k$, the gate $A$ acts trivially on $\ket{i}$ for all  $i\in\{0,1,\ldots,k-1\}$, because of the form of the gate $A$ and since $a_{s+1}^k>a_{s+1}^{i}$ for all  $i\in\{0,1,\ldots,k-1\}$ in the considered case.]\\
With this choice of the gate $A$ we conclude: For all $l \in
\{a^k_{s+1}+1,a^k_{s+1}+2,\dots,2^{n-1-s}-1\}$ we have
${c'}^{\textnormal{e}}_{2l+1}=0$ if $k_s=0$ and
${c'}^{\textnormal{e}}_{2l}=0$ if $k_s=1$. Because of the initial form
of $\ket{\psi^{\textnormal{e}}}$ and the construction of the gate $A$
we conclude further that ${c'}_{l'}^{\textnormal{e}}=0$ for $l' \in
\{0,1, \dots, 2a^k_{s+1}-1\}$. It remains to consider the two coefficients ${c'}_{2a^k_{s+1}}^{\textnormal{e}}$ and ${c'}_{2a^k_{s+1}+1}^{\textnormal{e}}$. \\
If $k_s=0$ and $b^k_{s+1}=0$, then we can zero the coefficient
$c_{2a^k_{s+1}+1}$ with the gate $A$ (see
equation~(\ref{eq:lem1:setu2.1})). In the case $k_s=0$ and
$b^k_{s+1}\neq0$ the coefficient $c_{2a^k_{s+1}+1}$ is zero by
assumption and we act trivially on it with the gate $A$ by
equation~(\ref{eq:lem1:setu1.1}).
If  $k_s=1$,  then ${c'}_{2a^k_{s+1}}^{\textnormal{e}}=0$ because the corresponding entry in $\ket{\psi^{\textnormal{e}}}$ is initially zero by  equation~(\ref{eq:lem1:psi_init_form_s}) and $A$ acts trivially on it by equation~(\ref{eq:lem1:setu1.1}).\\
So in all cases we can write
$\ket{\psi'^{\textnormal{e}}}=\left(\sum_{l=a^k_{s+1}}^{2^{n-(s+1)}-1}
  c'_l\ket{l}\right) \otimes \ket{k_s}$, for some $c'_l \in
\mathbb{C}$ (see
Fig.~\ref{fig:disentanglement_matrix_not}). Therefore, $A\ket{\psi}$
is of the desired form~(\ref{eq:lem1:form_psi'_s}) and by construction
$A$ satisfies
equation~(\ref{eq:lem1:action_of_A_on_basis}).\end{proof}

\begin{lem} \label{zero_entry}
Let  $k \in  \{1,2,\dots ,2^n-1\}$ and   $s \in  \{0,1,\dots ,n-1 \}$
be such that  $k_s=0$ and $b^k_{s+1}\neq 0$. Let $\ket{\psi}$ be an $n$-qubit  state of the form  equation~\eqref{eq:lem1:psi_init_form_s}. Then there exist a MCG $B:=C_{n-1}(U)$, whose non
trivial part is of the form $\ketbra{K_1}{K_1} \otimes U \otimes
\ketbra{K_0}{K_0}$, where $K_1=[k_{n-1},k_{n-2},\dots,k_{s+1}]$ and
$K_0=[k_{s-1},k_{s-2},\dots,k_{0}]$, such that we can write
\begin{equation}
     \label{eq:lem2:form_psi'2}
	\ket{\psi'}:=B \ket{\psi}=\left(\sum_{l=a^k_s}^{2^{n-s}-1} c'_l\ket{l}\right) \otimes  \ket{k_{s-1}k_{s-2}\dots k_{0}},
\end{equation}
where $c'_l\in \mathbb{C}$ for all $l\in\{a^k_s,a^k_s+1,\dots,
2^{n-s}-1 \}$ and $c'_{2a^k_{s+1}+1}=0$. In addition, $B$ leaves the
first $k$ basis states invariant
\begin{equation}
      \label{eq:lem2:action_ofB_on_basis}
	B\ket{i}=\ket{i} \text{ for } i\in\{0,\ldots,k-1\} .
\end{equation}

\end{lem}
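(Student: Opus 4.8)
The plan is to pin down exactly the two-dimensional subspace on which the multi-controlled gate $B$ acts non-trivially, to identify that subspace with a single pair of amplitudes of the state $\ket{\psi}$, and then to invoke Lemma~\ref{rotate_to_basis_state} to choose the single-qubit unitary $U$ so that the offending amplitude vanishes.

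First I would settle the binary bookkeeping. Since $k_s=0$ we have $a^k_s=2a^k_{s+1}$, and $b^k_{s+1}=[k_s,k_{s-1},\dots,k_0]=b^k_s$; in particular the hypothesis $b^k_{s+1}\neq0$ forces $s\geqslant1$, so the tail $\ket{k_{s-1}\dots k_0}$ appearing in~\eqref{eq:lem1:psi_init_form_s} is a genuine $s$-qubit register equal to $\ket{K_0}_s$ with $K_0=[k_{s-1},\dots,k_0]$. Writing an $n$-qubit computational basis vector as $\ket{l_{n-1}\dots l_0}$, the gate $B$ acts non-trivially on it precisely when its top $n-1-s$ qubits are in the state $\ket{K_1}$ with $K_1=[k_{n-1},\dots,k_{s+1}]=a^k_{s+1}$ and its bottom $s$ qubits are in the state $\ket{K_0}$, the target qubit in position $s$ being unconstrained. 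Because bit $s$ of $k$ equals $0$, the only two such vectors are $\ket{k}$ and $\ket{k+2^s}$, and under the splitting $\cH_n=\cH_{n-s}\otimes\cH_s$ one has $\ket{k}=\ket{2a^k_{s+1}}_{n-s}\otimes\ket{K_0}_s$ and $\ket{k+2^s}=\ket{2a^k_{s+1}+1}_{n-s}\otimes\ket{K_0}_s$. The invariance property~\eqref{eq:lem2:action_ofB_on_basis} then follows immediately: for $i\in\{0,\dots,k-1\}$ the only basis vectors $B$ could displace are $\ket{k}$ and $\ket{k+2^s}$, both strictly larger than $i$.

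For the main claim, observe that in $\ket{\psi}$ of the form~\eqref{eq:lem1:psi_init_form_s} the bottom $s$ qubits already occupy the control state $\ket{K_0}_s$, so $B$ restricted to $\ket{\psi}$ leaves every amplitude $c_l$ with $l\notin\{2a^k_{s+1},2a^k_{s+1}+1\}$ untouched and transforms the pair $(c_{2a^k_{s+1}},c_{2a^k_{s+1}+1})^{T}$ -- attached respectively to $\ket{2a^k_{s+1}}$ (target bit $0$) and $\ket{2a^k_{s+1}+1}$ (target bit $1$) -- by $U$. Applying Lemma~\ref{rotate_to_basis_state} to the single-qubit state $c_{2a^k_{s+1}}\ket{0}+c_{2a^k_{s+1}+1}\ket{1}$ yields a $U\in SU(2)$ with $U(c_{2a^k_{s+1}},c_{2a^k_{s+1}+1})^{T}=(r,0)^{T}$, where $r=\sqrt{|c_{2a^k_{s+1}}|^2+|c_{2a^k_{s+1}+1}|^2}$. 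With this choice $B\ket{\psi}=\bigl(\sum_{l=a^k_s}^{2^{n-s}-1}c'_l\ket{l}\bigr)\otimes\ket{k_{s-1}\dots k_0}$ with $c'_l=c_l$ for $l\notin\{2a^k_{s+1},2a^k_{s+1}+1\}$, $c'_{2a^k_{s+1}}=r$ and $c'_{2a^k_{s+1}+1}=0$, which is exactly~\eqref{eq:lem2:form_psi'2}.

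I do not expect any genuine obstacle; the argument is essentially careful bookkeeping. The one place that needs attention is juggling the two indexing conventions -- the positions $\ket{k},\ket{k+2^s}$ of the two affected amplitudes inside the full $n$-qubit vector versus the positions $\ket{2a^k_{s+1}},\ket{2a^k_{s+1}+1}$ inside the $(n-s)$-qubit \emph{entangled} factor -- together with verifying that the bottom-register control pattern $\ket{K_0}$ coincides with the fixed tail $\ket{k_{s-1}\dots k_0}$ of $\ket{\psi}$, which is precisely what lets $B$ act as an effective $C_{n-1-s}(U)$ on the top $n-s$ qubits.
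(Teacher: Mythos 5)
Your proof is correct and follows the same route as the paper's own (much terser) argument: the invariance property holds because the control pattern $(K_1,K_0)$ with $k_s=0$ confines the non-trivial action of $B$ to the span of $\ket{k}$ and $\ket{k+2^s}$, and $U$ is then chosen via Lemma~\ref{rotate_to_basis_state} to send $(c_{2a^k_{s+1}},c_{2a^k_{s+1}+1})^{T}$ to $(r,0)^{T}$. You have simply made explicit the bookkeeping that the paper compresses into ``by construction of the gate $B$''.
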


\begin{proof} 
Since $k_s=0$ the condition~(\ref{eq:lem2:action_ofB_on_basis}) is satisfied by construction of the gate $B$. We define the gate $U$ with Lemma~\ref{rotate_to_basis_state} such that
 \begin{equation}
 \label{eq:lem2:setB}	
U \left(\begin{array}{c}  c_{2a^k_{s+1}} \\ c_{2a^k_{s+1}+1} \end{array}\right)=r  \left(\begin{array}{c} 1 \\ 0 \end{array}\right),
\end{equation}
where $r \in \mathbb{R}$.\end{proof}

\begin{lem} [One column of an isometry]  \label{prop_iso}
Let $k \in  \{1,2, \dots, 2^n-1\}$. Let $\ket{\psi} \in \mathcal{H}_{n}$  be an $n$-qubit state such that $\braket{i}{\psi}=0$ for $i \in  \{0,1, \dots ,k-1\}$.  There exist a quantum gate $G_k$ with the following properties:
\begin{eqnarray}
\label{eq:prop1_Gk}
	G_{k}\ket{i}&=&e^{\I \varphi_{i}}\ket{i},\text{ }i \in \{0,1,\dots, k-1\},\\
\label{eq:prop2_Gk}
	G_{k}\ket{\psi}&=&e^{\I \varphi_{k}}\ket{k},	
\end{eqnarray}
where $\varphi_{i}\in\mathbb{R}$ for all $i \in \{0,1, \dots,k\}$. \\
\end{lem}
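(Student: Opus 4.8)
The plan is to build $G_k$ as an ordered product $G_k=G_k^{n-1}G_k^{n-2}\cdots G_k^1G_k^0$ of controlled gates, where $G_k^s$ ``disentangles'' the qubit in position $s$ into the computational basis state $\ket{k_s}$, exactly as in the worked examples for $n=4$ in Figs.~\ref{fig:coloumn1} and~\ref{fig:coloumn2}. I would argue by induction on $s$, carrying the invariant that after applying $G_k^{s-1}\cdots G_k^0$ the current state has the form~\eqref{eq:lem1:psi_init_form_s}, i.e.\ $\bigl(\sum_{l=a^k_s}^{2^{n-s}-1}c_l\ket{l}\bigr)\otimes\ket{k_{s-1}k_{s-2}\cdots k_0}$ for some coefficients $c_l\in\mathbb{C}$, and that every factor applied so far leaves $\ket{i}$ fixed for all $i<k$. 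The base case $s=0$ is immediate: the hypothesis $\braket{i}{\psi}=0$ for $i<k=a^k_0$ says precisely that $\ket{\psi}$ already has the form~\eqref{eq:lem1:psi_init_form_s} with $s=0$.

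For the inductive step with $0\leqslant s\leqslant n-2$, I would first check the extra hypothesis~\eqref{eq:lem_add_cond.} of Lemma~\ref{unif_disent.}. It holds automatically unless $k_s=0$ and $b^k_{s+1}\neq0$; in that exceptional case I first apply the MCG $B=C_{n-1}(U)$ supplied by Lemma~\ref{zero_entry}, which zeroes the offending coefficient $c_{2a^k_{s+1}+1}$, keeps the state in the form~\eqref{eq:lem1:psi_init_form_s}, and fixes $\ket{i}$ for $i<k$ by~\eqref{eq:lem2:action_ofB_on_basis}. With~\eqref{eq:lem_add_cond.} now in force, Lemma~\ref{unif_disent.} gives the UCG $A=C_{n-1-s}^{\textnormal{u}}(U)$ that sends the state to one of the form~\eqref{eq:lem1:form_psi'_s} --- which is exactly the invariant with $s$ replaced by $s+1$ --- and fixes $\ket{i}$ for $i<k$ by~\eqref{eq:lem1:action_of_A_on_basis}; I set $G_k^s:=AB$, with $B=I$ when it is not needed. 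After the step $s=n-2$ the invariant says the state equals $\bigl(\sum_{l=a^k_{n-1}}^{1}c_l\ket{l}\bigr)\otimes\ket{k_{n-2}\cdots k_0}$, i.e.\ a single remaining qubit tensored with $\ket{k_{n-2}\cdots k_0}$. If $k_{n-1}=0$ this last qubit is still entangled; here Lemma~\ref{zero_entry} applies once more, now with $s=n-1$ (for which $b^k_n=k\neq0$), yielding a final MCG $G_k^{n-1}$ that rotates it to $\ket{0}$ while fixing $\ket{i}$ for $i<k$ (every such $i$ has $i_{n-1}=0$ and its lower bits form $i<k$, so they differ from those of $k$ and the control pattern is never matched). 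If $k_{n-1}=1$ the state is already a scalar multiple of $\ket{k}$, so I take $G_k^{n-1}=I$.

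Composing the factors then gives $G_k\ket{i}=\ket{i}$ for all $i<k$, so~\eqref{eq:prop1_Gk} holds with $\varphi_i=0$, and $G_k\ket{\psi}$ is a scalar multiple of $\ket{k}$; since $G_k$ is unitary and $\ket{\psi}$ is a unit vector, that scalar has modulus $1$, which is~\eqref{eq:prop2_Gk} with a suitable $\varphi_k\in\mathbb{R}$ (in fact $\varphi_k=0$ unless $k_{n-1}=1$). I expect the main obstacle to be bookkeeping rather than any conceptual point: one must verify the index arithmetic for $a^k_s$, $b^k_s$ and the binary digits $k_s$ so that the output of one application of Lemma~\ref{unif_disent.} is a legitimate input to the next, keep straight which of the cases ($k_s=1$; $k_s=0,\,b^k_{s+1}=0$; $k_s=0,\,b^k_{s+1}\neq0$) occurs at each level, and treat the degenerate last qubit ($s=n-1$) --- where an uncontrolled single-qubit rotation would spoil the already-fixed basis states --- by the separate argument above.
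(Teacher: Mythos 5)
Your proposal is correct and follows essentially the same route as the paper's proof: an induction over the qubits $s=0,\dots,n-1$, using Lemma~\ref{zero_entry} to supply the MCG that restores condition~\eqref{eq:lem_add_cond.} when $k_s=0$ and $b^k_{s+1}\neq 0$, then Lemma~\ref{unif_disent.} to disentangle qubit $s$ into $\ket{k_s}$, with the $s=n-1$ step handled by a final MCG. The only (harmless) difference is that the paper additionally inserts diagonal gates $\Delta_s$ into each $O_s$ — which is why the lemma is stated with phases $e^{\I\varphi_i}$ — but those are there solely to enable the efficient UCG-up-to-diagonal decomposition in the subsequent \cnot{} count, so your phase-free construction still proves the lemma as stated.
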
 

\begin{proof} 
We claim that we can implement the operator  $G_{k}$ with a circuit of the form as shown in Fig.~\ref{fig:decG_k}. 

[Note that we have interchanged the order of the MCGs and the UCGs compared with Section~\ref{sec:isounif}. We are allowed to do this, since the gates commute by their construction.]

The structure of this decomposition is based on the idea used for
state preparation in~\cite{10}. The
diagonal gates in $\{\Delta_{i}\}_{i \in \{0,1,\dots,n-1\}}$ are
present so we can use the efficient decomposition of the UCGs up to
diagonal gates in~\cite{10}. Note that we never use the MCG
$C_{n-1}(U_0)$, since we can absorb it into the UCG $C^{
  \textnormal{u} }_{n-1}(U^{ \textnormal{u} }_0)$.  Formally we write:
\begin{equation*}
\label{eq:partition_of_G_k}
G_k=\prod_{s=0}^{n-1} O_s:=\prod_{s=0}^{n-1}\left( \Delta_{s} \otimes I^{\otimes s}\right)    C^{ \textnormal{u} }_{n-1-s}(U_{s}^{ \textnormal{u} })  C_{n-1}(U_{s}).
\end{equation*}

To keep the notation simple, we don't write down which of the $n$
qubits are the control/target qubits. The target qubit of the
controlled gates with lower index $s$ is the $(n-s)$th qubit. We
consider all controlled gates as $n$ qubit gates. If there are free
qubits, i.e., qubits that are neither controlled nor acted on, they
are the least significant ones.

We use  Lemma~\ref{unif_disent.} recursively to disentangle one qubit
after another starting from the state $\ket{\psi}$. More formally: We
define the state  $\ket{\psi_{s}}:=\prod_{s'=0}^{s-1} O_{s'}
\ket{\psi} $ for $s \in \{1,2,\dots,n\}$ and we set
$\ket{\psi_{0}}:=\ket{\psi}$.  To determine the gate $C^{
  \textnormal{u} }_{n-1-s}(U_{s}^{ \textnormal{u} })$ for $s\in
\{0,1,\dots,n-2\}$ we  apply Lemma~\ref{unif_disent.} on the state
$\ket{\psi'_{s}}:=C_{n-1}(U_{s}) \ket{\psi_{s}}$.  If $k_s=0$ and
$b^k_{s+1}\neq 0$, $ \ket{\psi_{s}}$ does not satisfies the
condition~(\ref{eq:lem_add_cond.}) for Lemma~\ref{unif_disent.} in
general. In this case we can determine the MCG $C_{n-1}(U_{s})$ by
Lemma~\ref{zero_entry}, such that  $ \ket{{\psi'}_s}$
satisfies the condition~(\ref{eq:lem_add_cond.}). In all other cases we set
$C_{n-1}(U_{s})=I$. Note, that the diagonal gate   $\Delta_{s} \otimes
I^{\otimes s}$ leaves the form of the state $C^{ \textnormal{u}
}_{n-1-s}(U_{s}^{ \textnormal{u} }) \ket{{\psi'}_s}$ invariant up to
phase shifts.

In  the case $s=n-1$  we have $b^k_{n}\neq 0$ and so either the most
significant qubit is initially disentangled $(k_{n-1}=1)$ or can be
disentangled with the MCG  $C_{n-1}(U_{n-1})$, determined by
Lemma~\ref{zero_entry} ($k_{n-1}=0$). Therefore we set $C^{
  \textnormal{u} }_{{0}}(U_{n-1}^{ \textnormal{u} })=I$ and
$\Delta_{n-1}=I$.

By construction, the operators $O_s$ leave the states $\{\ket{i}\}_{i \in \{0,1,\dots,k-1\}}$ invariant (up to phase shifts caused by the diagonal gates).\end{proof}

\begin{lem} [\cnot{} count for one column]  \label{prop_iso_CNOT_count}
Let $k \in  \{1,2, \dots, 2^n-1\}$. We can decompose a quantum gate $G_k$, which is of the form as describe in Lemma~\ref{prop_iso}, using at most $((2^n-n-1)+Q^k(n) N_{C_{n-1}(U)})$ \cnots, where $Q^k(n):=| \{s: k_s=0 \wedge b_{s+1}^k \neq 0\text{, } s \in \{0,1,\dots, n-1 \}  \}|$ and $N_{C_{n-1}(U)}$ denotes the number of \cnots{} used to decompose an $C_{n-1}(U)$ gate.
\end{lem}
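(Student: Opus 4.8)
The plan is to read the \cnot{} count straight off the explicit circuit for $G_k$ built in the proof of Lemma~\ref{prop_iso} and displayed in Fig.~\ref{fig:decG_k}, namely
\[
G_k = \prod_{s=0}^{n-1}\left(\Delta_s\otimes I^{\otimes s}\right) C^{\textnormal{u}}_{n-1-s}(U^{\textnormal{u}}_s)\, C_{n-1}(U_s),
\]
where $C^{\textnormal{u}}_{0}(U^{\textnormal{u}}_{n-1})=I$ and $\Delta_{n-1}=I$. I would bound the cost of the uniformly controlled gates (together with the diagonal gates that accompany them) and the cost of the multi-controlled gates separately, and then add the two.

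For the uniformly controlled part, I would note that for each $s\in\{0,1,\dots,n-2\}$ the gate $C^{\textnormal{u}}_{n-1-s}(U^{\textnormal{u}}_s)$ has $n-1-s$ control qubits, so by item~\ref{pt:1} it can be implemented up to a diagonal gate using at most $2^{n-1-s}-1$ \cnots{}; the construction of Lemma~\ref{prop_iso} is set up precisely so that this leftover diagonal may be taken to be $\Delta_s$ (which is exactly why the $\Delta_s$ appear in Fig.~\ref{fig:decG_k}), and since a diagonal gate only introduces phases it does not disturb the disentangling carried out by Lemma~\ref{unif_disent.}, hence it costs nothing further. The factor with $s=n-1$ contributes no uniformly controlled gate. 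Summing, the uniformly controlled gates cost at most $\sum_{s=0}^{n-2}\left(2^{n-1-s}-1\right)=\left(2^n-2\right)-(n-1)=2^n-n-1$ \cnots{}.

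For the multi-controlled part, I would recall from the construction in the proof of Lemma~\ref{prop_iso} that $C_{n-1}(U_s)=I$ in all cases except when $k_s=0$ and $b^k_{s+1}\neq0$, in which case it is supplied by Lemma~\ref{zero_entry} to restore the hypothesis~\eqref{eq:lem_add_cond.} of Lemma~\ref{unif_disent.}; for $s=n-1$ one has $b^k_n=k\neq0$, so this case is included as well, a nontrivial MCG being needed there exactly when $k_{n-1}=0$. Thus the set of indices $s\in\{0,1,\dots,n-1\}$ carrying a nontrivial $C_{n-1}(U_s)$ is precisely the set of size $Q^k(n)$, and each such gate costs at most $N_{C_{n-1}(U)}$ \cnots{}. (Note that $s=0$ never contributes, since $b^k_1=k_0$ makes the condition $k_0=0\wedge b^k_1\neq0$ unsatisfiable; and in any case the gate $C_{n-1}(U_0)$ could be merged into $C^{\textnormal{u}}_{n-1}(U^{\textnormal{u}}_0)$.) Adding the two contributions yields the bound $\left(2^n-n-1\right)+Q^k(n)\,N_{C_{n-1}(U)}$.

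The argument is essentially careful bookkeeping, so the steps that require most attention are the two claims above: first, that the leftover diagonals produced by the up-to-diagonal decomposition of the UCGs may legitimately be identified with the $\Delta_s$ without affecting the column-zeroing (which follows from the invariance-up-to-phase observation in the proof of Lemma~\ref{prop_iso}); and second, that $Q^k(n)$ counts the nontrivial MCGs \emph{exactly}, with the boundary indices $s=0$ (never counted, and in any case absorbable) and $s=n-1$ (counted iff $k_{n-1}=0$, using $b^k_n=k\neq0$) correctly accounted for. I do not expect any genuine difficulty beyond this.
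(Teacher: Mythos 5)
Your proposal is correct and follows essentially the same route as the paper: it reads the count off the circuit of Lemma~\ref{prop_iso}, charges $\sum_s(2^{n-1-s}-1)=2^n-n-1$ \cnots{} for the UCGs implemented up to the diagonal gates $\Delta_s$, and charges $N_{C_{n-1}(U)}$ for each of the (at most) $Q^k(n)$ nontrivial MCGs; your explicit checks of the boundary indices $s=0$ and $s=n-1$ are consistent with the construction in the paper (the only nuance being that the paper states ``at most $Q^k(n)$'' rather than ``exactly,'' which is the safer phrasing for an upper bound).
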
 

\begin{proof} 
To decompose the quantum gate $G_k$ we use the decomposition scheme described in the proof of  Lemma~\ref{prop_iso}.  The number of \cnots{} used to decompose the UCGs (together with the diagonal gates) give a count of $\Sigma_{s=0}^{n-1}(2^{n-1-s}-1)=2^{n}-n-1$ \cnots{}~\cite{10}. By the construction of the proof of  Lemma~\ref{prop_iso} we conclude, that the quantity of MCGs used for the decomposition of $G_k$ is  at most $Q^k(n)$. We add the number of \cnots{} used to decompose $Q^k(n)$ MCGs to the \cnot{} count used to decompose the UCGs and get the claimed count.\end{proof}

\begin{cor} \label{cor_total_number_mcg} The number of MCGs $Q(m,n)$
  used to decompose all operators in $\{G_i \}_{i \in
    \{1,2,\dots,2^{m}-1\}}$ using the decomposition scheme as in
  the proof of Lemma~\ref{prop_iso}, is given by:
\begin{equation}
\label{eq:total_mc_gates}
Q(m,n)=2^m \left(n-\frac{m}{2}-1\right)-n+m+1.
\end{equation}
\end{cor}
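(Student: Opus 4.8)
The plan is to start from Lemma~\ref{prop_iso_CNOT_count}, which says that decomposing $G_k$ requires at most $Q^k(n)$ MCGs, where $Q^k(n)=|\{s\in\{0,1,\dots,n-1\}:k_s=0\wedge b^k_{s+1}\neq 0\}|$. Hence $Q(m,n)=\sum_{k=1}^{2^m-1}Q^k(n)$, and what remains is a pure counting exercise, which I would organise by exchanging the two summations (over $k$ and over $s$) and splitting the range of $s$ into two regimes.

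First I would observe that, since $1\leqslant k\leqslant 2^m-1$, for every $s\in\{m,m+1,\dots,n-1\}$ the bit $k_s$ vanishes and $b^k_{s+1}=k\bmod 2^{s+1}=k\neq 0$. Thus each such $s$ contributes to $Q^k(n)$ for every admissible $k$, giving a total contribution of $(2^m-1)(n-m)$ to $Q(m,n)$ (this regime being empty when $n=m$).

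Next, for $s\in\{0,1,\dots,m-1\}$ I would use that, whenever $k_s=0$, one has $b^k_{s+1}=k_s2^s+b^k_s=b^k_s$, so that the condition ``$k_s=0$ and $b^k_{s+1}\neq 0$'' is equivalent to ``$k_s=0$ and $b^k_s\neq 0$''. Writing $k=a\cdot 2^{s+1}+k_s\cdot 2^s+b$ with $a\in\{0,\dots,2^{m-s-1}-1\}$, $k_s\in\{0,1\}$ and $b=b^k_s\in\{0,\dots,2^s-1\}$, the number of $k\in\{1,\dots,2^m-1\}$ with $k_s=0$ and $b\neq 0$ equals $2^{m-s-1}(2^s-1)=2^{m-1}-2^{m-s-1}$ (the excluded value $k=0$ has $b^k_s=0$, so it makes no difference whether we include it). Summing over $s$ gives
\[
\sum_{s=0}^{m-1}\bigl(2^{m-1}-2^{m-s-1}\bigr)=m\,2^{m-1}-\sum_{s=0}^{m-1}2^{m-s-1}=m\,2^{m-1}-(2^m-1).
\]

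Finally I would add the two contributions and simplify:
\[
Q(m,n)=(2^m-1)(n-m)+m\,2^{m-1}-(2^m-1)=2^m\Bigl(n-\tfrac{m}{2}-1\Bigr)-n+m+1,
\]
which is the claimed identity. I do not expect any genuine obstacle here; the only points needing care are the equivalence $b^k_{s+1}\neq 0 \iff b^k_s\neq 0$ under the hypothesis $k_s=0$, the boundary case $n=m$, and checking that $k=0$ contributes nothing in either regime, all of which are immediate.
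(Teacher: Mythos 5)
Your proposal is correct and follows essentially the same route as the paper: both exchange the sums over $k$ and $s$, split $s$ into the regimes $\{m,\dots,n-1\}$ (where every admissible $k$ contributes) and $\{0,\dots,m-1\}$, and arrive at the per-$s$ count $2^{m-1}-2^{m-s-1}$ in the latter regime. The only cosmetic difference is that the paper evaluates that count via the identity $I(k,s)=\delta_{k_s,0}-\delta_{b^k_{s+1},0}$ over $k\in\{0,\dots,2^m-1\}$, whereas you enumerate directly from the bit decomposition $k=a\,2^{s+1}+k_s 2^s+b$.
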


\begin{proof} 
We define the indicator function $I(k,s)$ by:
\begin{subnumcases}{I(k,s):=} 
1 \text{ if }k_s=0 \wedge b_{s+1}^k \neq 0 \label{indication_function1},\\
0   \text{ otherwise. } \label{indication_function2}
\end{subnumcases}

In other words
$I(k,s)=\delta_{k_s,0}(1-\delta_{b_{s+1}^k,0})=\delta_{k_s,0}-\delta_{b_{s+1}^k,0}$,
since $b_{s+1}^k=0$ implies $k_s=0$.  Now we can write $Q^k(n)=
\sum_{s=0}^{n-1} I(k,s)$. By Lemma~\ref{prop_iso_CNOT_count}:
\begin{equation}
\label{eq:total_mc_gates_calculatet_with_prop_4}
Q(m,n)=\sum_{k=1}^{2^m-1} Q^k(n)=\sum_{s=0}^{n-1} Q_s(m),
\end{equation}
where $Q_s(m):= \sum_{k=1}^{2^m-1}  I(k,s)$ denotes the number of MCGs acting on the $(n-s)$th qubit used to decompose all the gates in $\{G_i   \}_{i \in \{1,2,\dots,2^{m}-1\}}$.
If $m \leqslant s \leqslant n-1$ we have:
\begin{equation}
\label{eq:Q_s2}
Q_s(m)= \sum_{k=1}^{2^m-1}  I(k,s)=2^m-1,
\end{equation}
since $ I(k,s)=1$ for the whole index range. 
If $0 \leqslant s \leqslant m-1$ we include $k=0$ into the index range to simplify the combinatorial idea behind the following calculation:
\begin{equation}
\label{eq:Q_s1}
Q_s(m)= \sum_{k=0}^{2^m-1} \delta_{k_s,0}- \delta_{k \text{ mod }2^{s+1},0} =2^{m-1}- 2^{m-s-1}.
\end{equation}
Here we have used that $\delta_{b_{s+1}^k,0}= \delta_{k
  \text{mod}2^{s+1},0} $ by definition of $b_{s+1}^k$.  Plugging
everything into equation~(\ref{eq:total_mc_gates_calculatet_with_prop_4}), we
get the claimed count.\end{proof}

\begin{lem} [Column-by-column decomposition] \label{isocc} Let $V$ be
  an $m$ to $n$ isometry, described by a $2^n\times2^m$ matrix, and
  $I_{2^n\times2^m}$ denote the first $2^m$ columns of the
  $2^n\times2^{n}$ identity matrix. There exist quantum gates
  $G_{1},G_{2},\dots,G_{2^m-1}$ of the same form as in
  Lemma~\ref{prop_iso}, as well as a quantum gate $G_0$, which
  satisfies equation~(\ref{eq:prop2_Gk}) for an arbitrary $n$-qubit
  state $\ket{\psi}$, and a diagonal gate $ \Delta$ acting on $m$
  qubits, such that
\begin{equation}
\label{eq:iso3}
	 G_{0}^{\dagger}G_{1}^{\dagger} \dots G_{2^m-1}^{\dagger}  \left(I^{\otimes(n-m)}\otimes \Delta^{\dagger}\right) I_{2^n\times2^m}=V.
\end{equation}
\end{lem}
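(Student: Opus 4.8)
The plan is to build a single unitary $G:=G_{2^m-1}\cdots G_1G_0$ by processing the columns of $V$ one at a time, arrange that $GV=(I^{\otimes(n-m)}\otimes\Delta^\dagger)I_{2^n\times2^m}$, and then read off~\eqref{eq:iso3} by left-multiplying with $G^\dagger=G_0^\dagger G_1^\dagger\cdots G_{2^m-1}^\dagger$. Here $\Delta$ will be a single diagonal gate on $m$ qubits absorbing all the global phases that accumulate during the construction.

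First I would handle the column $k=0$ separately: put $\ket{\psi_0}:=V\ket{0}_m$ and take $G_0$ to be the reverse of a state-preparation circuit for $\ket{\psi_0}$ (e.g.\ the scheme of~\cite{10} run backwards), which may be viewed as the construction behind Lemma~\ref{prop_iso} with the index $k=0$, for which the orthogonality precondition $\braket{i}{\psi}=0$ is empty; this gives $G_0\ket{\psi_0}=e^{\I\varphi_0}\ket{0}_n$, i.e.\ \eqref{eq:prop2_Gk} for the arbitrary state $\ket{\psi_0}$. Then I would induct on $k$: assuming $G_0,\dots,G_{k-1}$ are already built, set $\ket{\psi_k}:=G_{k-1}\cdots G_0V\ket{k}_m$, verify that $\braket{i}{\psi_k}=0$ for $i\in\{0,\dots,k-1\}$, and apply Lemma~\ref{prop_iso} to $\ket{\psi_k}$ with index $k$ to obtain $G_k$ of the required form satisfying $G_k\ket{i}=e^{\I\varphi^{(k)}_i}\ket{i}$ for $i<k$ and $G_k\ket{\psi_k}=e^{\I\varphi^{(k)}_k}\ket{k}_n$.

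The orthogonality check is the crux of the argument. By~\eqref{eq:prop2_Gk} applied to $G_j$ and then~\eqref{eq:prop1_Gk} applied to $G_{j+1},\dots,G_{k-1}$, one has $G_{k-1}\cdots G_0V\ket{j}_m=e^{\I\vartheta_j}\ket{j}_n$ for every $j<k$ and some real $\vartheta_j$. Since $V$ is an isometry the columns $\{V\ket{j}_m\}_{j=0}^{2^m-1}$ are orthonormal, and since $G_{k-1}\cdots G_0$ is unitary the vectors $\{G_{k-1}\cdots G_0V\ket{j}_m\}_j$ remain orthonormal; in particular $\ket{\psi_k}$ is orthogonal to $e^{\I\vartheta_j}\ket{j}_n$ for each $j<k$, so $\braket{j}{\psi_k}=0$ there, which is exactly the hypothesis needed for Lemma~\ref{prop_iso}.

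Iterating up to $k=2^m-1$ produces $G=G_{2^m-1}\cdots G_0$ with $GV\ket{j}_m=e^{\I\Theta_j}\ket{j}_n$ for $j=0,\dots,2^m-1$, where $\Theta_j:=\varphi^{(j)}_j+\varphi^{(j+1)}_j+\dots+\varphi^{(2^m-1)}_j$ collects the phase from $G_j$ together with those contributed by~\eqref{eq:prop1_Gk} for the later gates. Thus the $j$-th column of $GV$ is $e^{\I\Theta_j}\ket{j}_n$; choosing $\Delta$ to be the diagonal gate on the $m$ least significant qubits with $(\Delta^\dagger)_{jj}=e^{\I\Theta_j}$ (that is, $\Delta=\mathrm{diag}(e^{-\I\Theta_0},\dots,e^{-\I\Theta_{2^m-1}})$), the matrix $(I^{\otimes(n-m)}\otimes\Delta^\dagger)I_{2^n\times2^m}$ has exactly these columns, hence equals $GV$; left-multiplying by $G^\dagger$ gives~\eqref{eq:iso3}. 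The only points requiring care are this phase bookkeeping, which I would deliberately defer to a single $m$-qubit diagonal correction at the very end rather than carry phase factors through the recursion, and the observation that $G_0$ must use the orthogonality-free ($k=0$) version of the construction; everything else is a direct application of Lemma~\ref{prop_iso} and the isometry/unitarity argument above.
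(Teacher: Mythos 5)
Your proposal is correct and follows essentially the same route as the paper's proof: handle $G_0$ as reversed state preparation, inductively apply Lemma~\ref{prop_iso} to each subsequent column after checking that unitarity of $G_{k-1}\cdots G_0$ together with orthonormality of the columns of $V$ guarantees the hypothesis $\braket{i}{\psi_k}=0$, and defer all accumulated phases to a single $m$-qubit diagonal gate at the end before inverting the circuit. Your orthogonality bookkeeping is, if anything, slightly more explicit than the paper's.
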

\begin{proof} 
Assume that we know a decomposition of a quantum gate  $G$ into
one-qubit and \cnot{} gates. We can inverse its order and take the
conjugate transpose of the one-qubit gates to get a decomposition of
$G^{\dagger}$, since a \cnot{} gate is inverse to itself. In
particular, $G^{\dagger}$ and $G$ can be implemented using the same number of \cnots{}. This allows us to replace equation~(\ref{eq:iso3}) by
\begin{equation}
\label{eq:iso_dec_inv}
	I_{2^n\times2^m}= (I^{\otimes(n-m)}\otimes\Delta)G_{2^m-1}G_{2^m-2}\dots G_{0}V.
\end{equation}
By definition of the gate $G_{0}$, we can choose it such that
$G_{0}V\ket{0}_m=e^{\I \varphi_0^0}\ket{0}_n$, where $\varphi_0^0 \in
\mathbb{R}$. Since the columns of an isometry are orthonormal and
$G_{0}$ is unitary, the columns of $G_{0}V$ are also orthonormal (for
example, $|\!\phantom._n\!\bra{0}G_{0}V\ket{0}_m|=1$ implies that
$\phantom._n\!\bra{0}G_{0}V\ket{1}_m=0$). We can therefore choose
$G_1$, such that $G_1G_{0}V\ket{1}_m=e^{\I \varphi_1^1}\ket{1}_n$,
where $\varphi_1^1 \in \mathbb{R}$ . By definition of $G_{1}$,
$G_{1}G_{0}V\ket{0}_m=e^{\I \varphi_0^1}\ket{0}_n$, where $\varphi_0^1
\in \mathbb{R}$. If we continue this procedure, we get
$G_{2^m-1}G_{2^m-2}\dots G_{0}V\ket{i}_m=e^{\I
  \varphi_{i}^{2^m-1}}\ket{i}_n$ for $i \in \{0,1,\dots,2^{m}-1 \}$,
where $\varphi_{i}^{2^m-1} \in \mathbb{R}$. We clear up the phases
with a diagonal gate $\Delta$ acting on the $m$ lower significant
qubits, such that
$(I^{\otimes(n-m)}\otimes\Delta)G_{2^m-1}G_{2^m-2}\dots
G_{0}V\ket{i}_m=\ket{i}_n$ for $i \in \{0,1,\dots,2^{m}-1 \}$, which
is equivalent to equation~(\ref{eq:iso_dec_inv}).\end{proof}

\begin{thm} [\cnot{} count for an isometry] \label{isocc_CNOT}
Let $m$ and $n$ be natural numbers with $n \geqslant 8$
  and $V$ be an isometry from $m$ qubits to $n$ qubits. There exists a
  decomposition of $V$ in terms of single-qubit gates and \cnots{} 
  such that the number of \cnot{} gates required satisfies
\begin{equation}
\label{eq:iso_CNOT_general_count}
N_{\mathrm{iso}}(m,n) \leqslant N_{\textrm{SP}}(n)+N_G(m,n)+N_{ \Delta}(m),
\end{equation}
where $N_{\textrm{SP}}(n)$ denotes the number of \cnots{} required for
state preparation on $n$ qubits starting from the state $\ket{0}_n$,
$N_{\Delta}(m)\leqslant2^m-2$ denotes the number of \cnots{} required
to decompose a diagonal gate acting on $m$ qubits~\cite{diag} and
$N_G(m,n)$ is the number of \cnots{} used to decompose the gates in
$\{G_{i}\}_{i \in \{1,2,\dots,2^m-1\}}$.
\end{thm}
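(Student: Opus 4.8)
The plan is to read the bound straight off the column-by-column construction of Lemma~\ref{isocc}, converting that operator factorization into a \cnot{} count. Lemma~\ref{isocc} supplies quantum gates $G_0,G_1,\dots,G_{2^m-1}$ --- with $G_1,\dots,G_{2^m-1}$ of the form treated in Lemma~\ref{prop_iso}, and $G_0$ chosen, as in the proof of that lemma, so that $G_0V\ket{0}_m=e^{\I\varphi_0^0}\ket{0}_n$ --- together with a diagonal gate $\Delta$ on $m$ qubits, such that
\[
  G_{0}^{\dagger}G_{1}^{\dagger}\cdots G_{2^m-1}^{\dagger}\,(I^{\otimes(n-m)}\otimes\Delta^{\dagger})\,I_{2^n\times2^m}=V .
\]
Writing $U:=G_{0}^{\dagger}G_{1}^{\dagger}\cdots G_{2^m-1}^{\dagger}(I^{\otimes(n-m)}\otimes\Delta^{\dagger})$ gives a unitary with $V=U\,I_{2^n\times2^m}$, so any circuit for $U$ over single-qubit and \cnot{} gates, run with the last $n-m$ qubits initialised in $\ket{0}$, implements $V$. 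It therefore suffices to bound the \cnot{} count of a circuit for $U$; since $U$ is the composition of the three blocks $G_0^{\dagger}$, $G_1^{\dagger}\cdots G_{2^m-1}^{\dagger}$, and $I^{\otimes(n-m)}\otimes\Delta^{\dagger}$, the plan is to decompose each block separately and add the counts.

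For the first block, $G_0^{\dagger}$: from $G_0V\ket{0}_m=e^{\I\varphi_0^0}\ket{0}_n$ and unitarity of $G_0$ we get $G_0^{\dagger}\ket{0}_n=e^{-\I\varphi_0^0}V\ket{0}_m$, so $G_0^{\dagger}$ is, up to a physically irrelevant global phase, a state preparation circuit for the first column of $V$. Because a \cnot{} is its own inverse and the inverse of a single-qubit gate is again a single-qubit gate, $G_0$ and $G_0^{\dagger}$ require equally many \cnots{}, so $G_0^{\dagger}$ can be implemented with $N_{\textrm{SP}}(n)$ \cnots{}. For the second block, the gates $G_1,\dots,G_{2^m-1}$ have the structure of Lemma~\ref{prop_iso} and are decomposed by the scheme of Fig.~\ref{fig:decG_k}; by definition their combined \cnot{} count is $N_G(m,n)$, and the same reversibility remark gives the same total for the adjoints $G_1^{\dagger},\dots,G_{2^m-1}^{\dagger}$. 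For the third block, $I^{\otimes(n-m)}\otimes\Delta^{\dagger}$ is a diagonal gate acting non-trivially on at most $m$ qubits, hence implementable with $N_\Delta(m)\leqslant 2^m-2$ \cnots{} via the decomposition of~\cite{diag}. Summing the three contributions gives $N_{\mathrm{iso}}(m,n)\leqslant N_{\textrm{SP}}(n)+N_G(m,n)+N_\Delta(m)$.

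I do not expect a genuine obstacle at this stage: all of the substance is already in Lemma~\ref{isocc} and the lemmas feeding into it (Lemmas~\ref{unif_disent.}, \ref{zero_entry}, \ref{prop_iso}), and the only points that need a sentence of justification are the two used above --- that $G_0^{\dagger}$ is genuinely a state preparation circuit, and that passing to adjoints leaves \cnot{} counts unchanged. The hypothesis $n\geqslant 8$ plays no role in this abstract inequality itself; it is retained because it is the regime in which the next step (outside the present statement) substitutes explicit values --- the UCG count of~\cite{10}, the MCG count of $C_{n-1}(W)$ from Theorem~\ref{ThmC}, and Corollary~\ref{cor_total_number_mcg} for the number of MCGs needed --- so as to turn $N_G(m,n)$ into a closed-form estimate.
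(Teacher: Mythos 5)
Your proposal is correct and follows essentially the same route as the paper: it invokes Lemma~\ref{isocc} for the factorization $V=G_0^\dagger G_1^\dagger\cdots G_{2^m-1}^\dagger(I^{\otimes(n-m)}\otimes\Delta^\dagger)I_{2^n\times2^m}$, identifies $G_0^\dagger$ with state preparation on the first column, uses the adjoint-preserves-\cnot{}-count observation (which the paper makes inside the proof of Lemma~\ref{isocc}), and adds the diagonal-gate cost from~\cite{diag}. The only difference is cosmetic: the paper's proof also spells out the explicit formula for $N_G(m,n)$ via Lemma~\ref{prop_iso_CNOT_count} and Corollary~\ref{cor_total_number_mcg}, which you correctly defer to the subsequent corollary since the theorem statement only requires the abstract inequality.
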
 
\begin{proof} 
  We decompose $V$ as described in Lemma~\ref{isocc}, and $\{G_i\}_{i
    \in \{1,2,\dots,2^m-1\}}$ as in the proof of
  Lemma~\ref{prop_iso}. By Lemma~\ref{prop_iso_CNOT_count} we have
\setlength{\arraycolsep}{0.0em}
\begin{eqnarray}
N_G(m,n)&{}={}&\sum_{k=1}^{2^m-1}2^n-n-1+ Q^k(n)N_{C_{n-1}(U)} \nonumber \\
&&{=}\; \left(2^m-1\right)\left(2^n-n-1\right)+Q(m,n)N_{C_{n-1}(U)} \nonumber 
\label{eqN_G(m,n)}
\end{eqnarray}
\setlength{\arraycolsep}{5pt} where
$Q(m,n)=2^m(n-\frac{m}{2}-1)-n+m+1$ is the number of MCGs used, as given
by Corollary~\ref{cor_total_number_mcg}, and $N_{C_{n-1}(U)}$ denotes
the number of \cnots{} needed to decompose a MCG ${C_{n-1}(U)}$, given
by Theorem~\ref{ThmC}. Note that we require $U\in SU(2)$ to
use Theorem~\ref{ThmC}. This causes no problems in our construction,
since Lemma~\ref{zero_entry} holds for $U \in SU(2)$. The gate
$G^{\dagger}_{0}$ can be decomposed using a decomposition scheme for
state preparation, which finishes the proof.\end{proof}

\begin{cor} [Explicit count for an isometry] \label{cor_isocc_CNOT_count} The number of \cnots{} required to
  decompose an $m$ to $n\geqslant 8$ isometry $V$ satisfies
  \setlength{\arraycolsep}{0.0em}
\begin{eqnarray}\label{eq:Iso_CNOT}
N_{\mathrm{iso}}(m,n)&{} \leqslant {}&\lceil 2^{m+n}-\frac{1}{24}2^n -2\cdot2^{\frac{n}{2}} \\
&&{+}\:2^m \left(28n^2+m(44-14n)-117n+88\right)\nonumber \\
&&{-}\:28n^2+m\left(28n-88\right)+117n-87 \rceil.\nonumber 
\end{eqnarray}
\setlength{\arraycolsep}{5pt}
\end{cor}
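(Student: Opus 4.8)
The plan is to derive the explicit bound purely by substituting the sharpest available explicit counts into the general estimate of Theorem~\ref{isocc_CNOT}, namely $N_{\mathrm{iso}}(m,n)\le N_{\mathrm{SP}}(n)+N_G(m,n)+N_\Delta(m)$, and then collecting terms. First I would record $N_\Delta(m)\le 2^m-2$ from point~\ref{pt:2} (i.e.~\cite{diag}). Next, I would combine the formula $N_G(m,n)=(2^m-1)(2^n-n-1)+Q(m,n)\,N_{C_{n-1}(U)}$ established in the proof of Theorem~\ref{isocc_CNOT} with the exact count $Q(m,n)=2^m\!\left(n-\tfrac m2-1\right)-n+m+1$ of Corollary~\ref{cor_total_number_mcg}. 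The multi-controlled gates produced by Lemma~\ref{zero_entry} act by an element of $SU(2)$, so Theorem~\ref{ThmC} applies (this is where the hypothesis $n\ge 8$ enters) and gives $N_{C_{n-1}(U)}\le 28n-88$; I would use this single value for both parities of $n$, which over-counts slightly when $n$ is odd but keeps the bound valid and parity-free in this term.

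For the state-preparation term I would use the best explicit counts known for each parity: $\tfrac{23}{24}2^n-2\cdot 2^{n/2}+2$ from~\cite{3} when $n$ is even, and the optimized bound~\eqref{eq:statepreparation_opt} of Remark~\ref{spodd} when $n$ is odd; since $\tfrac{23}{24}2^n-\tfrac32 2^{(n+1)/2}+\tfrac43\le \tfrac{23}{24}2^n-2\cdot 2^{n/2}+2$ for all $n\ge 5$, in both cases $N_{\mathrm{SP}}(n)\le\tfrac{23}{24}2^n-2\cdot 2^{n/2}+2$. What remains is bookkeeping: expand $(2^m-1)(2^n-n-1)$, multiply out $Q(m,n)(28n-88)$, and add the three contributions. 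The $2^{m+n}$ term comes from $2^m\cdot 2^n$ in $N_G$; the leading $2^n$ pieces combine as $\tfrac{23}{24}2^n-2^n=-\tfrac1{24}2^n$; the $2^{n/2}$ term is inherited verbatim from $N_{\mathrm{SP}}$; the coefficient of $2^m$ collects $2^n-n-1$, the polynomial $(n-\tfrac m2-1)(28n-88)=28n^2-116n+88+m(44-14n)$, and the $+1$ from $N_\Delta$, giving $2^n+28n^2-117n+88+m(44-14n)$; and the purely polynomial remainder collects $n+1$, $(-n+m+1)(28n-88)$, the $+2$ from $N_{\mathrm{SP}}$ and the $-2$ from $N_\Delta$, giving $-28n^2+m(28n-88)+117n-87$. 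Finally, the left-hand side is an integer while the assembled bound is not (through the odd-$n$ state-preparation input, retained here in the term $2\cdot 2^{n/2}$), so the bound may be replaced by its ceiling, which is exactly equation~\eqref{eq:Iso_CNOT}.

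There is no genuinely hard step here; the work is arithmetic reorganisation of results already proved. The one point requiring care is uniformity in the parity of $n$: both Theorem~\ref{ThmC} and the state-preparation inputs split into an even and an odd case with different subleading terms, and I must check that the uniform substitutions chosen ($28n-88$ for $N_{C_{n-1}(U)}$, and $\tfrac{23}{24}2^n-2\cdot 2^{n/2}+2$ for $N_{\mathrm{SP}}(n)$) are honest upper bounds in \emph{both} cases, and that the hypotheses of every invoked result (the $n\ge 8$ of Theorems~\ref{ThmC} and~\ref{isocc_CNOT}, and $U\in SU(2)$ for Lemma~\ref{zero_entry}) hold under the single assumption $n\ge 8$. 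A useful consistency check on the final constant: it is $1$ (from $-(2^n-n-1)$) $-88$ (from $1\cdot(28n-88)$) $+2$ (from $N_{\mathrm{SP}}$) $-2$ (from $N_\Delta$) $=-87$, confirming the stated formula.
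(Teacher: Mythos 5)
Your proposal is correct and follows essentially the same route as the paper: substitute $N_{\Delta}(m)\leqslant 2^m-2$, $N_{C_{n-1}(U)}\leqslant 28n-88$ from Theorem~\ref{ThmC} (over-counting for odd $n$), and $N_{\mathrm{SP}}(n)\leqslant\frac{23}{24}2^n-2\cdot2^{n/2}+2$ into Theorem~\ref{isocc_CNOT} via Corollary~\ref{cor_total_number_mcg}, then collect terms. Your explicit parity checks and the verification that the odd-$n$ state-preparation bound is dominated by the even-$n$ expression are details the paper leaves implicit, but they do not change the argument.
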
 
\begin{proof} 
  Theorem~\ref{ThmC} implies that $N_{C_{n-1}(U)}\leqslant 28n-88$ for
  all $n$ (for simplicity we over-count in the case that $n$ is
  odd). The asymptotic best-known \cnot{} counts for state preparation
  (see Table~\ref{tab:Results}) give us the upper bound
  $N_{\textrm{SP}}(n) \leqslant
  \frac{23}{24}2^n-2\cdot2^{\frac{n}{2}}+2$. 
  The number of \cnots{} used to decompose a diagonal gate $\Delta$
  acting on $m$ qubits is at most
  $N_{\Delta}(m)=2^m-2$~\cite{diag}. Using the inequality~(\ref{eq:iso_CNOT_general_count}) this leads to the claimed count.\end{proof}

\subsection{Optimization of the decomposition of an isometry using the CSD} \label{sec:app_isocsd}

\begin{thm} [Optimized CSD approach] \label{opt_CNOT_count_CSD} 
Let $m$ and $n$ be natural numbers with $2\leqslant m\leqslant n$
  and $V$ be an isometry from $m$ qubits to $n$ qubits. There exists a
  decomposition of $V$ in terms of single-qubit gates and \cnots{}
  such that the number of \cnot{} gates required satisfies
\begin{equation}\label{eq:CSD_count_2}
  N_{\mathrm{iso}}(m,n)\leqslant  \frac{23}{144}\left(4^m+2\cdot4^n\right)-2^{m-1}-2^{n}+ \frac{1}{3}(m-n+4).
\end{equation}
\end{thm}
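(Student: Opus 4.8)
The plan is to reuse, essentially verbatim, the recursive structure of the (non-optimized) CSD decomposition of Section~\ref{sec:isocsd}, but to feed into it the \emph{optimized} \cnot{} counts for the generic unitaries that appear, together with the \cnot{}-merging tricks from the Appendix of~\cite{2}. Recall that one application of the CSD to the $2^n\times 2^n$ unitary representation $V_n$ of the isometry (with $n>m$, so the first control qubit can be taken in the state $\ket{0}$) gives $V_n=\mathrm{diag}(A_0,A_1)\,C^{\textnormal{u}}_{n-1}(R_y)\,V_{n-1}$, where $B_0=B_1=V_{n-1}$ is an $m$ to $n-1$ isometry and $\mathrm{diag}(A_0,A_1)$ is a multiplexed $(n-1)$-qubit unitary; demultiplexing it via \eqref{CSD2} yields $\mathrm{diag}(A_0,A_1)=(I\otimes U_a)\,C^{\textnormal{u}}_{n-1}(R_z)\,(I\otimes U_b)$ with $U_a,U_b$ generic $(n-1)$-qubit unitaries. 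Hence each recursion level contributes exactly two generic $i$-qubit unitaries and two uniformly controlled rotations ($C^{\textnormal{u}}_i(R_z)$ and $C^{\textnormal{u}}_i(R_y)$), as in the derivation of \eqref{eq:rec1}--\eqref{eq:recs1}.

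First I would replace the weak unitary count $\frac{9}{16}4^i-\frac32 2^i$ used in Section~\ref{sec:isocsd} by the optimized count $N_U(i)=\frac{23}{48}4^i-\frac32 2^i+\frac43$ of~\cite{2}, used both at the base of the recursion and for each of $U_a,U_b$. Second, I would invoke the interface optimizations of~\cite{2}: the terminal \cnot{} of the Gray-code decomposition of a uniformly controlled rotation can be merged with the boundary \cnot{} of the CSD decomposition of an adjacent generic unitary, so one \cnot{} is saved at each of the three interior interfaces of a level (between the two rotations and the two generic unitaries). This refines \eqref{eq:rec1}--\eqref{eq:recs1} to
\begin{align}
N_{\mathrm{iso}}(m,i+1)&=N_{\mathrm{iso}}(m,i)+2N_U(i)+2^{i+1}-3\nonumber\\
&=N_{\mathrm{iso}}(m,i)+\frac{23}{24}4^{i}-2^{i}-\frac13,\quad m\leqslant i<n,\nonumber\\
N_{\mathrm{iso}}(m,m)&=N_U(m)=\frac{23}{48}4^m-\frac32 2^m+\frac43.\nonumber
\end{align}

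Then I would solve the recursion by summing geometric series: $N_{\mathrm{iso}}(m,n)=N_U(m)+\sum_{i=m}^{n-1}\left(\frac{23}{24}4^{i}-2^{i}-\frac13\right)$, using $\sum_{i=m}^{n-1}4^i=\frac13(4^n-4^m)$, $\sum_{i=m}^{n-1}2^i=2^n-2^m$ and $\sum_{i=m}^{n-1}1=n-m$. Collecting the $4^m$, $4^n$, $2^m$, $2^n$ and constant contributions (the $4^m$ coefficient $\frac{23}{48}-\frac{23}{72}=\frac{23}{144}$, the $4^n$ coefficient $\frac{23}{72}=2\cdot\frac{23}{144}$, etc.) gives exactly $\frac{23}{144}(4^m+2\cdot4^n)-2^{m-1}-2^{n}+\frac13(m-n+4)$, which is \eqref{eq:CSD_count_2}; in particular $m=n$ recovers the optimized $n$-qubit unitary count of~\cite{2}, as it should.

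The main obstacle is not the recursion (routine once set up) but justifying the interface savings in the isometry setting: one must verify that the boundary \cnot{}s produced by the CSD/demultiplexing of $U_a$ and $U_b$ genuinely line up with the terminal \cnot{}s of $C^{\textnormal{u}}_i(R_z)$ and $C^{\textnormal{u}}_i(R_y)$, and that the Appendix-of-\cite{2} optimizations, stated there for a full unitary $U_n$, still apply when the rightmost factor is a genuine $m$ to $n-1$ isometry $V_{n-1}$ rather than a unitary (so that the recursion may legitimately terminate at the $m$-qubit unitary $V_m$ with the best known count inserted there, cf.\ the remark on stopping the recursion early in Section~\ref{sec:isocsd}). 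A secondary point demanding care is that \eqref{eq:CSD_count_2} is asserted exactly, not merely to leading order, so every lower-order term must be tracked faithfully through the recursion; a fully rigorous treatment would carry the precise per-level \cnot{} count, mirroring the bookkeeping of~\cite{2}, rather than the schematic constants used above.
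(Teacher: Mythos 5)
Your proposal is correct and follows essentially the same route as the paper: the CSD recursion of Section~\ref{sec:isocsd}, with the base case replaced by the optimized unitary count $N_U(m)=\frac{23}{48}4^m-\frac{3}{2}2^m+\frac{4}{3}$ of~\cite{2} and the Appendix-A/B optimizations of~\cite{2} applied at each level; your per-level increment $\frac{23}{24}4^i-2^i-\frac{1}{3}$ agrees exactly with the paper's, and the closed form follows. The only structural difference is bookkeeping: the paper keeps the unoptimized total $\tilde{N}_{\mathrm{iso}}(m,n)$ of~\eqref{eq:CSD_count_1} and subtracts the total number of uniformly controlled $R_y$ gates $Q_{R_y}(m,n)$ and of two-qubit gates $Q_{U_2}(m,n)$ (each counted by its own recursion), plus a global $+1$ for the single two-qubit gate that cannot be saved. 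One caution about your justification of the ``$-3$ per level'': in the paper's accounting these three savings are \emph{not} one per interior interface. They are one saving for the newly introduced $C^{\textnormal{u}}_i(R_y)$ gate (Appendix~A of~\cite{2}) plus two savings arising because each of the two generic $i$-qubit unitaries, once embedded in the larger circuit, has \emph{all} of its two-qubit gates implementable with two \cnots{}, whereas the stand-alone count $N_U(i)$ charges three \cnots{} for one of them; no saving is attributed to the $C^{\textnormal{u}}_i(R_z)$ interfaces. The totals coincide, but a literal ``one \cnot{} merged at each of the three interfaces'' argument would misattribute the source of two of the savings; the correct statement is the global one from Appendix~B of~\cite{2}, that all two-qubit gates apart from one can be done with two \cnots{}. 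With that repair your derivation matches the paper's proof in substance and in every coefficient.
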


Note that we recover the optimized \cnot{} count for general quantum
gates~\cite{2} setting $n=m$ in the inequality~(\ref{eq:CSD_count_2}).
 
\begin{proof}  
We optimize the \cnot{} count of
  Section~\ref{sec:isocsd} using the two ideas described in the
  Appendix of~\cite{2}. There it is shown how one can combine the
  decomposition of the $C^{\textnormal{u}}_{i}(R_{y})$ gates with
  neighbouring $i$-qubit-$C^{\textnormal{u}}_{1}(U)$ gates to save one
  \cnot{} gate over what would be required if the
  $C^{\textnormal{u}}_{i}(R_{y})$ gates were decomposed on their
  own. The essential idea is to use the circuit identity
\[
\Qcircuit @C=0.5em @R=.45em {
&&&\gate{}\qwx[1]&\gate{R_y}\qwx[1]\qw&\gate{}\qwx[1]&\qw&&&&&\gate{}\qwx[1]&\gate{R_y}\qwx[1]\qw& \targ   &\gate{R_y}\qwx[1]\qw&\gate{}\qwx[1]&\qw\\
&\lstick{n-2}&{\backslash}&\multigate{1}{U}&\gate{}\qwx[1]&\multigate{1}{U}&\qw&&=&&{\backslash}&\multigate{1}{U}&\gate{}&\qw&\gate{}&\multigate{1}{U}&\qw\\
&&&\ghost{U}&\gate{}&\ghost{U}&\qw&&&&&\ghost{U}&\qw&\ctrl{-2}&\qw&\ghost{U}&\qw\\
}
\]

The same idea also works for the CSD adapted to isometries, allowing
us to save 1 \cnot{} per uniformly controlled $R_y$ gate.

To count the number of uniformly controlled $R_y$ gates
$Q_{R_{y}}(m,n)$ used for an $m$ to $n$ isometry using the
decomposition scheme of Section~\ref{sec:isocsd} we use the following
recursion relation:
\begin{align}
\label{eq:rec2}
		Q_{R_{y}}(m,i+1)&=Q_{R_{y}}(m,i)+
                \frac{2\cdot 4^{i-2}-2}{3}\!+\! 1 \text{ if }m\leqslant i < n\\
\label{eq:recs2}
		Q_{R_{y}}(m,m)&=\frac{4^{m-2}-1}{3},
\end{align}
where the last relation comes from Appendix~A of~\cite{2}.  Solving
these gives
\begin{equation}
\label{eq:rec2function}
		Q_{R_{y}}(m,n)=\frac{1}{144}\left(2^{2n+1}+4^m\right)+\frac{1}{3}\left(n-m-1\right).
\end{equation}

The CSD decomposition is used until the only generic unitaries that
remain are on two qubits.  In Appendix~B of~\cite{2} it is shown how
to save one \cnot{} gate for each of the remaining two-qubit gates
apart from one. Again this idea also works using the CSD adapted to
isometries. The number of two-qubit gates $Q_{U_{2}}(m,n)$ arising in
the decomposition scheme described in Section~\ref{sec:isocsd}
satisfies the following recursion relation:
\begin{align}
\label{eq:rec3}
		Q_{U_{2}}(m,i+1)&=Q_{U_{2}}(m,i)+2\cdot4^{i-2} \text{ if } m\leqslant i < n,\\
\label{eq:recs3}
		Q_{U_{2}}(m,m)&=4^{m-2},
\end{align}
where the last of these relations is taken from Appendix~B of~\cite{2}.
Solving these gives
\begin{equation}
\label{eq:rec3function}
		Q_{U_{2}}(m,n)=\frac{1}{48}\left(2^{2n+1}+4^m\right).
\end{equation}

The optimized \cnot{} count is thus given by
\begin{equation}
\label{eq:Nisocsd_general}
		N_{\mathrm{iso}}(m,n)=\tilde{N}_{\mathrm{iso}}(m,n)-Q_{R_{y}}(m,n)-Q_{U_{2}}(m,n)+1,
\end{equation}
where $\tilde{N}_{\mathrm{iso}}(m,n)$ is bounded by the inequality~(\ref{eq:CSD_count_1}). This leads to the claimed count.\end{proof}

\subsection{Optimized state preparation} \label{opt_SP_app}

For state preparation on two and three qubits there exist ad hoc
methods using one and three \cnot{} gates
respectively~\cite{SP_three_qubits1}.  For state preparation on
$n\geqslant 4$ qubits we use the decomposition scheme described
in~\cite{3}. In the case that $n$ is even, this uses the following
iterative circuit:
\[
\Qcircuit @C=0.45em @R=0.38em {
\lstick{\ket{0}}&&\qw&\multigate{8}{SP}&\qw&&&&&&&\lstick{\ket{0}}&&\qw&\qw&\multigate{3}{SP}&\qw&\ctrl{5}&\qw&\qw&\qw&\multigate{3}{U_1}&\qw\\
&&\vdots&&&&&&&&&&&\vdots&&&&&\ddots\\
&&&&&&&\\
\lstick{\ket{0}}&&\qw&\ghost{SP}&\qw&&&&&&&\lstick{\ket{0}}&&\qw&\qw&\ghost{SP}&\qw&\qw&\qw&\ctrl{5}&\qw&\ghost{U_1}&\qw\\
&&&&&&=&&&&&&&\\
\lstick{\ket{0}}&&\qw&\ghost{SP}&\qw&&&&&&&\lstick{\ket{0}}&&\qw&\qw&\qw&\qw&\targ&\qw&\qw&\qw&\multigate{3}{U_2}&\qw\\
&&\vdots&&&&&&&&&&&\vdots&&&&&\ddots\\
&&&&&&&\\
\lstick{\ket{0}}&&\qw&\ghost{SP}&\qw&&&&&&&\lstick{\ket{0}}&&\qw&\qw&\qw&\qw&\qw&\qw&\targ&\qw&\ghost{U_2}&\qw
}
\]
where we have divided the qubits into two groups of $n/2$.  In other
words, state preparation on $n$ qubits is equivalent to state
preparation on $n/2$ qubits, $n/2$ \cnots{}, and then two $n/2$ qubits
unitary operations. If $n$ is odd, the unitary $U_1$ is replaced by an $\lfloor n/2 \rfloor$-qubit unitary and $U_2$ by an
$\lfloor n/2\rfloor$ to $\lfloor n/2\rfloor+1$ isometry.

If $n$ is odd we can implement $U_2$ using the CSD approach.  Furthermore, we can use a similar technical trick as described in Appendix~B of~\cite{2} to save one \cnot{} gate when implementing $U_1$: as noted in Appendix~B of~\cite{2} all apart from one of the two-qubit gates arising in the decomposition of a general unitary can be decomposed using two \cnot{} gates.  For the last one we can also extract a diagonal gate and merge it with the state preparation, since the diagonal gate commutes through the control qubits of the \cnot{} gates that precede $U_1$.

In other words, for $n$ even, we have 
\setlength{\arraycolsep}{0.0em}
\begin{eqnarray}
N_{\textrm{SP}}(n)&{}\leqslant{}& N_{\textrm{SP}}\left(\frac{n}{2}\right)+\frac{n}{2}+2N_{\textrm{iso}}\left(\frac{n}{2},\frac{n}{2}\right)-1\nonumber \\
N_{\textrm{SP}}(n+1)&{}\leqslant{}&N_{\textrm{SP}}\left( \frac{n}{2}\right)+\frac{n}{2}+N_{\textrm{iso}} \left(\frac{n}{2},\frac{n}{2} \right)\nonumber\\
&&{+}\:N_{\textrm{iso}}\left(\frac{n}{2},\frac{n}{2}+1\right)-1\label{CNOT_SP_odd}\, ,
\end{eqnarray}
where for the purpose of evaluating $N_{\textrm{iso}}$ in these
counts, we use the inequality~\eqref{eq:CSD_count_2}.  Starting from
$N_{\textrm{SP}}(2)=1$ and
$N_{\textrm{SP}}(3)=3$~\cite{SP_three_qubits1}, this allows us to
iteratively compute $N_{\textrm{SP}}(n)$ for increasing $n$.  For
illustration purposes, the circuit for state preparation on $4$ qubits
is shown in the following circuit diagram.
 
\[
\Qcircuit @C=0.45em @R=0.38em {
\lstick{\ket{0}}&\gate{U}&\ctrl{1}&\gate{U}&\ctrl{2}&\qw&\gate{U}&\ctrl{1}&\gate{U}&\ctrl{1}&\gate{U}&\qw&\qw&\qw\\
\lstick{\ket{0}}&\qw&\targ&\gate{U}&\qw&\ctrl{2}&\gate{U}&\targ&\gate{U}&\targ&\gate{U}&\qw&\qw&\qw\\
\lstick{\ket{0}}&\qw&\qw&\qw&\targ&\qw&\gate{U}&\targ&\gate{U}&\ctrl{1}&\qw&\targ&\gate{U}&\qw\\
\lstick{\ket{0}}&\qw&\qw&\qw&\qw&\targ&\gate{U}&\ctrl{-1}&\gate{U}&\targ&\gate{U}&\ctrl{-1}&\gate{U}&\qw\\\
}
\]

Note that the depth of the circuit is, to leading order, the number
of steps required to perform $U_2$, since $U_1$ and $U_2$ can be done
in parallel and dominate the gate count.

\section{Isometries on a small number of qubits} \label{small_cases_appendix}

\subsection{Isometries from one to two qubits} \label{1_2_iso_app}

We present an ad hoc decomposition for a $1$ to $2$ isometry $V$ reaching the theoretical lower bound of two \cnot{} gates. Our result is based  on the following decomposition of an  arbitrary two-qubit operator $U$ described in~\cite{2,unitary_lowerb1,unitary_lowerb2}.

\[
\Qcircuit @C=1.0em @R=.46em {
& \multigate{2}{U}&\qw&&&& \multigate{2}{\Delta}&\gate{U}&\ctrl{2} &\gate{R_y}&\ctrl{2} &\gate{U}&\qw \\
&  &        & =                            &&             &&&\\
&\ghost{U}  &\qw&&&& \ghost{\Delta}&  \gate{U}&\targ&\gate{R_y} &\targ&\gate{U}&\qw   \\
}
\]

We represent $V$ by a unitary matrix $V_2$ such that $V=V_2 I_{2^2\times2^1}$. Since we are only interested in the first two columns of $V_2$, we can replace the diagonal gate $\Delta$ of the last circuit by a single-qubit diagonal gate acting on the least significant qubit. Absorbing this gate into the neighbouring (arbitrary) single-qubit gate we conclude the following circuit equivalence.

\[
\Qcircuit @C=1.0em @R=.46em {
\lstick{\ket{0}}& \multigate{2}{V_2}&\qw&&&\lstick{\ket{0}}&\gate{U}&\ctrl{2} &\gate{R_y}&\ctrl{2} &\gate{U}&\qw \\
&  &        & =                            &&             &&&\\
&\ghost{V_2}  &\qw&&&&  \gate{U}&\targ&\gate{R_y} &\targ&\gate{U}&\qw   \\
}
\]

\subsection{Isometries leading to three
  qubit states}\label{app_sec_iso_on_three_qubits}

In this section we explain the steps needed to decompose isometries
from $m$ to $3$ qubits for $m=1$ and $m=2$. Note that for $m=0$ one
can use the decomposition scheme for state preparation given
in~\cite{SP_three_qubits1, SP_circuit_for_threee_qubits}, and for
$m=3$ the decomposition scheme of~\cite{2}.

\subsubsection{Isometries from one to three qubits} \label{app_sec_iso_one_to_three}

We use the column-by-column approach described in Section~\ref{sec:isounif} to decompose an isometry $V$ from one to three qubits. As in Section~\ref{sec_dec_schemes}, we represent the $8 \times 2$ matrix corresponding to $V$ by an $8 \times 8$ unitary matrix $G^{\dagger}$ by writing $V=G^{\dagger}I_{8 \times 2}$. The unitary $G_0^{\dagger}$ (defined in Section~\ref{sec:isounif}) corresponds to state preparation on three qubits ($G_0^{\dagger}\ket{0}^{\otimes 3}=V \ket{0}=:\ket{\psi_0^0}$) and can therefore be implemented with the techniques described in~\cite{SP_three_qubits1, SP_circuit_for_threee_qubits}. 

We now consider constructing a circuit for the unitary $G_1$. We
define $\ket{{\psi_1^0}}:=G_0V\ket{1}$ and note that its first entry
is zero. One can use Lemma~\ref{rotate_to_basis_state} to choose the
gates depicted in the circuit diagram below such that they have the
following action on $\ket{{\psi_1^0}}$ (as previously `$\ast$'
represents an arbitrary complex entry):
 \renewcommand{\arraystretch}{0.7}
\begin{equation*}
  \mathsmaller{\ket{{\psi_1^0}}} = \begin{array}{c@{\!\!\!}l}
  \left[ \begin{array}[c]{ccccc} 
     \mathsmaller{0}\\
     \mathsmaller{\ast}\\
    \mathsmaller{\ast}\\
      \mathsmaller{\ast}\\
     \mathsmaller{\ast}\\
     \mathsmaller{\ast}\\
    \mathsmaller{\ast}\\
      \mathsmaller{\ast}\\
  \end{array}  \right]
\end{array}
\phantom{a}
 \xrightarrow{ }  
 \begin{array}{c@{\!\!\!}l}
   \left[ \begin{array}[c]{ccccc} 
     \mathsmaller{0}\\
    \mathsmaller{\ast}\\
     \mathsmaller{0}\\
     \mathsmaller{\ast}\\
    \mathsmaller{0}\\
    \mathsmaller{\ast}\\
     \mathsmaller{0}\\
       \mathsmaller{\ast}\\
\end{array}  \right]
\end{array}
\phantom{a}
 \xrightarrow{ }  
 \begin{array}{c@{\!\!\!}l}
   \left[ \begin{array}[c]{ccccc} 
    \mathsmaller{0}\\
    \mathsmaller{\ast}\\
     \mathsmaller{0}\\
      \mathsmaller{0}\\
     \mathsmaller{0}\\
   \mathsmaller{\ast}\\
  \mathsmaller{0}\\
     \mathsmaller{\ast}\\
\end{array}  \right]
\end{array}
\phantom{a}
 \xrightarrow{}  
 \begin{array}{c@{\!\!\!}l}
   \left[ \begin{array}[c]{ccccc} 
      \mathsmaller{0}\\
    \mathsmaller{\ast}\\
     \mathsmaller{0}\\
      \mathsmaller{0}\\
     \mathsmaller{0}\\
   \mathsmaller{\ast}\\
  \mathsmaller{0}\\
     \mathsmaller{0}\\
\end{array}  \right]
\end{array}
\phantom{a}
 \xrightarrow{ }  
 \begin{array}{c@{\!\!\!}l}
   \left[ \begin{array}[c]{ccccc} 
     \mathsmaller{0}\\
    \mathsmaller{1}\\
   \mathsmaller{0}\\
     \mathsmaller{0}\\
    \mathsmaller{0}\\
 \mathsmaller{0}\\
     \mathsmaller{0}\\
    \mathsmaller{0}\\
\end{array}  \right]
\end{array}
\end{equation*}

 \renewcommand{\arraystretch}{1}

\[
\Qcircuit @C=1em @R=.05em {
&&&&& \gate{} \qwx[1] \qw 							& \qw					&\ctrl{1} 						&\gate{U_{1,2}} \qw					 	 &\qw			\\
&&&&& \gate{} \qwx[1] \qw							&\gate{U_{1,1}} \qw			&\gate{U^{\textnormal{u}}_{1,1}}	& \qw 			&		\qw 		\\
&&&&& \gate{U^{\textnormal{u}}_{1,0}} \qw 				&\ctrl{-1} 					&\qw							&\ctrl{-2} \qw			&\qw 						 \\
}
\]

Note that all the gates in the circuit above act trivially on the
state $\ket{0}^{\otimes 3}$. Therefore this represents a valid circuit
for the unitary $G_1$.

\begin{rmk} The notation in the circuit diagram above is as introduced
  in the general case in Section~\ref{sec:isounif}. The difference
  between the circuit above and the circuit we would get by the
  techniques of Section~\ref{sec:isounif} is that we switch the order
  of the UCG and the MCG (note that they commute by construction) and
  leave away some controls of the MCGs. Indeed, similar
  simplifications are possible for the most MCG, which arise in the
  column-by-column decomposition of arbitrary isometries from $m$ to
  $n$ qubits. We have not taken this into account in the general
  \cnot{} count, since it does not affect its leading order.
\end{rmk}

Since MCGs are a special case of UCGs, we can implement the MCGs using
UCGs instead. Furthermore, we can implement all the UCGs up to
diagonal gates (i.e., implement $\Delta C$ rather than $C$ for each
UCG $C$) and correct for these at the end using a diagonal gate
applied to the least significant qubit. Doing so we can save some
\cnots, because for small $n$, we know how to implement $\Delta
C^{\textrm{u}}_{n-1}(U)$ more efficiently than $C_{n-1}(U)$. For
example, we need $8$ \cnot{} gates to implement a $C_{2,3}(U)$ gate
(cf.\ Lemma~\ref{cor5.3} and~\ref{lem6.1}) and only $3$ \cnot{} gates
to implement a $\Delta C^{\textrm{u}}_2(U)$ gate
(cf.\ Table~\ref{tab:counts_overview}).

\[
\Qcircuit @C=0.8em @R=.05em {
& \gate{} \qwx[1] \qw 	&	\multigate{2}{\mathsmaller{\Delta}}						& \qw	& \qw				& \gate{} \qwx[1] \qw 		&	\multigate{1}{\mathsmaller{\Delta}}					&\gate{U_{1,2}} \qw		&	\multigate{2}{\mathsmaller{\Delta}}		 & \qw	 &\qw			\\
& \gate{} \qwx[1] \qw	& \ghost{\mathsmaller{\Delta}}								&\gate{U_{1,1}} \qw	&	\multigate{1}{\mathsmaller{\Delta}}	&\gate{U^{\textnormal{u}}_{1,1}}& \ghost{\mathsmaller{\Delta}}				& \qw 	& \ghost{\mathsmaller{\Delta}}		&		\qw 	 &\qw	\\
& \gate{U^{\textnormal{u}}_{1,0}} \qw 	& \ghost{\mathsmaller{\Delta}}						& \gate{} \qwx[-1] \qw 	 	& \ghost{\mathsmaller{\Delta}}		&\qw				&\qw							& \gate{} \qwx[-2] \qw  & \ghost{\mathsmaller{\Delta}}		&\gate{\mathsmaller{\Delta}}		 &\qw					 \\
}
\]

We implement each UCG together with its subsequent diagonal gate as
described in~\cite{10}. Together with the circuit for the unitary
$G_0$, this leads to the following circuit for the isometry $V$

\[
\Qcircuit @C=0.8em @R=.05em {
 \lstick{\ket{0}}&\targ&\ctrl{1}&\qw&\qw&\ctrl{2}&\qw&\ctrl{1}&\ctrl{2}&\qw  &\qw	\\
 \lstick{\ket{0}}&\qw&\targ&\targ&\ctrl{1}&\qw&\ctrl{1}&\targ&\qw&\ctrl{1}  &\qw	\\
& \ctrl{-2}&\qw& \ctrl{-1}&\targ&\targ&\targ&\qw&\targ&\targ   &\qw	\\
}
\]
where we have not depicted the single-qubit gates for simplicity.

\begin{figure*}[!t] 
\centering
 \renewcommand{\arraystretch}{0.5}

\begin{equation*}
  \mathsmaller{\ket{{\psi_1^0}}} = \begin{array}{c@{\!\!\!}l}
  \left[ \begin{array}[c]{ccccc} 
     \mathsmaller{0}\\
     \mathsmaller{\ast}\\
  \hline
    \mathsmaller{\ast}\\
      \mathsmaller{\ast}\\
    \hline
     \mathsmaller{\ast}\\
     \mathsmaller{\ast}\\
    \hline
    \mathsmaller{\ast}\\
      \mathsmaller{\ast}\\
    \hline
     \mathsmaller{\ast}\\
     \mathsmaller{\ast}\\
    \hline
    \mathsmaller{\ast} \\
      \mathsmaller{\ast}\\
    \hline
     \mathsmaller{\ast}\\
     \mathsmaller{\ast}\\
    \hline
    \mathsmaller{\ast}\\
      \mathsmaller{\ast}\\
  \end{array}  \right]
\end{array}
\phantom{a}
 \xrightarrow{ C_{3}^{\textnormal{u}}(U_{1,0}^{\textnormal{u}}) }  
 \begin{array}{c@{\!\!\!}l}
   \left[ \begin{array}[c]{ccccc} 
     \mathsmaller{0}\\
    \mathsmaller{\ast}\\
     \mathsmaller{0}\\
     \mathsmaller{\ast}\\
    \hline
    \mathsmaller{0}\\
    \mathsmaller{\ast}\\
     \mathsmaller{0}\\
       \mathsmaller{\ast}\\
    \hline
    \mathsmaller{0}\\
    \mathsmaller{\ast}\\
     \mathsmaller{0}\\
          \mathsmaller{\ast}\\
    \hline
    \mathsmaller{0}\\
     \mathsmaller{\ast}\\
    \mathsmaller{0}\\
    \mathsmaller{\ast}\\
\end{array}  \right]
\end{array}
\phantom{a}
 \xrightarrow{ C_{1}(U_{1,1})}  
 \begin{array}{c@{\!\!\!}l}
   \left[ \begin{array}[c]{ccccc} 
    \mathsmaller{0}\\
    \mathsmaller{\ast}\\
     \mathsmaller{0}\\
      \mathsmaller{0}\\
      \hline
     \mathsmaller{0}\\
   \mathsmaller{\ast}\\
  \mathsmaller{0}\\
     \mathsmaller{\ast}\\
    \hline
     \mathsmaller{0}\\
    \mathsmaller{\ast}\\
   \mathsmaller{0}\\
     \mathsmaller{\ast}\\
      \hline
     \mathsmaller{0}\\
    \mathsmaller{\ast}\\
   \mathsmaller{0}\\
     \mathsmaller{\ast}\\
\end{array}  \right]
\end{array}
\phantom{a}
 \xrightarrow{C^{\textnormal{u}}_{2}(U_{1,1}^{\textnormal{u}})}  
 \begin{array}{c@{\!\!\!}l}
   \left[ \begin{array}[c]{ccccc} 
      \mathsmaller{0}\\
    \mathsmaller{\ast}\\
     \mathsmaller{0}\\
      \mathsmaller{0}\\
     \mathsmaller{0}\\
   \mathsmaller{\ast}\\
  \mathsmaller{0}\\
     \mathsmaller{0}\\
    \hline
     \mathsmaller{0}\\
    \mathsmaller{\ast}\\
   \mathsmaller{0}\\
     \mathsmaller{0}\\
     \mathsmaller{0}\\
    \mathsmaller{\ast}\\
   \mathsmaller{0}\\
     \mathsmaller{0}\\
\end{array}  \right]
\end{array}
\phantom{a}
 \xrightarrow{C_{1}(U_{1,2})  }  
 \begin{array}{c@{\!\!\!}l}
   \left[ \begin{array}[c]{ccccc} 
     \mathsmaller{0}\\
    \mathsmaller{\ast}\\
   \mathsmaller{0}\\
     \mathsmaller{0}\\
    \mathsmaller{0}\\
 \mathsmaller{0}\\
     \mathsmaller{0}\\
    \mathsmaller{0}\\
    \hline
         \mathsmaller{0}\\
    \mathsmaller{\ast}\\
   \mathsmaller{0}\\
     \mathsmaller{0}\\
    \mathsmaller{0}\\
 \mathsmaller{\ast}\\
     \mathsmaller{0}\\
    \mathsmaller{0}\\
\end{array}  \right]
\end{array}
\phantom{a}
 \xrightarrow{C_{1}(U_{1,2}^{\textnormal{u}}) }  
 \begin{array}{c@{\!\!\!}l}
   \left[ \begin{array}[c]{ccccc} 
      \mathsmaller{0}\\
    \mathsmaller{\ast}\\
   \mathsmaller{0}\\
     \mathsmaller{0}\\
    \mathsmaller{0}\\
 \mathsmaller{0}\\
     \mathsmaller{0}\\
    \mathsmaller{0}\\
         \mathsmaller{0}\\
    \mathsmaller{\ast}\\
   \mathsmaller{0}\\
     \mathsmaller{0}\\
    \mathsmaller{0}\\
 \mathsmaller{0}\\
     \mathsmaller{0}\\
    \mathsmaller{0}\\

\end{array}  \right]
\end{array}
\phantom{a}
 \xrightarrow{C_{1}(U_{1,3}) }  
\begin{array}{c@{\!\!\!}l}
  \left[ \begin{array}[c]{ccccc} 
      \mathsmaller{0}\\
   \mathsmaller{1}\\
     \mathsmaller{0}\\
    \mathsmaller{0}\\
 \mathsmaller{0}\\
     \mathsmaller{0}\\
    \mathsmaller{0}\\
       \mathsmaller{0}\\
    \mathsmaller{0}\\
   \mathsmaller{0}\\
     \mathsmaller{0}\\
    \mathsmaller{0}\\
 \mathsmaller{0}\\
     \mathsmaller{0}\\
    \mathsmaller{0}\\
        \mathsmaller{0}\\
         \end{array}  \right]
  \end{array}
\end{equation*}

\[
\Qcircuit @C=4.em @R=.05em {
&& \gate{} \qwx[1] \qw 							& \qw					& \gate{} \qwx[1] \qw 		 		& \qw	&\ctrl{1} \qw			&\gate{U_{1,3}}	 	 &\qw			\\
&& \gate{} \qwx[1] \qw 							& \qw					& \gate{} \qwx[1] \qw  			&\gate{U_{1,2}}&\gate{U^{\textnormal{u}}_{1,2}} \qw	&		\qw				 	 &\qw			\\
&& \gate{} \qwx[1] \qw							&\gate{U_{1,1}} \qw			&\gate{U^{\textnormal{u}}_{1,1}}	& \qw 			&		\qw&		\qw&		\qw 		\\
&& \gate{U^{\textnormal{u}}_{1,0}} \qw 				&\ctrl{-1} 					&\qw							&\ctrl{-2} \qw		&		\qw&\ctrl{-3} \qw	&\qw 						 \\
}
\]

\caption{Implementing the second column of an isometry $V$ from
  one to four qubits with optimized controlling of the MCGs. Note that all gates act trivially on
  $\ket{0000}$. The symbol ``$\ast$" denotes an arbitrary complex
  number.}
\label{fig:coloumn2_optimized_MCG}
\end{figure*}

 \renewcommand{\arraystretch}{1}

\subsubsection{Isometries from two to three qubits}\label{app:two_to_three}

We use the CSD-approach described in Section~\ref{sec:isocsd} to decompose an isometry, $V$, from two to three qubits. As in Section~\ref{sec_dec_schemes}, we represent the $8 \times4$ matrix corresponding to $V$ by an $8 \times 8$ unitary matrix
$G^{\dagger}$, by writing $V=G^{\dagger} I_{8 \times 4}$. Then we apply Theorem~10 of~\cite{2} to $G^{\dagger}$, which gives us
\[
\Qcircuit @C=1em @R=.2em {
&\lstick{\ket{0}}&			& \multigate{2}{G^{\dagger}}	&\qw& &&	&\lstick{\ket{0}}& \gate{}\qwx[2]       & \gate{R_{y}} 	&  \gate{}  \qwx[2]	&   \qw	\\
&& 			&	&	&=			&&&       	     & &	 			   & &   & 	  \\
&\lstick{2}& {\backslash}	& \ghost{G^{\dagger}}&\qw &&&& {\backslash}	&\gate{A}   &\gate{} \qwx[-2] \qw&		\gate{B} 	&   \qw	 \\
}
\]
where each of the symbols $A$ and $B$ is a placeholder for two two-qubit unitaries denoted by $\{A_0, A_1\}$ or $\{B_0, B_1\}$ respectively. Since we can assume that the first qubit is initially in the state $\ket{0}$, we always implement $A_0$ on the last two qubits at the start of the circuit (on the right hand side) above. Therefore we can simplify the above circuit.

\[
\Qcircuit @C=1em @R=.2em {
&\lstick{\ket{0}}&			& \multigate{2}{G^{\dagger}}	&\qw& &	&&\lstick{\ket{0}}& \qw      & \gate{R_{y}} 	&  \gate{}  \qwx[2]	&   \qw	\\
&& 			&	&	&=			&&&&       	      &	 			   & &   & 	  \\
&\lstick{2}& {\backslash}	& \ghost{G^{\dagger}}&\qw &&&& {\backslash}	&\gate{A_0}   &\gate{} \qwx[-2] \qw&		\gate{B} 	&   \qw	 \\
}
\]

We apply Theorem~8 of~\cite{2} to the uniformly controlled $R_y$ gate. Together with Appendix A of~\cite{2} this leads to the following circuit for the isometry $V$
\[
\Qcircuit @C=0.7em @R=.2em {
\lstick{\ket{0}}	& \qw   &\gate{R_{y}}  \qw      &\gate{R_{y}(-\frac{\pi}{2})}  \qw	        &\targ	 &\gate{R_{y}(\frac{\pi}{2})}   \qw	 &\gate{R_{y}}  \qw	&\gate{} \qwx[1] \qw &   \qw	\\
&\multigate{1}{A_0} &\gate{} \qwx[-1] \qw& \qw&     \qw	&  \qw &\gate{} \qwx[-1] \qw & \multigate{1}{\tilde{B}}	 &  \qw  \\
 	&\ghost{A_0}&\qw	 	&   \qw		 &\ctrl{-2}		&  \qw 	&	  \qw &\ghost{\tilde{B}} &	  \qw	 \\
}
\]
where we can absorb the $R_{y}(\frac{\pi}{2})$ and $R_{y}(-\frac{\pi}{2})$ gates into the neighbouring uniformly controlled $R_y$ gates. We apply Theorem~12 of~\cite{2} to the last uniformly controlled gate in the circuit above, which gives us two two-qubit unitaries $U$ and $W$ and the following circuit for the isometry $V$.

\[
\Qcircuit @C=0.7em @R=.2em {
\lstick{\ket{0}}	& \qw   &\gate{R_{y}}  \qw   	        &\targ	 	 &\gate{R_{y}}  \qw	& \qw &\gate{R_{z}}		&   \qw	&   \qw	\\
&\multigate{1}{A_0} &\gate{} \qwx[-1] \qw&      \qw	&\gate{} \qwx[-1] \qw & \multigate{1}{U}&\gate{} \qwx[-1] \qw	& \multigate{1}{W}	 &  \qw  \\
&\ghost{A_0}&\qw	 			 &\ctrl{-2}			&	  \qw &\ghost{U} 	&\gate{} \qwx[-1] \qw	&\ghost{W} 	&	  \qw	 \\
}
\]
\\

Decomposing the uniformly controlled rotations as described in~\cite{2} and using the techniques described in Appendix B of~\cite{2} leads to the following circuit for $V$

\[
\Qcircuit @C=0.7em @R=.2em {
\lstick{\ket{0}}		&  \qw &  \qw		&  \qw  	&\targ 	&\targ&\targ&  \qw &  \qw	&\targ 	&\targ&\targ&\targ&  \qw&  \qw&  \qw		 \\
				&\targ	 &\ctrl{1}	&\targ	&\ctrl{-1}	&  \qw&\ctrl{-1}&\targ&\targ&\ctrl{-1}	&  \qw&\ctrl{-1}&\qw&\targ&\targ&  \qw	 \\
				 &\ctrl{-1}	&\targ	&\ctrl{-1}	&  \qw	&\ctrl{-2}&  \qw&\ctrl{-1}&\ctrl{-1}&  \qw	&\ctrl{-2}&  \qw&\ctrl{-2}&\ctrl{-1}&\ctrl{-1}&  \qw	 \\
}
\]
where the single-qubit gates are not depicted for simplicity.

\subsection{Isometries leading to four qubit states}

In this section we explain the steps needed to decompose isometries
from $m$ to $4$ qubits for $m=1$ and $m=2$. Note that for $m=0$ one
can use the decomposition scheme for state preparation described in Appendix~\ref{opt_SP_app}, and for
$m=4$  the decomposition scheme of~\cite{2}. The case $m=3$ can be done with the CSD-approach requiring $73$ \cnots{} (cf.\ equation~(\ref{eq:CSD_count_2}), and Appendix~\ref{app:two_to_three} for an example using the CSD-approach).  

\subsubsection{Isometries from one to four qubits}  \label{app_sec_iso_one_to_four}

As in Section~\ref{sec_dec_schemes}, we represent the $16 \times 2$ matrix corresponding to $V$ by an $16 \times 16$ unitary matrix $G^{\dagger}$ by writing $V=G^{\dagger}I_{16 \times 2}$. The unitary $G_0^{\dagger}$ (defined in Section~\ref{sec:isounif}) corresponds to state preparation on four qubits ($G_0^{\dagger}\ket{0}^{\otimes 4}=V \ket{0}=:\ket{\psi_0^0}$) and can therefore be implemented with the techniques described in Appendix~\ref{opt_SP_app} with $8$ \cnots{}. We construct the unitary $G_1$ in a similar fashion as in the case of a one to three isometry (cf.\ Appendix~\ref{app_sec_iso_one_to_three}) using the  column-by-column approach described in Section~\ref{sec:isounif}. This leads to a circuit for the unitary $G_1$ given in Fig.~\ref{fig:coloumn2_optimized_MCG}. We implement all MCG of the circuit for $G_1$ with UCG up to diagonal gates by the techniques described in~\cite{10} and correct for this at the end of the circuit with a diagonal gate acting on the least significant qubit (cf.\ Section~\ref{app_sec_iso_one_to_three}). Therefore we use $22$ \cnots{} to implement an isometry from $1$ to $4$ qubits.

\subsubsection{Isometries from two to four qubits}

As in Section~\ref{sec_dec_schemes}, we represent the $16 \times 4$ matrix corresponding to $V$ by a $16 \times 16$ unitary matrix $G^{\dagger}$ by writing $V=G^{\dagger}I_{16 \times 4}$. We can construct the unitaries $G_0$ and $G_1$ as described in Appendix~\ref{app_sec_iso_one_to_four}. Similary we find the following circuit for the unitary $G_2$ 

\[
\Qcircuit @C=1em @R=.05em {
&&&&& \gate{} \qwx[1] \qw 										& \gate{} \qwx[1] \qw 		 		& \qw			&\ctrl{1} \qw			&\gate{U_{2,3}}	 	 &\qw			\\
&&&&& \gate{} \qwx[1] \qw 										& \gate{} \qwx[1] \qw  			&\gate{U_{2,2}}		&\gate{U^{\textnormal{u}}_{2,2}} \qw	&		\qw				 	 &\qw			\\
&&&&& \gate{} \qwx[1] \qw										&\gate{U^{\textnormal{u}}_{2,1}}	& \ctrl{-1}\qw 		&		\qw&	\ctrl{-2}	\qw&		\qw 		\\
&&&&& \gate{U^{\textnormal{u}}_{2,0}} \qw 				 			&\qw							& \qw			&		\qw& \qw	&\qw 						 \\
}
\]

and the following circuit for the unitary $G_3$.

\[
\Qcircuit @C=1em @R=.05em {
&&&&& \gate{} \qwx[1] \qw 										& \gate{} \qwx[1] \qw 		 		& \qw			&\ctrl{1} \qw			&\gate{U_{3,3}}	 	 &\qw			\\
&&&&& \gate{} \qwx[1] \qw 										& \gate{} \qwx[1] \qw  			&\gate{U_{3,2}}		&\gate{U^{\textnormal{u}}_{3,2}} \qw	&		\qw				 	 &\qw			\\
&&&&& \gate{} \qwx[1] \qw										&\gate{U^{\textnormal{u}}_{3,1}}	& \ctrl{-1}\qw 		&		\qw&	\ctrl{-2}	\qw&		\qw 		\\
&&&&& \gate{U^{\textnormal{u}}_{3,0}} \qw 				 			&\qw							& \ctrl{-1}\qw			&		\qw&\ctrl{-1} \qw	&\qw 						 \\
}
\]

Note that two controls are required for the MCG for the unitary $G_3$, such that $G_3$ acts trivially on the states $\ket{0000}$, $\ket{0001}$ and $\ket{0010}$. 

We implement all MCG with UCG up to diagonal gates by the techniques described in~\cite{10} and correct for this at the end of the circuit with a diagonal gate acting on the two least significant qubits. Since a diagonal gate on two qubits requires $2$ \cnot{} gates~\cite{diag}, we conclude that we need $54$ \cnots{} to implement a two to four isometry.

\end{document}